\providecommand{\tabularnewline}{\\}
\crefname{section}{Section}{Sections}
\crefname{figure}{Figure}{Figures}
\crefname{table}{Table}{Tables}
\crefname{example}{Example}{Examples}
\crefname{footnote}{}{}
\makeatletter\g@addto@macro{\UrlBreaks}{\UrlOrds}\makeatother	
\let\oldsqrt\sqrt
\renewcommand{\sqrt}[2][\ \,\,]{{\!\!\oldsqrt[\raisebox{.1em}{\scalebox{.7}{$#1$}}]{#2}\,}}
  \let\originalleft\left
  \let\originalright\right
  \renewcommand{\left}{\mathopen{}\mathclose\bgroup\originalleft}
  \renewcommand{\right}{\aftergroup\egroup\originalright}
\newcommand{\setmulength}[2]{#1=#2\relax}	
\setmulength{\thinmuskip}{2mu plus 1mu minus 1mu}
\setmulength{\medmuskip}{2mu plus 1mu minus 1mu}
\setmulength{\thickmuskip}{4mu plus 1mu minus 2mu}
\DeclareSymbolFont{unspacedletters}{OT1}{ntxtlf}{m}{it}
\DeclareMathSymbol{A}{\mathalpha}{unspacedletters}{`A}
\DeclareMathSymbol{B}{\mathalpha}{unspacedletters}{`B}
\DeclareMathSymbol{C}{\mathalpha}{unspacedletters}{`C}
\DeclareMathSymbol{D}{\mathalpha}{unspacedletters}{`D}
\DeclareMathSymbol{E}{\mathalpha}{unspacedletters}{`E}
\DeclareMathSymbol{F}{\mathalpha}{unspacedletters}{`F}
\DeclareMathSymbol{G}{\mathalpha}{unspacedletters}{`G}
\DeclareMathSymbol{H}{\mathalpha}{unspacedletters}{`H}
\DeclareMathSymbol{I}{\mathalpha}{unspacedletters}{`I}
\DeclareMathSymbol{J}{\mathalpha}{unspacedletters}{`J}
\DeclareMathSymbol{K}{\mathalpha}{unspacedletters}{`K}
\DeclareMathSymbol{L}{\mathalpha}{unspacedletters}{`L}
\DeclareMathSymbol{M}{\mathalpha}{unspacedletters}{`M}
\DeclareMathSymbol{N}{\mathalpha}{unspacedletters}{`N}
\DeclareMathSymbol{O}{\mathalpha}{unspacedletters}{`O}
\DeclareMathSymbol{P}{\mathalpha}{unspacedletters}{`P}
\DeclareMathSymbol{Q}{\mathalpha}{unspacedletters}{`Q}
\DeclareMathSymbol{R}{\mathalpha}{unspacedletters}{`R}
\DeclareMathSymbol{S}{\mathalpha}{unspacedletters}{`S}
\DeclareMathSymbol{T}{\mathalpha}{unspacedletters}{`T}
\DeclareMathSymbol{U}{\mathalpha}{unspacedletters}{`U}
\DeclareMathSymbol{V}{\mathalpha}{unspacedletters}{`V}
\DeclareMathSymbol{W}{\mathalpha}{unspacedletters}{`W}
\DeclareMathSymbol{X}{\mathalpha}{unspacedletters}{`X}
\DeclareMathSymbol{Y}{\mathalpha}{unspacedletters}{`Y}
\DeclareMathSymbol{Z}{\mathalpha}{unspacedletters}{`Z}
\DeclareMathSymbol{a}{\mathalpha}{unspacedletters}{`a}
\DeclareMathSymbol{b}{\mathalpha}{unspacedletters}{`b}
\DeclareMathSymbol{c}{\mathalpha}{unspacedletters}{`c}
\DeclareMathSymbol{d}{\mathalpha}{unspacedletters}{`d}
\DeclareMathSymbol{e}{\mathalpha}{unspacedletters}{`e}
\DeclareMathSymbol{f}{\mathalpha}{unspacedletters}{`f}
\DeclareMathSymbol{g}{\mathalpha}{unspacedletters}{`g}
\DeclareMathSymbol{h}{\mathalpha}{unspacedletters}{`h}
\DeclareMathSymbol{i}{\mathalpha}{unspacedletters}{`i}
\DeclareMathSymbol{j}{\mathalpha}{unspacedletters}{`j}
\DeclareMathSymbol{k}{\mathalpha}{unspacedletters}{`k}
\DeclareMathSymbol{l}{\mathalpha}{unspacedletters}{`l}
\DeclareMathSymbol{m}{\mathalpha}{unspacedletters}{`m}
\DeclareMathSymbol{n}{\mathalpha}{unspacedletters}{`n}
\DeclareMathSymbol{o}{\mathalpha}{unspacedletters}{`o}
\DeclareMathSymbol{p}{\mathalpha}{unspacedletters}{`p}
\DeclareMathSymbol{q}{\mathalpha}{unspacedletters}{`q}
\DeclareMathSymbol{r}{\mathalpha}{unspacedletters}{`r}
\DeclareMathSymbol{s}{\mathalpha}{unspacedletters}{`s}
\DeclareMathSymbol{t}{\mathalpha}{unspacedletters}{`t}
\DeclareMathSymbol{u}{\mathalpha}{unspacedletters}{`u}
\DeclareMathSymbol{v}{\mathalpha}{unspacedletters}{`v}
\DeclareMathSymbol{w}{\mathalpha}{unspacedletters}{`w}
\DeclareMathSymbol{x}{\mathalpha}{unspacedletters}{`x}
\DeclareMathSymbol{y}{\mathalpha}{unspacedletters}{`y}
\DeclareMathSymbol{z}{\mathalpha}{unspacedletters}{`z}
\pretocmd{\section}{\addvspace{0em plus 1.5em}\penalty-2000}{}{}
\pretocmd{\subsection}{\addvspace{0em plus 1em}\penalty-1500}{}{}
\pretocmd{\subsubsection}{\addvspace{0em plus 1em}\penalty-1000}{}{}
  \renewcommand\p@enumii{\theenumi}
  \renewcommand\p@enumiii{\theenumi\theenumii}
  \renewcommand\p@enumiv{\theenumi\theenumii\theenumiii}
  \newlength{\listspace}\setlength{\listspace}{.1em plus .1em}	
  \setlist{topsep=\listspace,itemsep=\listspace,parsep=0em,partopsep=0em}
  \setlist[enumerate]{leftmargin=2em}
  \setlist[itemize]{leftmargin=1.5em}
\makeatletter\newcommand{\justified}{\rightskip\z@skip\leftskip\z@skip}\makeatother
\date{}
\newlength{\linespace}
\newcommand\makelinespace{\setlength{\linespace}{\baselineskip-1em}\vspace{\linespace}}
\newlength{\parspace}
\newcommand\makeparspace{\setlength{\parspace}{\parskip+\baselineskip-1em}\vspace{\parspace}}
\newcommand{\forcefontspace}{\par\vspace{-\baselineskip}\vphantom{ABCDEgjpqy}}	
\let\incgraphics\includegraphics
\newsavebox{\imagebox}
\newlength{\imagerule}
\newcommand{\imagescale}{1}
\renewcommand{\includegraphics}[2][]
{%
\def\image{\scalebox{\imagescale}{\incgraphics[#1]{#2}}}%
\savebox{\imagebox}{\image}%
\setlength{\imagerule}{\ht\imagebox+\baselineskip-.7em}%
\ifvmode{\forcefontspace}\fi\rule[0em]{0em}{\imagerule}\image%
}
\let\oldfigure\figure
\let\oldtable\table
\def\beginfloat{\centering\vspace{.1em}\makeparspace}
\def\figure@i[#1]{\oldfigure[#1]\beginfloat}
\def\figure@ii{\oldfigure\beginfloat}
\def\figure{\@ifnextchar[\figure@i \figure@ii}
\def\table@i[#1]{\oldtable[#1]\beginfloat}
\def\table@ii{\oldtable\beginfloat}
\def\table{\@ifnextchar[\table@i \table@ii}
\newcommand\beforefloat{\forcefontspace\makeparspace}
\newcommand\afterfloat{\vspace{.02em}}
\newlength{\parskipcopy}\setlength{\parskipcopy}{\parskip}
\def\@minipagerestore{\setlength{\intextsep}{0em}\setlength{\parskip}{\parskipcopy}\vphantom{ABCDEgjpqy}\vspace{-\baselineskip}\vspace{-\parskip}}
\let\oldfbox\fbox
\renewcommand{\fbox}[1]{\vspace{.05em}\setlength{\fboxsep}{0em}\oldfbox{#1}\vspace{.2em}\ifvmode{\makelinespace\ensurelinespace}\fi}
\renewcommand{\boxed}[1]{\oldfbox{\m@th$#1$}}	
\newtheoremstyle{lwq}
  {0em}	
  {0em}	
  {\normalfont}	
  {0em}	
  {\bfseries}	
  {.}	
  {.3em plus .2em}	
  {\thmname{#1}\thmnumber{ #2}\thmnote{ (#3)}}	
\newtheoremstyle{lwqprf}
  {0em}	
  {0em}	
  {\normalfont}	
  {0em}	
  {\itshape}	
  {.}	
  {.3em plus .2em}	
  {\thmname{#1}\thmnote{ (#3)}}	
\newlength{\thmspace}\setlength{\thmspace}{.1em plus .1em minus .1em}
\newlength{\prfspace}\setlength{\prfspace}{.1em plus .1em minus .1em}
\newcommand\thmbegin{\par\addvspace{\thmspace}\vspace{\parskip}\addpenalty{-200}}
\newcommand\prfbegin{\par\addvspace{\prfspace}\vspace{\parskip}}
\newcommand\thmend{\par\addvspace{\thmspace}\addpenalty{-100}}	
\newcommand\prfend{\par\addvspace{\prfspace}\addpenalty{-100}}	
\newcommand{\theoremname}{Theorem}
\theoremstyle{lwq}\newtheorem{thm}{\protect\theoremname}
\newcommand{\definitionname}{Definition}
\theoremstyle{lwq}\newtheorem{defn}[thm]{\protect\definitionname}
\newenvironment{centerbox}
{\par\begin{centering}}
{\par\end{centering}}
\newcommand{\lemmaname}{Lemma}
\theoremstyle{lwq}\newtheorem{lem}[thm]{\protect\lemmaname}
\renewcommand{\proofname}{Proof}
\theoremstyle{lwqprf}\newtheorem{prf}{\protect\proofname}
\renewenvironment{proof}[1][]{\prfbegin\begin{prf}[#1]\pushQED{\qed}}{\popQED\end{prf}\prfend}
\renewcommand{\qed}{\hfill{}\hspace{2em minus 1em}\qedsymbol}
\newcommand{\setblockspace}
{
  \setlength{\topsep}{0em}
  \setlength{\itemsep}{\listspace}
  \setlength{\parsep}{0em}
  \setlength{\partopsep}{\listspace}
}
\newcommand{\setblockflush}{}
\newlength{\blockmargin}\setlength{\blockmargin}{1.5em}
\newenvironment{block}
  {\list{}{\leftmargin\blockmargin\setblockspace\interlinepenalty3000}\setblockflush}
  {\endlist}
\newcommand{\remarkname}{Remark}
\theoremstyle{lwq}
\theoremstyle{lwq}\newtheorem*{rem*}{\protect\remarkname}
\newenvironment{roundedboxinfloat}
{\vspace{-.3em}\begin{mdframed}[style=mdroundedboxinfloat]\vspace{.1em}\forcefontspace}
{\forcefontspace\end{mdframed}\unskip\vspace{-.24em}}
\begin{document}
\renewcommand{\qed}{\hfill{}\hspace{2em minus 1em}\scalebox{1.4}{$\diamond$}}

\newcommand\br{\addpenalty{-1000}}

\newcommand\step{\crefname{enumi}{step}{}}

\newcommand\point{\crefname{enumi}{point}{}}

\global\long\def\nn{\mathbb{N}}
\global\long\def\zz{\mathbb{Z}}
\global\long\def\qq{\mathbb{Q}}
\global\long\def\rr{\mathbb{R}}

\global\long\def\wi{\subseteq}
\global\long\def\co{\supseteq}
\global\long\def\nwi{\nsubseteq}
\global\long\def\nco{\nsupseteq}
\global\long\def\none{\varnothing}
\global\long\def\less{\smallsetminus}

\global\long\def\ii{\mathbf{1}}
\global\long\def\pp{\mathbf{P}}
\global\long\def\ee{\mathbf{E}}
\global\long\def\vv{\mathbf{Var}}
\global\long\def\cv{\mathbf{Cov}}

\global\long\def\floor#1{\left\lfloor #1\right\rfloor }
\global\long\def\ceil#1{\left\lceil #1\right\rceil }

\global\long\def\cond#1#2#3{\left(\vphantom{#1#2#3}\right.\,#1\mathrel{\,?\,}\allowbreak#2\mathrel{\,:\,}\allowbreak#3\,\left.\vphantom{#1#2#3}\right)}

\global\long\def\f#1{\operatorname{#1}}

\global\long\def\a#1{.\text{#1}}

\global\long\def\e{\upvarepsilon}

\global\long\def\tbox#1{\boxed{\text{#1}}}

\global\long\def\stbox#1{\scalebox{0.8}{\boxed{\text{#1}}}}

\global\long\def\smbox#1{\scalebox{0.8}{\boxed{#1}}}

\global\long\def\grey#1{\textcolor{gray}{#1}}

\global\long\def\ubrace#1#2{\underbrace{\underset{}{{\strut}#1{\strut}}}_{#2}}

\global\long\def\vsp#1{\vspace{#1em}}

\global\long\def\tilt#1#2{\rotatebox{#2}{\ensuremath{#1}}}

\global\long\def\fs{\mathbb{FS}}

\noindent \begin{center}
\textbf{\huge{}Parallel Finger Search Structures}
\par\end{center}{\huge \par}

\noindent \begin{center}
\vsp{-.5}%
\begin{tabular}{>{\centering}p{0.3\textwidth}>{\centering}p{0.3\textwidth}}
\textbf{\large{}Seth Gilbert}{\large \par}

National University of Singapore & \textbf{\large{}Wei Quan Lim}{\large \par}

National University of Singapore\tabularnewline
\end{tabular}
\par\end{center}

\section*{Keywords}

Parallel data structures, multithreading, dictionaries, comparison-based
search, distribution-sensitive algorithms

\section*{Abstract}

In this paper~\footnote{This is the full version of a paper published in the 33rd International
Symposium on Distributed Computing (DISC 2019). It is posted here
 for your personal or classroom use. Not for redistribution.  \copyright~2019
Copyright is held by the owner/author(s). } we present two versions of a \textbf{parallel finger structure} $\fs$
on $p$ processors that supports searches, insertions and deletions,
and has a finger at each end. This is to our knowledge the first implementation
of a parallel search structure that is \textbf{\textit{work-optimal}}
with respect to the finger bound and yet has very good parallelism
(within a factor of $O\left((\log p)^{2}\right)$ of optimal). We
utilize an \textbf{extended implicit batching} framework that transparently
facilitates the use of $\fs$ by any parallel program $P$ that is
modelled by a dynamically generated DAG $D$ where each node is either
a unit-time instruction or a call to $\fs$.

The total work done by either version of $\fs$ is bounded by the
\textbf{finger bound} $F_{L}$ (for some linearization $L$ of $D$),
i.e.~each operation on an item with distance $r$ from a finger takes
$O(\log r+1)$ amortized work. Running $P$ using the simpler version
takes $O\left(\frac{T_{1}+F_{L}}{p}+T_{\infty}+d\cdot\left((\log p)^{2}+\log n\right)\right)$
time on a greedy scheduler, where $T_{1},T_{\infty}$ are the size
and span of $D$ respectively, and $n$ is the maximum number of items
in $\fs$, and $d$ is the maximum number of calls to $\fs$ along
any path in $D$. Using the faster version, this is reduced to $O\left(\frac{T_{1}+F_{L}}{p}+T_{\infty}+d\cdot(\log p)^{2}+s_{L}\right)$
time, where $s_{L}$ is the weighted span of $D$ where each call
to $\fs$ is weighted by its cost according to $F_{L}$. We also sketch
how to extend $\fs$ to support a fixed number of movable fingers.

The data structures in our paper fit into the \textbf{\textit{dynamic
multithreading}} paradigm, and their performance bounds are directly
\textbf{\textit{composable}} with other data structures given in the
same paradigm. Also, the results can be translated to practical implementations
using work-stealing schedulers.

\section*{Acknowledgements}

We would like to express our gratitude to our families and friends
for their wholehearted support, to the kind reviewers who provided
helpful feedback, and to all others who have given us valuable comments
and advice. This research was supported in part by Singapore MOE AcRF
Tier 1 grant T1 251RES1719.

\section{Introduction}

There has been much research on designing parallel programs and parallel
data structures. The \textbf{dynamic multithreading paradigm} (see~\cite{CormenLeRi09}
chap.~27) is one common parallel programming model, in which algorithmic
parallelism is expressed through parallel programming primitives such
as fork/join (also spawn/sync), parallel loops and synchronized methods,
but the program cannot stipulate any mapping from subcomputations
to processors. This is the case with many parallel languages and libraries,
such as Cilk dialects~\cite{Cilk,IntelCilkPlus13}, Intel TBB~\cite{TBB},
Microsoft Task Parallel Library~\cite{TPL} and subsets of OpenMP~\cite{OpenMP}.

Recently, Agrawal et al.~\cite{agrawal2014batcher} introduced the
exciting \textbf{\textit{modular design}} approach of \textbf{\textit{\emph{implicit
batching}}}, in which the programmer writes a multithreaded parallel
program that uses a \textbf{\textit{black box}} data structure, treating
calls to the data structure as basic operations, and also provides
a data structure that supports batched operations. Given these, the
runtime system automatically combines these two components together,
buffering data structure operations generated by the program, and
executing them in batches on the data structure.

This idea was extended in \cite{OPWM} to data structures that do
not process only one batch at a time (to improve parallelism). In
this \textbf{extended implicit batching framework}, the runtime system
not only holds the data structure operations in a \textbf{parallel
buffer}, to form the next \textbf{input batch}, but also \textbf{notifies}
the data structure on receiving the first operation in each batch.
Independently, the data structure can at any point \textbf{flush}
the parallel buffer to get the next batch.

This framework nicely supports \textbf{\textit{pipelined}} batched
data structures, since the data structure can decide when it is ready
to get the next input batch from the parallel buffer, which may be
even before it has finished processing the previous batch. Furthermore,
this framework makes it easy for us to build \textbf{\textit{composable}}
parallel algorithms and data structures with composable performance
bounds. This is demonstrated by both the parallel working-set map
in \cite{OPWM} and the parallel finger structure in this paper.

\subsection*{Finger Structures}

The \textbf{map} (or \textbf{dictionary}) data structure, which supports
inserts, deletes and searches/updates, collectively referred to as
\textbf{accesses}, comes in many different kinds. A common implementation
of a map is a balanced binary search tree such as an AVL tree or a
red-black tree, which (in the comparison model) takes $O(\log n)$
worst-case cost per access for a tree with $n$ items. There are also
maps such as splay trees~\cite{sleator1985splay} that have amortized
rather than worst-case performance bounds.

A \textbf{finger structure} is a special kind of map that comes with
a \textbf{fixed finger} at each end and a (fixed) number of \textbf{movable
fingers}, each of which has a key (possibly $-\infty$ or $\infty$
or between adjacent items in the map) that determines its position
in the map, such that accessing items nearer the fingers is cheaper.
For instance, the finger tree~\cite{kosaraju1981finger} was designed
to have the finger property in the worst case; it takes $O(\log r+1)$
steps per operation with finger distance $r$ (\ref{def:finger-dist}),
so its total cost satisfies the finger bound (\ref{def:finger-bound}).

\begin{defn}[Finger Distance]
\label{def:finger-dist} Define the \textbf{finger distance of accessing
an item} $x$ on a finger structure $M$ to be the number of items
from $x$ to the nearest finger in $M$ (including $x$), and the
\textbf{finger distance of moving a finger} to be the distance moved.
\end{defn}

\begin{defn}[Finger Bound]
\label{def:finger-bound} Given any sequence $L$ of $N$ operations
on a finger structure $M$, let $F_{L}$ denote the \textbf{finger
bound} for $L$, defined by $F_{L}=\sum_{i=1}^{N}\left(\log r_{i}+1\right)$
where $r_{i}$ is the finger distance of the $i$-th operation in
$L$ when $L$ is performed on $M$.
\end{defn}

\subsection*{Main Results \phantomsection}

\label{sec:main}

We present in this paper, to the best of our knowledge, the first
parallel finger structure. In particular, we design two parallel maps
that are \textbf{\textit{work-optimal}} with respect to the \nameref{def:finger-bound}
$F_{L}$ (i.e.~it takes $O(F_{L})$ work) for some linearization
$L$ of the operations (that is consistent with the results), while
having very good parallelism. (We assume that each key comparison
takes $O(1)$ steps.)

These parallel finger structures can be used by any parallel program
$P$, whose actual execution is captured by a \textbf{program DAG}
$D$, where each node is an instruction that finishes in $O(1)$ time
or a call to the finger structure $M$, called an \textbf{$M$-call},
that blocks until the result is returned, and each edge represents
a dependency due to the parallel programming primitives.

The first design, called $\fs_{1}$, is a simpler data structure that
processes operations one batch at a time.
\begin{thm}[$\fs_{1}$ Performance]
\label{thm:FS1-perf} If $P$ uses $\fs_{1}$ (as $M$), then its
running time on $p$ processes using any greedy scheduler (i.e.~at
each step, as many tasks are executed as are available, up to $p$)
is 
\[
O\left(\frac{T_{1}+F_{L}}{p}+T_{\infty}+d\cdot\left((\log p)^{2}+\log n\right)\right)
\]
for some linearization $L$ of $M$-calls in $D$, where $T_{1}$
is the number of nodes in $D$, and $T_{\infty}$ is the number of
nodes on the longest path in $D$, and $d$ is the maximum number
of $M$-calls on any path in $D$, and $n$ is the maximum size of
$M$.~\footnote{To cater to instructions that may not finish in $O(1)$ time (e.g.
due to memory contention), it suffices to define $T_{1}$ and $T_{\infty}$
to be the (weighted) work and span (\ref{def:work-span}) respectively
of the program DAG where each $M$-call is assumed to take $O(1)$
time.}
\end{thm}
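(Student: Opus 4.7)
The plan is to apply a greedy-scheduler lemma (Brent-style) to the runtime system's execution DAG, which augments the program DAG $D$ with the nodes and edges corresponding to $\fs_1$'s internal computation on successive batches together with the parallel buffering/notification machinery of the extended implicit-batching framework. The scheduler lemma bounds the wall-clock time by $W/p + S$, where $W$ is the total work and $S$ is the (weighted) span of this augmented DAG. We therefore need two separate estimates: (i) a work bound on $\fs_1$'s processing of all received batches, and (ii) a span bound that converts each $M$-call on a root-to-leaf path of $D$ into a bounded additive delay.

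For the work bound I would fix the linearization $L$ to be the order in which the runtime system flushes operations from the parallel buffer, broken within each batch by whatever canonical order $\fs_1$ uses internally (for example, sorted order by key, with ties broken consistently with the returned results). I would then give an amortized analysis showing that $\fs_1$ expends $O(\log r_i + 1)$ work on the $i$-th operation of $L$, where $r_i$ is its finger distance at the point it is linearized. Summed across all batches this yields $O(F_L)$, and combined with the $T_1$ program instructions outside $\fs_1$ gives total work $O(T_1 + F_L)$, producing the $(T_1 + F_L)/p$ term.

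For the span, the $T_\infty$ summand comes directly from the longest path in $D$ under unit weighting of $M$-calls. The per-$M$-call overhead is the latency from the moment an operation enters the parallel buffer until $\fs_1$ returns its result. Because $\fs_1$ processes one batch at a time and any batch is collected from a parallel buffer of size $O(p)$, the span of processing a single batch is $O((\log p)^2 + \log n)$: an $O(\log n)$ component for finger navigation to the deepest item in an $n$-element structure, and an $O((\log p)^2)$ component from the parallel merging/sorting and batch-combining steps applied to a batch of size $O(p)$. A path in $D$ carrying at most $d$ $M$-calls therefore accumulates additive delay $O(d \cdot ((\log p)^2 + \log n))$, so the weighted span is $O(T_\infty + d \cdot ((\log p)^2 + \log n))$, giving the stated bound after dividing the work by $p$.

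The main obstacle is the work bound: the classical sequential finger-bound potential argument must be adapted to the batched setting, where operations inside a batch are performed in an order dictated by $\fs_1$'s internal parallel algorithm rather than the order in which the program issued them. I would design a potential that charges amortized credit against the specific linearization $L$ fixed above, and then verify that $L$ is indeed a valid linearization consistent with the observed results (which only constrains the relative order of operations touching overlapping keys). Proving that such an $L$ exists and simultaneously satisfies the finger bound up to constants is the technical heart of the theorem; by comparison, the greedy-scheduler reduction and the per-batch span estimate are routine applications of standard parallel-algorithm machinery.
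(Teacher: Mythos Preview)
Your high-level decomposition (greedy-scheduler bound $W/p+S$, with $W=O(T_1+F_L)$ from an amortized work analysis and $S$ from a per-batch span estimate) matches the paper's approach. But there is a genuine gap in your span argument.

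You assert that ``any batch is collected from a parallel buffer of size $O(p)$'' and hence that processing a single batch has span $O((\log p)^2+\log n)$. This is false: the parallel buffer has $p$ sub-buffers (one per processor), but each sub-buffer can accumulate arbitrarily many operations while $\fs_1$ is busy processing the previous batch. A batch therefore has some size $b$ that is \emph{not} bounded by $O(p)$, and the actual per-batch span is $O((\log b)^2+\log n)$. If you simply multiply this by $d$ you get $d\cdot(\log N)^2$ in the worst case, which does not give the claimed bound.

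The paper closes this gap by the case split $b<p^2$ versus $b\ge p^2$: in the latter case $(\log b)^2\in O(b/p)$, so in general $s(b)\in O\left(\frac{b}{p}+(\log p)^2+\log n\right)$. Each $M$-call on a path waits for the preceding batch (size $b'$) and then its own batch (size $b$), contributing $O(s(b)+s(b'))$; summing the $b/p$ and $b'/p$ terms over all calls on the path gives at most $O(N/p)$, and since $N\le T_1$ this extra $N/p$ is absorbed into the $(T_1+F_L)/p$ work term. Without this step your span bound does not go through. The rest of your plan---including identifying the work-bound linearization as the delicate part---is sound and aligned with the paper.
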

Notice that if $M$ is an ideal concurrent finger structure (i.e.~one
that takes $O\left(F_{L}\right)$ work), then running $P$ using $M$
on $p$ processors according to the linearization $L$ takes $\Omega(T_{opt})$
worst-case time where $T_{opt}=\frac{T_{1}+F_{L}}{p}+T_{\infty}$.
Thus $\fs_{1}$ gives an essentially optimal time bound except for
the `span term' $d\cdot\left((\log p)^{2}+\log n\right)$, which
adds $O\left((\log p)^{2}+\log n\right)$ time per $\fs_{1}$-call
along some path in $D$.

The second design, called $\fs_{2}$, uses a complex internal pipeline
to reduce the `span term'.
\begin{thm}[$\fs_{2}$ Performance]
\label{thm:FS2-perf} If $P$ uses $\fs_{2}$, then its running time
on $p$ processes using any greedy scheduler is 
\[
O\left(\frac{T_{1}+F_{L}}{p}+T_{\infty}+d\cdot(\log p)^{2}+s_{L}\right)
\]
for some linearization $L$ of $M$-calls in $D$, where $d$ is the
maximum number of $\fs_{2}$-calls on any path in $D$, and $s_{L}$
is the weighted span of $D$ where each $\fs_{2}$-call is weighted
by its cost according to $F_{L}$, except that each finger-move operation
is weighted by $\log n$. Specifically, each access $\fs_{2}$-call
that is an access with finger distance $r$ according to $L$ is given
the weight $\log r+1$, and each $\fs_{2}$-call that is a finger-move
is given the weight $\log n$, and $s_{L}$ is the maximum weight
of any path in $D$. Thus, ignoring finger-move operations, $\fs_{2}$
gives an essentially optimal time bound up to an extra $O\left((\log p)^{2}\right)$
time per $\fs_{2}$-call along some path in $D$.
\end{thm}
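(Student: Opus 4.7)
The plan is to derive \cref{thm:FS2-perf} as a consequence of (i) the extended implicit batching framework from~\cite{OPWM}, and (ii) a batched/pipelined cost model for $\fs_2$ itself in which each individual operation is charged its own finger-bound weight rather than the pessimistic $\log n$. So first I would restate what the framework gives: if the data structure processes each input batch of size $b$ with work $W(b)$ and finishes the batch quickly enough relative to its arrival, then running $P$ with it on $p$ processes under a greedy scheduler costs $O\left(\frac{T_1 + W_{\text{tot}}}{p} + T_\infty + (\text{framework overhead}) + (\text{structure span})\right)$, where the framework overhead accounts for buffering, notification, and flushing the parallel buffer ($O((\log p)^2)$ per batch boundary on a path in $D$, yielding the $d\cdot(\log p)^2$ term), and the structure span accounts for how long an operation waits inside the data structure after being flushed.

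Next I would pick the linearization $L$. The parallel buffer delivers batches $B_1, B_2, \ldots$ to $\fs_2$; inside $\fs_2$ each batch is processed in some internal order consistent with a valid serialization of its operations. Concatenating these intra-batch orders yields $L$. With this $L$ fixed, I would prove the work bound: $\fs_2$ performs $O(F_L)$ total work, by showing that its batched algorithm does the same asymptotic work as a sequential finger structure executing $L$ operation by operation (each access of finger distance $r_i$ costs $O(\log r_i + 1)$ amortized, each finger move costs $O(\log n)$). Combined with $T_1$ non-$M$ work from $P$, this yields the $\frac{T_1 + F_L}{p}$ term via Brent-style greedy scheduling.

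The main obstacle, and the whole reason $\fs_2$ differs from $\fs_1$, is the span analysis. Here I would exhibit the pipeline structure: an operation enters a search pipeline that advances level by level (or along a doubling ladder of fingers), and the key invariant I would prove is that an operation of finger distance $r$ exits after $O(\log r + 1)$ pipeline stages of $O(1)$ span each, independent of operations that entered in the same batch. Thus along any path in $D$, the total time spent waiting inside $\fs_2$ is bounded by the sum of per-call weights, which is precisely $s_L$ (with finger moves weighted $\log n$ since they really do need to traverse the full structure). The delicate part is ruling out convoy effects, i.e.~showing that a cheap access queued behind an expensive finger move is not forced to wait $\Theta(\log n)$; this needs the pipeline stages to be decoupled so that independent operations make progress on their own schedules, and it is what makes the ``weighted span'' replace $d\cdot\log n$ by $s_L$.

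Finally I would combine everything. The framework contributes $T_\infty + d\cdot(\log p)^2$ on the span side (node execution plus batch-boundary overhead along the critical path); the pipeline contributes $s_L$; and the work divides by $p$ to give $\frac{T_1 + F_L}{p}$. Adding these gives exactly the bound claimed, and I would close by noting that this is tight up to the $d\cdot(\log p)^2$ and $s_L$ slack, matching the theorem statement's informal remark that $\fs_2$ is optimal up to an extra $O((\log p)^2)$ per $\fs_2$-call on the critical path.
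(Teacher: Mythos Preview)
Your high-level decomposition (work bound plus span bound plus buffer overhead, combined via greedy scheduling) matches the paper's route, but the span argument you sketch has two genuine gaps relative to what actually makes $\fs_2$ work.

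First, the pipeline is not ``$O(\log r+1)$ stages of $O(1)$ span each''. The final-slab sections are doubly-exponential: section $S[k]$ holds $\Theta(2^{2^k})$ items and a single run of $S[k]$ takes $O(2^k)$ delay (\cref{lem:FS2-final-slab-delay}). An operation with finger distance $r$ passes through $O(\log\log r)$ sections, and the delays form a geometric series summing to $O(\log r)$. The first slab is not pipelined at all; it processes each cut batch in $O((\log p)^2)$ delay (\cref{lem:FS2-first-slab-delay}), and that, not the buffer framework, is the source of the $d\cdot(\log p)^2$ term. Your picture of many constant-time stages would not survive contact with the actual data structure.

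Second, and more seriously, you have not identified the real obstacle in the span analysis. It is not a cheap access waiting behind an expensive one in the same batch (cheap operations simply finish in earlier sections and are returned). The problem is \emph{deferment}: a section $S[k]$ refuses to proceed if it is imbalanced or $S[k+1]$'s buffer is too full, and this can cascade up the pipeline so that an operation is stalled for reasons entirely unrelated to its own finger distance. The paper handles this by splitting the delay of each operation into a ``normal'' part (bounded by $O(\log r)$ as above) and a ``deferment'' part, and then running a separate credit argument (see the proof of \cref{thm:FS2-span}) showing that the total deferment delay over all operations on any path is $O(N/p)$, which is absorbed into the work term. There are further $O(N/p)$ contributions from the feed-buffer mechanism that cuts input batches into size-$p^2$ pieces and from returning results of large group-operations. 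Without the deferment accounting and the feed-buffer cutting, your ``decoupled stages'' claim is an assertion, not an argument, and the $s_L$ bound does not follow.
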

We shall first focus on basic finger structures with just one fixed
finger at each end, since we can implement the general finger structure
with $f$ movable fingers by essentially concatenating $(f+1)$ basic
finger structures, as we shall explain later in \ref{sec:GPFS}. We
will also discuss later in \ref{sec:work-steal} how to adapt our
results for work-stealing schedulers that can actually be provided
by a real runtime system.

\subsection*{Challenges \& Key Ideas}

The sequential finger structure in \cite{guibas1977finger} (essentially
a B-tree with carefully staggered rebalancing) takes $O(\log r+1)$
worst-case time per access with finger distance $r$, but seems impossible
to parallelize efficiently. It turns out that relaxing this bound
to $O(\log r+1)$ amortized time admits a \textbf{\textit{simple sequential
finger structure}} $\fs_{0}$ (\ref{sec:ASFS}) that can be parallelized.
In $\fs_{0}$, the items are stored in order in a list of segments
$S_{0}[0],S_{0}[1],\cdots,S_{0}[l]\,,\,S_{1}[l],\cdots,S_{1}[1],S_{1}[0]$,
where each segment $S_{i}[k]$ is a balanced binary search tree with
size at most $3\cdot c(k)$ but at least $c(k)$ unless $k=l$, where
$c(k)=2^{2^{k+1}}$. This ensures that $S_{i}[k]$ has height $O\left(2^{k}\right)$,
and that the $r$ least items are in the first $\log O(\log r)$ segments
and the $r$ greatest items are in the last $\log O(\log r)$ segments.
Thus for each operation with finger distance $r$, it takes $O(\log r+1)$
time to search through the segments from both ends simultaneously
to find the correct segment and perform the operation in it. After
that, we rebalance the segments to preserve the size invariant, in
such a way that each imbalanced segment $S_{i}[k]$ will have new
size $2\cdot c(k)$. This \textbf{\textit{double-exponential}} segment
sizes and the \textbf{\textit{reset-to-middle}} rebalancing is critical
in ensuring that all the rebalancing takes $O(1)$ amortized time
per operation, even if each rebalancing cascade may take up to $\Theta(\log n)$
time.

\textbf{\textit{The challenge is to parallelize $\fs_{0}$ while preserving
the total work}}. Naturally, we want to process operations in batches,
and use a batch-parallel search structure in place of each binary
search tree. This may seem superficially similar to the parallel working-set
map in \cite{OPWM}, but the techniques in the earlier paper cannot
be applied in the same way, for three main reasons.

Firstly, searches and deletions for items not in the map must still
be cheap if they have small finger distance, so we have to \textbf{\textit{eliminate
these operation in a separate preliminary phase}} by an unsorted search
of the smaller segments, before sorting and executing the other operations.

Secondly, insertions and deletions must be cheap if they have small
finger distance (e.g. deleting an item from the first segment must
have $O(1)$ cost), so we \textbf{\textit{cannot enforce a tight segment
size invariant}}, otherwise rebalancing would be too costly.

This is unlike the parallel working-set map, where we not only have
a budget of $O(\log n)$ for each insertion or deletion or failed
search, but also must shift accessed items sufficiently near to the
front to achieve the desired span bound. The rebalancing in the parallel
finger structures in this paper is hence completely different from
that in the parallel working-set map.

Thirdly, for the faster version $\fs_{2}$ where the larger segments
are pipelined, in order to keep all segments sufficiently balanced,
the pipelined segments must never be too underfull, so we must \textbf{\textit{carefully
restrict when a batch is allowed to be processed at a segment}}. Due
to this, we cannot even guarantee that a batch of operations will
proceed at a consistent pace through the pipeline, but we can use
an accounting argument to bound the `excess delay' by the number
of $\fs_{2}$-calls divided by $p$.

\subsection*{Other Related Work}

There are many approaches for designing efficient parallel data structures,
so as to make maximal use of parallelism in a multi-processor system,
whether with empirical or theoretical efficiency.

For example, Ellen et al.~\cite{EllenPR10} show how to design a
non-blocking concurrent binary search tree, with later work analyzing
the amortized complexity~\cite{ellen2014nonblockingbst} and generalizing
this technique~\cite{brown2014nonblockingtrees}. Another notable
concurrent search tree is the CBTree~\cite{afek2012cbtree,afek2014cbtree},
which is based on the splay tree. But despite experimental success,
the theoretical access cost for these tree structures may increase
with the number of concurrent operations due to contention near the
root, and some of them do not even maintain balance (i.e., the height
may get large).

Another method is software combining~\cite{FatourouKa12,HendlerInSh10,OyamaTaYo99},
where each process inserts a request into a shared queue and at any
time one process is sequentially executing the outstanding requests.
This generalizes to parallel combining~\cite{aksenov2018parallelcombining},
where outstanding requests are executed in batches on a suitable batch-parallel
data structure (similar to implicit batching). These methods were
shown to yield empirically efficient concurrent implementations of
various common abstract data structures including stacks, queues and
priority queues.

In the PRAM model, Paul et al.~\cite{paul1983paralleldict} devised
a parallel 2-3 tree where $p$ synchronous processors can perform
a sorted batch of $p$ operations on a parallel 2-3 tree of size $n$
in $O(\log n+\log p)$ time. Blelloch et al.~\cite{BlellochRe97}
show how to increase parallelism of tree operations via pipelining.
Other similar data structures include parallel treaps~\cite{BlellochR98}
and a variety of work-optimal parallel ordered sets~\cite{blelloch2016justjoin}
supporting unions and intersections with optimal work, but these do
not have optimal span. As it turns out, we can in fact have parallel
ordered sets with optimal work and span~\cite{akhremtsev2016fastparallelsets,P23T}.

Nevertheless, the programmer cannot use this kind of parallel data
structure as a black box with atomic operations in a high-level parallel
program, but must instead carefully coordinate access to it. This
difficulty can be eliminated by designing a suitable \textbf{\textit{batch-parallel}}
data structure and using \textbf{\textit{implicit batching}}~\cite{agrawal2014batcher}
or \textbf{\textit{extended implicit batching}} as presented in \cite{OPWM}
and more fully in this paper. Batch-parallel implementations have
been designed for various data structures including weight-balanced
B-trees~\cite{erb2014parallelwbtrees}, priority queues~\cite{aksenov2018parallelcombining},
working-set maps~\cite{OPWM} and euler-tour trees~\cite{tseng2019paralleleulertree}.

\section{Parallel Computation Model}

\label{sec:model}

In this section, we describe parallel programming primitives in our
model, how a parallel program generates an execution DAG, and how
we measure the cost of an execution DAG.

\subsection{Parallel Primitives}

\label{sub:primitives}

The parallel finger structures $\fs_{1}$ and $\fs_{2}$ in this paper
are described and explained as multithreaded data structures that
can be used as composable building blocks in a larger parallel program.
In this paper we shall focus on the abstract algorithms behind $\fs_{1}$
and $\fs_{2}$, relying merely on the following parallel programming
primitives (rather than model-specific implementation details, but
see Appendix \ref{sub:locking} for those):
\begin{enumerate}
\item \textbf{Threads}: A thread can at any point \textbf{\textit{terminate}}
itself (i.e.~finish running). Or it can \textbf{\textit{fork}} a
new thread, obtaining a pointer to that thread, or \textbf{\textit{join}}
to another thread (i.e.~wait until that thread terminates). Or it
can \textbf{\textit{suspend}} itself (i.e.~temporarily stop running),
after which a thread with a pointer to it can \textbf{\textit{resume}}
it (i.e.~make it continue running from where it left off). Each of
these takes $O(1)$ time.
\item \textbf{Non-blocking locks}: Attempts to \textbf{\textit{acquire}}
a non-blocking lock are serialized but do not block. Acquiring the
lock succeeds if the lock is not currently held but fails otherwise,
and \textbf{\textit{releasing}} always succeeds. If $k$ threads concurrently
access the lock, then each access finishes within $O(k)$ time.
\item \textbf{Dedicated lock}: A dedicated lock is a blocking lock initialized
with a constant number of keys, where concurrent threads must use
different keys to \textbf{\textit{acquire}} it, but \textbf{\textit{releasing}}
does not require a key. Each attempt to acquire the lock takes $O(1)$
time, and the thread will acquire the lock after at most $O(1)$ subsequent
acquisitions of that lock.
\item \textbf{Reactivation calls}: A procedure $P$ with no input/output
can be encapsulated by a reactivation wrapper, in which it can be
run only via \textbf{\textit{reactivations}}. If there are always
at most $O(1)$ concurrent reactivations of $P$, then whenever a
thread \textbf{\textit{reactivates}} $P$, if $P$ is not currently
running then it will start running (in another thread forked in $O(1)$
time), otherwise it will run within $O(1)$ time after its current
run finishes.
\end{enumerate}
We also make use of basic batch operations, namely filtering, sorted
partitioning, joining and merging (see Appendix \ref{sub:par-batch}),
which have easy implementations using arrays in the binary forking
model in \cite{blelloch2019mtramalgo}. So $\fs_{1}$ and $\fs_{2}$
(using a work-stealing scheduler) can be implemented in the Arbitrary
CRCW PRAM model with fetch-and-add, achieving the claimed performance
bounds. Actually, $\fs_{1}$ and $\fs_{2}$ were also designed to
function correctly with the same performance bounds in a much stricter
computation model called the QRMW parallel pointer machine model (see
Appendix \ref{sub:qrmw-ppm} for details).

\subsection{Execution DAG}

The \textbf{program DAG} $D$ captures the high-level execution of
$P$, but the actual complete execution of $P$ (including interaction
between data structure calls) is captured by the \textbf{execution
DAG} $E$ (which may be schedule-dependent), in which each node is
a basic instruction and the directed edges represent the computation
dependencies (such as constrained by forking/joining of threads and
acquiring/releasing of blocking locks). At any point during the execution
of $P$, a node in the program/execution DAG is said to be \textbf{ready}
if its parent nodes have been executed. At any point in the execution,
an \textbf{active thread} is simply a ready node in $E$, while a
\textbf{terminated/suspended thread} is an executed node in $E$ that
has no child nodes.

The execution DAG $E$ consists of \textbf{program nodes} (specifically
\textbf{$P$-nodes}) and \textbf{ds (data-structure) nodes}, which
are dynamically generated as follows. At the start $E$ has a single
program node, corresponding to the start of the program $P$. Each
node could be a \textbf{normal instruction} (i.e.~basic arithmetic/memory
operation) or a \textbf{parallel primitive} (see \ref{sub:primitives}).
Each program node could also be a \textbf{data structure call}.

When a (ready) node is executed, it may generate child nodes or \textbf{\textit{terminate}}.
A normal instruction generates one child node and no extra edges.
A \textbf{\textit{join}} generates a child node with an extra edge
to it from the \textbf{\textit{terminate}} node of the joined thread.
A \textbf{\textit{resume}} generates an extra child node (the resumed
thread) with an edge to it from the \textbf{\textit{suspend}} node
of the originally suspended thread. Accesses to locks and reactivation
calls would each expand to a subDAG comprised of normal instructions
and possibly \textbf{\textit{fork}}/\textbf{\textit{suspend}}/\textbf{\textit{resume}}.

The program nodes correspond to nodes in the program DAG $D$, and
except for data structure calls they generate only program nodes.
A call to a data structure $M$ is called an \textbf{$M$-call}. If
$M$ is an ordinary (non-batched) data structure, then an $M$-call
generates an \textbf{$M$-node} (and every $M$-node is a ds node),
which thereafter generates only $M$-nodes except for calls to other
data structures (external to $M$) or returning the result of some
operation (generating a program node with an edge to it from the original
$M$-call).

However, if $M$ is an \textbf{\textit{(implicitly) batched}} data
structure, then all $M$-calls are automatically passed to the \textbf{parallel
buffer} for $M$ (see Appendix \ref{sub:par-buffer}). So an $M$-call
generates a \textbf{buffer node} corresponding to passing the call
to the parallel buffer, as if the parallel buffer for $M$ is itself
another data structure and not part of $M$. Buffer nodes generate
only buffer nodes until it notifies $M$ of the buffered $M$-calls
or passes the input batch to $M$, which generates an $M$-node. In
short, $M$-nodes exclude all nodes generated as part of the buffer
subcomputations (i.e.~buffering the $M$-calls, and notifying $M$,
and flushing the buffer).

\subsection{Data Structure Costs}

We shall now define work and span of any (terminating) subcomputation
of a multithreaded program, i.e.~any subset of the nodes in its execution
DAG. This allows us to capture the intrinsic costs incurred by a data
structure, separate from the costs of a parallel program using it.
\begin{defn}[Subcomputation Work/Span/Cost]
\label{def:work-span} Take any execution of a parallel program $P$
(on $p$ processors), and take any subset $C$ of nodes in its execution
DAG $E$. The \textbf{work} taken by $C$ is the total weight $w$
of $C$ where each node is weighted by the time taken to execute it.
The \textbf{span} taken by $C$ is the maximum weight $s$ of nodes
in $C$ on any (directed) path in $E$. The \textbf{cost} of $C$
is $\frac{w}{p}+s$.
\end{defn}

\begin{defn}[Data Structure Work/Span/Cost]
\label{def:effective} Take any parallel program $P$ using a data
structure $M$. The \textbf{work/span/cost} of $M$ (as used by $P$)
is the work/span/cost of the $M$-nodes in the execution DAG for $P$.
\end{defn}
Note that the cost of the entire execution DAG is in fact an upper
bound on the actual time taken to run it on a \textbf{greedy scheduler},
which on each step assigns as many unassigned ready nodes (i.e.~nodes
that have been generated but have not been assigned) as possible to
available processors (i.e.~processors that are not executing any
nodes) to be executed.

Moreover, the subcomputation cost is \textbf{\textit{subadditive}}
across subcomputations. Thus our results are \textbf{\textit{composable}}
with other algorithms and data structures in this model, since we
actually show the following for some linearization $L$ (where $F_{L},d,n,s_{L}$
are as defined in \ref{sec:main} Main Results, and $N$ is the total
number of calls to the parallel finger structure).
\begin{thm}[$\fs$ Work/Span Bounds]
\label{thm:FS-costs}~
\begin{itemize}
\item (\ref{thm:FS1-work} and \ref{thm:FS1-span}) $\fs_{1}$ takes $O\left(F_{L}\right)$
work and $O\left(\frac{N}{p}+d\cdot\left((\log p)^{2}+\log n\right)\right)$
span.
\item (\ref{thm:FS2-work} and \ref{thm:FS2-span}) $\fs_{2}$ takes $O\left(F_{L}\right)$
work and $O\left(\frac{N}{p}+d\cdot(\log p)^{2}+s_{L}\right)$ span.
\end{itemize}
\end{thm}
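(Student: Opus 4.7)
The plan is to prove each of the four component bounds separately: the work and span of $\fs_1$, and the work and span of $\fs_2$. The work arguments for both variants share a common amortized core inherited from the sequential base $\fs_0$, while the span arguments diverge because of the pipelining introduced in $\fs_2$. Throughout, I would maintain a potential function that charges each segment $S_i[k]$ by how far its current size deviates from the reset target $2\cdot c(k)$, where $c(k)=2^{2^{k+1}}$.

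For the \textbf{work bound} $O(F_L)$, I would decompose the work performed on each batch into three parts. First, the preliminary unsorted search that eliminates operations whose keys are absent costs $O(\log r+1)$ per operation with finger distance $r$, because an operation eliminated at segment $S_i[k]$ must have finger distance $r=\Omega(c(k-1))$. Second, the sorting and distribution of surviving operations across segments is bounded by the same per-operation charge, since each operation participates only in sub-batches of segments whose index is $O(\log\log r)$. Third, the rebalancing work is charged to the potential: because $c(k)$ is double-exponential and rebalancing resets $S_i[k]$ to $2\cdot c(k)$, each rebalance of $S_i[k]$ is preceded by $\Omega(c(k))$ operations touching it, giving $O(1)$ amortized rebalancing cost per operation. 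Summing over all batches and using the linearization $L$ implicit in the batch order yields $O(F_L)$ total work for both $\fs_1$ and $\fs_2$.

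For the $\fs_1$ \textbf{span bound}, since $\fs_1$ processes input batches one at a time, I would bound the span contributed by a single batch and then sum along any dependent path. A batch of size $B$ is filtered, sorted-partitioned, joined, and merged using the parallel batch primitives, each with span $O((\log p)^2+\log B)$; the per-segment sub-batches then execute on balanced binary search trees of size $O(n)$ in span $O(\log n)$. Thus each $\fs_1$-call on a dependent path contributes $O((\log p)^2+\log n)$ to the span, so $d$ calls on the worst path contribute $d\cdot((\log p)^2+\log n)$. The buffer-notification and flush subcomputations contribute $O(N/p)$ across all calls, by the amortized analysis of the extended implicit batching buffer.

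The $\fs_2$ \textbf{span bound} is the main obstacle, because batches are pipelined across segments and the restriction that a pipelined segment must never be too underfull when admitting a batch means different batches may advance through the pipeline at different paces. My plan is threefold: first, bound the total \emph{pipeline delay} accumulated across all $\fs_2$-calls by an accounting argument that charges each stalled batch against pending $\fs_2$-calls, contributing $O(N/p)$ to the span; second, observe that an operation with finger distance $r$ descends through only $O(\log\log r)$ pipeline stages before being served, so the per-stage span contributes $O(\log r+1)$ to the weighted path length, which sums to $O(s_L)$ along any dependent path; third, absorb the sorting and buffer-interface span incurred by each $\fs_2$-call into the $d\cdot(\log p)^2$ term. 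The trickiest ingredient is verifying that the non-underfull admission restriction is compatible with both the pipeline accounting and the reset-to-middle rebalancing used for the work bound, so that no dependency path carries excessive weighted length beyond what $s_L$ already captures; this I would handle by maintaining an invariant on segment occupancy that is preserved by rebalancing and that bounds the stall any single batch can experience in terms of the operations ahead of it.
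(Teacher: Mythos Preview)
Your high-level decomposition matches the paper's, but two steps in the work argument would not go through as you have written them.

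First, your treatment of sorting is the real gap. You write that ``the sorting and distribution of surviving operations across segments is bounded by the same per-operation charge, since each operation participates only in sub-batches of segments whose index is $O(\log\log r)$.'' But sorting is performed once on the whole surviving batch before any per-segment distribution, and ordinary merge-sort of a size-$b$ batch costs $\Theta(b\log b)$ regardless of finger distances. If the batch consists of $b$ repeated accesses to an item at finger distance $O(1)$, the finger-bound budget is only $O(b)$, so you cannot afford $\Theta(b\log b)$. The paper closes this gap by using \emph{entropy-sorting} (a merge-sort variant that combines duplicates into bundles), whose work is $O(H+b)$ where $H=\sum q_i\log(b/q_i)$ is the entropy of the batch, and then invoking a separate lemma (the Maximum Finger Bound) showing $H\in O(F_{L^*})$ for a carefully chosen linearization $L^*$ in which effectual insertions are in inward order and effectual deletions in outward order. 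Without both the entropy-sort and the inward/outward ordering of $L^*$, the sorting charge cannot be absorbed into $F_L$.

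Second, your rebalancing potential argument is too coarse. You claim ``each rebalance of $S_i[k]$ is preceded by $\Omega(c(k))$ operations touching it,'' but rebalancing cascades: a single operation can trigger a rebalance of $S_i[0]$, which overflows into $S_i[1]$, which overflows into $S_i[2]$, and so on, with none of the later segments having been touched by any operation. The paper's credit invariant is finer: a segment $S_i[k]$ that is $q$ items beyond target capacity holds $q\cdot 2^{-k}$ credits, and when it is rebalanced those credits pay both for the $O(q\cdot 2^{-k})$ shift cost \emph{and} for the at most $q\cdot 2^{-(k+1)}$ new credits needed at $S_i[k+1]$. The halving per level is what makes the cascade amortize to $O(1)$ per operation; your formulation does not capture this. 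A smaller issue: in the $\fs_1$ span bound, the $N/p$ term does not come from the buffer (buffer cost is accounted separately), but from the fact that a single batch of size $b$ contributes $(\log b)^2$ span, which is $O(b/p)$ once $b\ge p^2$; summing $b/p$ over all batches gives $N/p$.
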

Note that the bounds for the work/span of $\fs_{1}$ and $\fs_{2}$
are independent of the scheduler. In addition, using any greedy scheduler,
the parallel buffer for either finger structure has cost $O\left(\frac{T_{1}+F_{L}}{p}+d\cdot\log p\right)$
(Appendix \ref{rem:par-buff-cost}). Therefore our main results (\ref{thm:FS1-perf}
and \ref{thm:FS2-perf}) follow from these composable bounds (\ref{thm:FS-costs}).

In general, if a program uses a fixed number of implicitly batched
data structures, then running it using a greedy scheduler takes $O\left(\frac{T_{1}+w^{*}}{p}+T_{\infty}+s^{*}+d^{*}\cdot\log p\right)$
time, where $w^{*}$ is the total work of all the data structures,
and $s^{*}$ is the total span of all the data structures, and $d^{*}$
is the maximum number of data structure calls on any path in the program
DAG.

\section{Amortized Sequential Finger Structure}

\label{sec:ASFS}

In this section we explain a sequential finger structure $\fs_{0}$
with a fixed finger at each end, which (unlike finger structures based
on balanced binary trees) is amenable to parallelization and pipelining
due to its \textbf{\textit{doubly-exponential segmented structure}}
(which was partially inspired by Iacono's working-set structure~\cite{iacono2001wstree}).

\begin{figure}[H]
\begin{centerbox}
\begin{roundedboxinfloat}
\begin{centerbox}
$\begin{matrix}\grey{\text{Front}}<\boxed{S_{0}[0]}<\boxed{S_{0}[1]}<\boxed{S_{0}[2]}<\cdots<{}\\
\\
\grey{\text{Back}}>\boxed{S_{1}[0]}>\boxed{S_{1}[1]}>\boxed{S_{1}[2]}>\cdots>{}
\end{matrix}\begin{matrix}\boxed{S_{0}[l]}\\
\wedge\\
\boxed{S_{1}[l]}
\end{matrix}$
\end{centerbox}
\caption{\label{fig:FS0-outline}$\protect\fs_{0}$ Outline; each box $S_{i}[k]$
represents a 2-3 tree of size $\Theta(2^{2^{k}})$ for $k<l$}
\end{roundedboxinfloat}
\end{centerbox}
\end{figure}

$\fs_{0}$ keeps the items in order in two halves, the front half
stored in a chain of \textbf{segments} $S_{0}[0..l]$, and the back
half stored in reverse order in a chain of segments $S_{1}[0..l]$.
Let $c(k)=2^{2^{k+1}}$ for each $k\in\zz$. Each segment $S_{i}[k]$
has a \textbf{target size} $t(k)=2\cdot c(k)$, and a \textbf{target
capacity} defined to be $[t(k),t(k)]$ if $k<l$ but $[0,t(k)]$ if
$k=l$. Each segment stores its items in order in a 2-3 tree. We say
that a segment $S_{i}[k]$ is \textbf{balanced} iff its size is within
$c(k)$ of its target capacity, and\textbf{ overfull} iff it has more
than $c(k)$ items above target capacity, and \textbf{underfull} iff
it has more than $c(k)$ items below target capacity. At any time
we associate every item $x$ to a unique segment that it \textbf{fits}
in; $x$ fits in $S_{0}[k]$ if $k$ is the minimum such that $x\le\max(S_{0}[k])$,
and that $x$ fits in $S_{1}[k]$ if $k$ is the minimum such that
$x\ge\min(S_{1}[k])$, and that $x$ fits in $S_{0}[l]$ if $\max(S_{0}[l])<x<\min(S_{1}[l])$.
We shall maintain the invariant that every segment is balanced after
each operation is finished.

For each operation on an item $x$, we find the segment $S_{i}[k]$
that $x$ fits in, by checking the range of items in $S_{0}[a]$ and
$S_{1}[a]$ for each $a$ from $0$ to $l$ and stopping once $k$
is found, and then perform the desired operation on the 2-3 tree in
$S_{i}[k]$. This takes $O(k+\log(t(k)+c(k)))\wi O\left(2^{k}\right)$
steps, and $2^{k}=\log_{2}c(k-1)\le\log_{2}r+1$ where $r$ is the
finger distance of the operation.

After that, if $S_{i}[k]$ becomes imbalanced, we \textbf{rebalance}
it by shifting (appropriate) items to or from $S_{i}[k+1]$ (after
creating empty segment $S_{i}[k+1]$ if it does not exist) to make
$S_{i}[k]$ have target size or as close as possible (via a suitable
split then join of the 2-3 trees), and then $S_{i}[k+1]$ is removed
if it is the last segment and is now empty. After the rebalancing,
$S_{i}[k]$ will not only be balanced but also have size within its
target capacity. But now $S_{i}[k+1]$ may become imbalanced, so the
rebalancing may cascade.

Finally, if one chain $S_{i}[0..l']$ is longer than the other chain
$S_{j}[0..l]$, it must be that $l'=l+1$, so we \textbf{rebalance}
the chains as follows: If $S_{j}[l]$ is below target size, shift
items from $S_{i}[l']$ to $S_{j}[l]$ to fill it up to target size.
If $S_{j}[l]$ is (still) below target size, remove the now empty
$S_{i}[l']$, otherwise add a new empty segment $S_{j}[l+1]$.

Rebalancing may cascade throughout the whole chain and take $\Theta(\log n)$
steps. But we shall show below that the rebalancing costs can be amortized
away completely, and hence each operation with finger distance $r$
takes $O(\log r+1)$ amortized steps, giving us the finger bound for
$\fs_{0}$. We will later use the same technique in analyzing $\fs_{1}$
and $\fs_{2}$ as well.
\begin{lem}[$\fs_{0}$ Rebalancing Cost]
 All the rebalancing takes $O(1)$ amortized steps per operation.\end{lem}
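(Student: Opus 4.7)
The plan is to show that the total rebalancing work over any sequence of $N$ operations is $O(N)$, which gives $O(1)$ amortized per operation. My approach is a direct counting argument that leverages the doubly-exponential sizes $c(k)=2^{2^{k+1}}$ together with the reset-to-middle rule.

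First I would bound the cost of a single rebalance. A rebalance at level $k$ performs a constant number of splits and joins on 2-3 trees of total size $O(c(k+1))$, so it costs $C_k = O(\log c(k+1)) = O(2^{k+2})$ steps. Next, let $R_k$ denote the total number of rebalances triggered at level $k$ (summed over both chains) during the whole run, and let $N_k$ denote the number of operations that act directly at level $k$, so $\sum_k N_k = N$. The reset-to-middle rule sets $\mathrm{size}(S_i[k])$ to exactly $t(k)$ after every rebalance, so between two successive rebalances at the same $S_i[k]$ the net change in its size must exceed $c(k)$ in absolute value. The size-changes at level $k$ come from two sources: direct operations (at most $N_k$ in total) and cascaded shifts from rebalances at level $k-1$. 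Because a rebalance at $S_i[k-1]$ fires only when its excess first passes $c(k-1)$ by at most the size of the triggering change (itself $O(c(k-2))$), each cascaded shift moves only $O(c(k-1))$ items. This yields the recurrence
\[
R_k \cdot c(k) \;\leq\; N_k + O(c(k-1)) \cdot R_{k-1}.
\]

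Setting $T_k = R_k \cdot c(k)$ turns the recurrence into $T_k \leq N_k + \alpha T_{k-1}$ for some constant $\alpha$, which unrolls to $T_k \leq \sum_{j \leq k} \alpha^{k-j} N_j$. Substituting into the total rebalancing cost gives
\[
\sum_k C_k R_k \;=\; \sum_k \frac{C_k}{c(k)}\, T_k \;\leq\; \sum_j N_j \sum_{k\geq j} O\!\left(\frac{\alpha^{k-j} \cdot 2^{k+2}}{2^{2^{k+1}}}\right),
\]
and the inner sum is bounded by an absolute constant because $2^{2^{k+1}}$ dominates any simply-exponential expression in $k$. Hence the total rebalancing cost is $O(N)$. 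Chain-level rebalancing (adding or removing a trailing segment $S_i[l+1]$) occurs only when the current topmost segment is itself being rebalanced, so its count is dominated by $R_l$ and each event costs $O(1)$; this is absorbed into the above estimate.

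The main obstacle will be the recurrence bound, specifically the claim that each cascaded shift moves only $O(c(k-1))$ items. This requires tracking the excess carefully: the invariant that, just before any rebalance at level $k-1$ fires, the excess has been reached from a previously balanced configuration by one final increment of magnitude $O(c(k-2)) = O(c(k-1))$. Once this local claim is in hand, the recurrence and the convergence of the resulting double sum follow routinely from the super-fast growth of $c(k)$.
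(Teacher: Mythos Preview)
Your counting argument is essentially correct and reaches the same conclusion, but it is a genuinely different route from the paper's proof. The paper uses a \emph{credit invariant}: each segment $S_i[k]$ that is $q$ items beyond its target capacity holds at least $q\cdot 2^{-k}$ credits. Each operation deposits one credit; when $S_i[k]$ is rebalanced (so $q>c(k)$), its $q\cdot 2^{-k}$ credits suffice both to pay the $O(\log q)\subseteq O(q\cdot 2^{-k})$ work and to supply the at most $q\cdot 2^{-(k+1)}$ extra credits now needed at $S_i[k+1]$. This is entirely local --- each rebalance pays for itself and forwards half its credits --- and it never needs your separate inductive bound on cascade-shift sizes. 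Your approach instead globalises: you first prove inductively that each cascaded shift carries $O(c(k-1))$ items, then feed that into a level-by-level recurrence and sum. Both work; the potential argument is shorter and self-contained, while your recurrence makes the doubly-exponential decay of $2^k/c(k)$ the visible reason the sum converges.

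One correction: chain rebalancing does not cost $O(1)$ per event. It shifts items between segments at levels $l$ and $l'=l+1$ (a 2-3 tree split and join), so it costs $\Theta(2^{l'})$. The fix is the same as the paper's: each chain rebalancing is triggered by a segment rebalance at level $l'-1$ that created or removed $S_i[l']$, which already has cost $\Theta(C_{l'-1})=\Theta(2^{l'})$, so charge the chain-rebalancing work to that event and absorb it into your $\sum_k C_k R_k$ bound.
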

\begin{proof}
We shall maintain the invariant that each segment $S_{i}[k]$ with
$q$ items beyond (i.e.~above or below) its target capacity has at
least $q\cdot2^{-k}$ stored credits. Each operation is given $1$
credit, and we use it to pay for any needed extra stored credits at
the segment where we perform the operation. Whenever a segment $S_{i}[k]$
is rebalanced, it must have had $q$ items beyond its target capacity
for some $q>c(k)$, and so had at least $q\cdot2^{-k}$ stored credits.
Also, the rebalancing itself takes $O(\log(t(k)+q)+\log(t(k+1)+c(k+1)+q))\wi O(\log q)\wi O\left(q\cdot2^{-k}\right)$
steps, after which $S_{i}[k+1]$ needs at most $q\cdot2^{-(k+1)}$
extra stored credits. Thus the stored credits at $S_{i}[k]$ can be
used to pay for both the rebalancing and any extra stored credits
needed by $S_{i}[k+1]$. Whenever the chains are rebalanced, it can
be paid for by the last segment rebalancing (which created or removed
a segment), and no extra stored credits are needed. Therefore the
total rebalancing cost amounts to $O(1)$ per operation.
\end{proof}

\section{Simpler Parallel Finger Structure}

\label{sec:PFS1}

We now present our simpler parallel finger structure $\fs_{1}$. The
idea is to use the amortized sequential finger structure $\fs_{0}$
(\ref{sec:ASFS}) and execute operations in batches. We group each
pair of segments $S_{0}[k]$ and $S_{1}[k]$ into one \textbf{section}
$S[k]$, and we say that an item $x$ \textbf{fits in} the sections
$S[j..k]$ iff $x$ fits in some segment in $S[j..k]$.

The items in each segment are stored in a \textbf{batch-parallel map}
(Appendix \ref{sub:batch-map}), which supports:
\begin{itemize}
\item \textbf{Unsorted batch search:} Search for an unsorted batch of $b$
items within $O(b\cdot\log n)$ work and $O(\log b\cdot\log n)$ span,
tagging each search with the result, where $n$ is the map size.
\item \textbf{Sorted batch access:} Perform an item-sorted batch of $b$
operations on distinct items within $O\left(b\cdot\log n\right)$
work and $O(\log b+\log n)$ span, tagging each operation with the
result, where $n$ is the map size.
\item \textbf{Split}: Split a map of size $n$ around a given pivot rank
(into lower+upper parts) within $O(\log n)$ work/span.
\item \textbf{Join}: Join maps of total size $n$ separated by a pivot (i.e.~lower+upper
parts) within $O(\log n)$ work/span.
\end{itemize}
For each section $S[k]$, we can perform a batch of $b$ operations
on it within $O(b\cdot\log c(k))$ work and $O(\log b+\log c(k))$
span if we have the batch sorted. Excluding sorting, the total work
would satisfy the finger bound for the same reason as in $\fs_{0}$.
However, we cannot afford to sort the input batch right at the start,
because if the batch had $b$ searches of distinct items all with
finger distance $O(1)$, then it would take $\Omega(b\cdot\log b)$
work and exceed our finger bound budget of $O(b)$.

We can solve this by splitting the sections into two slabs, where
the first slab comprises the first $\log\log(2b)$ sections, and passing
the batch through a preliminary phase in which we merely perform an
unsorted search of the relevant items in the first slab, and eliminate
operations on items that fit in the first slab but are neither found
nor to be inserted.

This preliminary phase takes $O(\log c(k))$ work per operation and
$O(\log b\cdot\log c(k))$ span at each section $S[k]$. We then sort
the uneliminated operations and execute them on the appropriate slab.
For this, ordinary sorting still takes too much work as there can
be many operations on the same item, but it turns out that the finger
bound budget is enough to pay for entropy-sorting (Appendix \ref{def:par-esort}),
which takes $O\left(\log\frac{b}{q}+1\right)$ work for each item
that occurs $q$ times in the batch. Rebalancing the segments and
chains is a little tricky, but if done correctly it takes $O(1)$
amortized work per operation. Therefore we achieve work-optimality
while being able to process each batch within $O\left((\log b)^{2}+\log n\right)$
span. The details are below.

\subsection{Description of $\protect\fs_{1}$}

\label{sub:FS1-desc}

\begin{figure}[H]
\begin{centerbox}
\begin{roundedboxinfloat}
\begin{centerbox}
$\grey{\tbox{Parallel\ buffer}}\xrightarrow{\text{size-\ensuremath{b} input batch}}\ubrace{\boxed{S[0]}\to\cdots\to\boxed{S[m-1]}}{\text{First slab}}\xrightarrow[\smash{\qquad\raisebox{.84em}{\tilt{\Rsh}{180}}}]{\stbox{Sort}\ }\ubrace{\boxed{S[m]}\to\cdots\to\boxed{S[l]}}{\text{Final slab}}$\quad{}where
$m=\ceil{\log\log(2b)}$
\end{centerbox}
\caption{\label{fig:FS1-flow}$\protect\fs_{1}$ Outline; each batch is sorted
only after being filtered through the smaller sections}
\end{roundedboxinfloat}
\end{centerbox}
\end{figure}

$\fs_{1}$-calls are put into the parallel buffer (\ref{sec:model})
for $\fs_{1}$. Whenever the previous batch is done, $\fs_{1}$ flushes
the parallel buffer to obtain the next batch $B$. Let $b$ be the
size of $B$, and we can assume $b>1$. Based on $b$, the sections
in $\fs_{1}$ are conceptually divided into two slabs, the \textbf{first
slab} comprising sections $S[0..m-1]$ and the \textbf{final slab}
comprising sections $S[m..l]$, where $m=\ceil{\log\log(2b)}+1$ (where
$\log$ is the binary logarithm). The items in each segment are stored
in a batch-parallel map (Appendix \ref{sub:batch-map}).

$\fs_{1}$ processes the input batch $B$ in four phases: \phantomsection\label{par:FS1-phases}
\begin{enumerate}
\item \textbf{Preliminary phase}: For each first slab section $S[k]$ in
order (i.e.~$k$ from $0$ to $m-1$) do as follows:

\begin{enumerate}
\item Perform an unsorted search in each segment in $S[k]$ for all the
items relevant to the remaining batch $B'$ (of direct pointers into
$B$), and tag the operations in the original batch $B$ with the
results.
\item Remove all operations on items that fit in $S[k]$ from the remaining
batch $B'$.
\item Skip the rest of the first slab if $B'$ becomes empty.
\end{enumerate}
\item \textbf{Separation phase}: Partition $B$ based on the tags into three
parts and handle each part separately as follows:

\begin{enumerate}
\item \textbf{\textit{Ineffectual operations}} (on items that fit in the
first slab but are neither found nor to be inserted): Return the results.

\item \textbf{\textit{Effectual operations}} (on items found in or to be
inserted into the first slab): Entropy-sort (Appendix \ref{def:par-esort})
them in order of access type (search, update, insertion, deletion)
with deletions last, followed by item, combining operations of the
same access type on the same item into one \textbf{group-operation}
that is treated as a single operation whose \textbf{effect} is the
last operation in that group. Each group-operation is stored in a
leaf-based binary tree with height $O(\log b)$ (but not necessarily
balanced), and the combining is done during the entropy-sorting itself.

\item \textbf{\textit{Residual operations}} (on items that do not fit in
the first slab): Sort them while combining operations in the same
manner as for effectual operations.~\footnote{This does not require entropy-sorting, but combining merge-sort essentially
achieves the entropy bound anyway.}
\end{enumerate}
\item \textbf{Execution phase}: Execute the effectual operations as a batch
on the first slab, and then execute the residual operations as a batch
on the final slab, namely for each slab doing the following at each
section $S[k]$ in order (small to big):

\begin{enumerate}
\item Let $G_{1..4}$ be the partition of the batch of operations into the
$4$ access types (deletions last), each $G_{a}$ sorted by item.
\item For each segment $S_{i}[k]$ in $S[k]$, and for each $a$ from $1$
to $4$, cut out the operations that fit in $S_{i}[k]$ from $G_{a}$,
and perform those operations (as a sorted batch) on $S_{i}[k]$, and
then return their results.
\item Skip the rest of the slab if the batch becomes empty.
\end{enumerate}
\item \textbf{Rebalancing phase}: Rebalance all the segments and chains
by doing the following:

\begin{enumerate}
\item \label{enu:balance-seg} \textbf{\textit{Segment rebalancing}}: For
each chain $S_{i}$, for each segment $S_{i}[k]$ in $S_{i}$ in order
(small to big):

\begin{enumerate}
\item \label{enu:overflow} If $k>0$ and $S_{i}[k-1]$ is overfull, shift
items from $S_{i}[k-1]$ to $S_{i}[k]$ to make $S_{i}[k-1]$ have
target size.
\item \label{enu:fill} If $k>0$ and $S_{i}[k-1]$ is underfull and $S_{i}[k]$
either has at least $\frac{c(k)}{2}$ items or is the last segment
in $S_{i}$, let $S_{i}[k']$ be the first underfull segment in $S_{i}$,
and \textbf{fill} $S_{i}[k'..k-1]$ using $S_{i}[k]$ as follows:
for each $j$ from $k-1$ down to $k'$, shift items from $S_{i}[j+1]$
to $S_{i}[j]$ to make $S_{i}[k'..j]$ have total size $\sum_{a=k'}^{j}t(a)$
or as close as possible, and then remove $S_{i}[j+1]$ if it is emptied.

\item If $S_{i}[k]$ is (still) overfull and is the last segment in $S_{i}$,
create a new (empty) segment $S_{i}[k+1]$.
\item Skip the rest of the current slab if $S_{i}[k]$ is (now) balanced
and the execution phase had skipped $S[k]$.
\end{enumerate}
\item \label{enu:balance-chain} \textbf{\textit{Chain rebalancing}}: After
that, if one chain $S_{i}$ is longer than the other chain $S_{j}$,
repeat the following until the chains are the same length:

\begin{enumerate}
\item \label{enu:transfer} Let the current chains be $S_{i}[0..k]$ and
$S_{j}[0..k']$. Create new (empty) segments $S_{j}[k'+1..k]$, and
shift all items from $S_{i}[k]$ to $S_{j}[k]$, and then \textbf{fill}
the underfull segments in $S_{j}[k'..k-1]$ using $S_{j}[k]$ (as
in \step\ref{enu:fill}).
\item If $S_{j}[k]$ is (now) empty again, remove $S[k]$.
\end{enumerate}
\end{enumerate}
\end{enumerate}

\subsection{Analysis of $\protect\fs_{1}$}

First we establish that the rebalancing phase works, by proving the
following two lemmas.
\begin{lem}[$\fs_{1}$ Segment Rebalancing Invariant]
\label{lem:FS1-seg-rebalancing} During the segment rebalancing (\step\ref{enu:balance-seg}),
just after the iteration for segment $S_{i}[k]$, for any imbalanced
segment $S_{i}[k']$ in $S_{i}[0..k]$, either $k'=k$ or $S_{i}[k'..k]$
are all underfull.\end{lem}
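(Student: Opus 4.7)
The plan is induction on $k$, with case analysis on the state of $S_i[k-1]$ entering iteration $k$. The base case $k=0$ is vacuous. For the inductive step, the IH restricts the imbalanced segments in $S_i[0..k-1]$ to one of three shapes: empty (if $S_i[k-1]$ is balanced), the singleton $\{S_i[k-1]\}$ (if $S_i[k-1]$ is overfull), or a contiguous underfull suffix $S_i[j..k-1]$ for some $j \le k-1$ (if $S_i[k-1]$ is underfull). In each shape, the segments strictly below the imbalanced region are already balanced, which lets me cleanly track the effect of iteration $k$.

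I then dispatch on the three entering states. If $S_i[k-1]$ is overfull, step (a) transfers the excess to $S_i[k]$ and restores $S_i[k-1]$ to target size, so $S_i[0..k-1]$ are all balanced and only $S_i[k]$ can be imbalanced. If $S_i[k-1]$ is balanced, steps (a) and (b) have vacuous conditions and nothing changes, so the same conclusion holds. If $S_i[k-1]$ is underfull but step (b) does not fire, then the trigger condition gives $s_k < c(k)/2 < c(k)$ and forces $S_i[k]$ to be non-last, so $S_i[k]$ is itself underfull and the IH's block extends to $S_i[j..k]$. If $S_i[k-1]$ is underfull, step (b) fires, and $S_i[k]$ is not last, then the doubly-exponential growth of $c$ yields $c(k)/2 \ge T_{k-1} := \sum_{a=j}^{k-1} t(a)$, and combined with $s_k \ge c(k)/2$ this gives $N := \sum_{a=j}^k s_a \ge T_{k-1}$; a downward induction inside the inner loop then shows each target $T_{j'}$ is attained exactly, so $S_i[j..k-1]$ settle at their target sizes and only $S_i[k]$ remains potentially imbalanced.

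The main obstacle is the remaining subcase, where $S_i[k]$ is last and $N < T_{k-1}$. Here the inner loop fails for steps $j' > j^*$ (with $j^*$ the largest index satisfying $N \ge T_{j^*}$), emptying and removing the rightmost segments one at a time; it then succeeds for $j' \le j^*$, filling $S_i[j..j^*]$ to target sizes and leaving $S_i[j^*+1]$ with the residual $N - T_{j^*} \in [0, t(j^*+1))$ items. Because the original $S_i[k]$ was last, these removals only shorten the chain without creating any gaps. Crucially, $S_i[j^*+1]$ is now the last segment, so its relaxed target capacity $[0, t(j^*+1)]$ makes it balanced rather than underfull; all remaining segments are then balanced, so the invariant holds trivially.
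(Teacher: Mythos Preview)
Your approach is the same as the paper's: induction on $k$ with a case analysis on the state of $S_i[k-1]$ entering iteration $k$, using the inductive hypothesis to constrain the imbalanced set below to either a single overfull segment at $k-1$ or a contiguous underfull suffix. Your treatment of the fill step is considerably more detailed than the paper's (which just asserts that every segment in $S_i[k'..k-1]$ not removed ends up balanced, since $\frac{c(k)}{2}\ge\sum_{a=k'}^{k-1}t(a)$), and your analysis of the last-segment subcase with $N<T_{k-1}$, including the observation that the surviving $S_i[j^*+1]$ is last and hence has relaxed target capacity, is correct.

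There is one genuine omission. You do not address step (iv) of the segment rebalancing: ``Skip the rest of the current slab if $S_i[k]$ is (now) balanced and the execution phase had skipped $S[k]$.'' The paper's proof ends by arguing that when this skip triggers, $S_i[k]$ being balanced forces (via the invariant) all of $S_i[0..k]$ to be balanced, and the segments $S_i[k+1..\,]$ skipped were untouched by both the execution phase and the rebalancing phase, hence remain balanced from before the batch. On a strict reading of the lemma statement (``just after the iteration for $S_i[k]$''), iterations that never occur are not covered, so your proof is arguably complete for that reading; but the paper's proof of the next lemma invokes this lemma to conclude that \emph{all} segments are balanced after segment rebalancing, so you need the extension to skipped segments.

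One minor point: your case enumeration in the underfull branch leaves the subcase ``$S_i[k]$ is last and $N\ge T_{k-1}$'' implicit. Your downward-induction argument for the not-last case uses only $N\ge T_{k-1}$ and not non-lastness, so it covers this subcase verbatim; it would be cleaner to split on $N\ge T_{k-1}$ versus $N<T_{k-1}$ directly.
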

\begin{proof}
The invariant clearly holds for $S_{i}[0]$. Consider each iteration
for segment $S_{i}[k]$ during the segment rebalancing where $k>0$.
If $S_{i}[k-1]$ was overfull, then by the invariant it was the only
imbalanced segment in $S_{i}[0..k-1]$, and would be rebalanced in
\step\ref{enu:overflow}, preserving the invariant. If $S_{i}[k-1]$
was underfull and $S_{i}[k]$ had at least $\frac{c(k)}{2}$ items
or was the last segment in $S_{i}$, then in \step\ref{enu:fill}
$S_{i}[k'..k-1]$ would be filled using $S_{i}[k]$, which had at
least $\frac{c(k)}{2}\ge\sum_{a=k'}^{k-1}t(a)$ items unless it was
the last segment in $S_{i}$, and hence after that every segment in
$S_{i}[k'..k-1]$ (that is not removed) would be balanced, preserving
the invariant. If \step\ref{enu:overflow} and \step\ref{enu:fill}
do not apply, then $S_{i}[k-1]$ is balanced or $S_{i}[k]$ is underfull,
so the invariant is preserved. Finally, if $S_{i}[k]$ is balanced
at the end of that iteration, and had been skipped by the execution
phase, then by the invariant all segments in $S_{i}[0..k]$ are balanced,
and all segments skipped by the rebalancing phase are also balanced,
so the invariant is preserved.
\end{proof}

\begin{lem}[$\fs_{1}$ Chain Rebalancing Iterations]
\label{lem:FS1-chain-rebalancing} The chain rebalancing (\step\ref{enu:balance-chain})
takes at most two iterations, after which both chains $S_{0}$ and
$S_{1}$ will have equal length and all their segments will be balanced.\end{lem}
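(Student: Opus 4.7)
The plan is to first show that chain rebalancing begins from a state where both chains are already fully balanced, and then to analyze why at most two transfer iterations suffice. For the first part, I would apply \ref{lem:FS1-seg-rebalancing} at the last iteration of the segment-rebalancing loop on each chain: any imbalanced segment must be either the last or part of a tail of underfull segments ending at the last. But the last segment cannot be underfull, since its target capacity is $[0,t(k)]$ and sizes below zero are impossible; and it cannot be overfull either, for an overfull last segment at the final iteration would have triggered creation of a new last segment, contradicting the loop having terminated. Hence the last segment of each chain is balanced, and by the invariant all earlier segments are balanced too.

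Without loss of generality, suppose $S_i$ has length $m+1$ and $S_j$ has length $k'+1$ with $k'<m$ at the start of chain rebalancing. In the first iteration we create empty segments $S_j[k'+1..m]$, shift all items of $S_i[m]$ into $S_j[m]$, and then fill the underfull tail $S_j[t'..m-1]$ using $S_j[m]$, where $t'$ is the first underfull index (either $k'$, if $S_j[k']$ has become underfull in its new non-last role, or $k'+1$ otherwise). The pool of items being redistributed is $|S_i[m]|$ plus possibly $|S_j[k']|$, while the total fill target is $\sum_{a=t'}^{m-1}2c(a)\le 4c(m-1)$, which is dwarfed by $c(m)=c(m-1)^2$. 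If the pool suffices (Case A), the fill brings $S_j[t'..m-1]$ to target and $S_j[m]$ ends up balanced as the new last segment (or is empty, triggering removal of $S[m]$ and hence of $S_i[m]$ too); chain lengths equalize and every segment is balanced, so iteration 1 alone suffices. If the pool is too small (Case B), the fill cascades, emptying $S_j[m]$ and possibly further tail segments of $S_j$ down to $S_j[t''+1]$ for some $t''<m-1$; removal of $S[m]$ then contracts $S_i$ to length $m$, but the chains are still unequal with $S_j$ strictly shorter.

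In Case B, a second iteration runs with $S_i$ longer and its new last segment $S_i[m-1]$ as source. The key observation is that $S_i[m-1]$ was a non-last segment throughout segment rebalancing and iteration 1 (the latter only touched $S_i[m]$), so it has size at least $c(m-1)$, the lower balanced bound for non-last segments. This pool dwarfs the iteration-2 fill target of at most $4c(m-2)\ll c(m-1)$, so iteration 2 completes with no further cascade, leaving all segments balanced. Chains then have equal length and the loop terminates after two iterations. The main obstacle I anticipate is the Case B bookkeeping: pinning down which tail segments of $S_j$ are actually emptied by the cascading fill, handling the status of $S_i[m]$ (emptied by the shift, then removed together with $S_j[m]$) versus segments of $S_j$ possibly already removed by the fill itself, and confirming that iteration 2 never triggers a further cascade. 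The doubly-exponential spacing of the $c(k)$'s is the tool throughout that provides the headroom to cap the number of iterations at two.
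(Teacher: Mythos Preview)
Your proposal is correct and follows essentially the same route as the paper's proof: both use \ref{lem:FS1-seg-rebalancing} to establish that all segments are balanced before chain rebalancing begins (you spell out the last-segment argument more explicitly than the paper does), and both hinge on the key inequality $c(k-1)\ge\sum_{a=0}^{k-2}t(a)$ to show that the second iteration's source $S_i[k-1]$, having at least $c(k-1)$ items as a formerly balanced non-last segment, suffices to fill the shorter chain completely. One small slip: in your Case~B you assert $t''<m-1$ and hence that a second iteration necessarily runs, but if only $S_j[m]$ is emptied while $S_j[m-1]$ remains non-empty, then after removing $S[m]$ both chains already end at index $m-1$ and the loop terminates (with all segments balanced, since $S_j[m-1]$ as the new last segment has size in $[1,t(m-1))$); the paper's proof glosses over this sub-case as well, and it only strengthens the conclusion.
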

\begin{proof}
By \ref{lem:FS1-seg-rebalancing}, all segments in each chain will
be balanced after the segment rebalancing (\step\ref{enu:balance-seg}).
After that, if one chain $S_{i}[0..k]$ is longer than the other chain
$S_{j}[0..k']$, the first chain rebalancing iteration transfers all
items in $S_{i}[k]$ to the other chain (\step\ref{enu:transfer}),
leaving $S_{i}[k]$ empty. If $S_{j}[k]$ remains non-empty, then
both chains have length $k$ and we are done. Otherwise, $S[k]$ would
be removed, and then the second chain rebalancing iteration transfers
all items in $S_{i}[k-1]$ to the other chain, which is at least $c(k-1)\ge\sum_{a=0}^{k-2}t(a)$
items, so every segment in $S_{j}[k'..k-2]$ would be filled to target
size, and hence both chains would have length $(k-1)$.
\end{proof}
Next we bound the work done by $\fs_{1}$.
\begin{defn}[Inward Order]
\label{def:inward-order} Take any sequence $A$ of map operations
and let $I$ be the set of items accessed by operations in $A$. Define
the \textbf{inward distance} of an operation in $A$ on an item $x$
to be $\min(\f{size}(I_{\le x}),\f{size}(I_{\ge x}))$. We say that
$A$ is in \textbf{inward order} iff its operations are in order of
(non-strict) increasing inward distance. Naturally, we say that $A$
is in \textbf{outward order} iff its reverse is in inward order.

\end{defn}
\begin{thm}[$\fs_{1}$ Work]
\label{thm:FS1-work} $\fs_{1}$ takes $O(F_{L})$ work for some
linearization $L$ of $\fs_{1}$-calls in $D$.\end{thm}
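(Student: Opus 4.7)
The plan is to choose a linearization $L$ that makes every operation's finger distance in $L$ at least a constant fraction of its pre-batch finger distance, and then charge each phase of $\fs_1$'s per-batch work to $F_L$. Concretely, I linearize batches in the order the parallel buffer is flushed, and within each batch I split operations into three consecutive blocks: (A)~all searches and updates (in any order among themselves), (B)~all insertions in \emph{outward order} (those whose target segment lies in a deeper section first, and within a section those closer to the centre first), and (C)~all deletions in outward order. This is consistent with the group-operation semantics that $\fs_1$'s separation phase uses, in which operations are grouped by access type and the groups are executed in the order search, update, insertion, deletion. The outward ordering ensures that when an operation on item $x$ is reached in $L$, no prior operation has removed an item between $x$ and the nearer finger or inserted an item strictly between $x$ and the centre; hence if $x$'s pre-batch target segment is $S_i[j]$ then $r_L(x) \ge r_0(x) \ge c(j-1) = 2^{2^j}$, so $\log r_L(x) + 1 = \Omega(2^j)$.

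With this in hand, I will charge the four phases separately. The preliminary phase at each first-slab section $S[k]$ costs $O(\log c(k)) = O(2^{k+1})$ per surviving operation, which sums geometrically to $O(2^{j+1})$ for an operation whose target lies in section $S[j]$ (capped at $O(\log b) = O(2^m)$ for residual operations); either way this is $O(\log r_L + 1)$ per operation. The execution phase at $S[j]$ costs $O(\log c(j)) = O(2^{j+1})$ per distinct-item group and is charged to one representative operation in that group, whose finger contribution is already $\Omega(2^j)$. For the rebalancing phase I extend the $\fs_0$ credit scheme: each operation deposits $O(1)$ credits at its target segment, and each imbalanced segment $S_i[k]$ with $q$ items beyond its target capacity holds $q\cdot 2^{-k}$ credits; by \ref{lem:FS1-seg-rebalancing} and \ref{lem:FS1-chain-rebalancing} each segment or chain rebalance costs $O(q)$, is paid for by the stored credits, and leaves at least half of them to seed the next step of the cascade, so rebalancing amortises to $O(1)$ per operation.

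The most delicate piece is the separation phase, whose entropy-sort costs $O\bigl(b + \sum_x q_x \log(b/q_x)\bigr)$. Ordering the distinct items of the batch by inward distance (\ref{def:inward-order}), the $i$-th such item has inward distance $\ge \lceil i/2 \rceil$, so its $q_i$ operations contribute $\Omega(q_i(\log i + 1))$ to $F_L$. A Kraft--Jensen argument (take $f_i = \log i + 2\log\log(i+2) + O(1)$ so that $\sum_i 2^{-f_i}$ converges, then apply Jensen's inequality to absorb the iterated logarithm) yields
\[
\sum_i q_i \log(b/q_i) \le O\Bigl(\sum_i q_i \log i + b\Bigr),
\]
so the sort cost is absorbed into the batch's contribution to $F_L$. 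Summing over batches then gives total work $O(F_L)$.

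The main obstacle is the linearization step: verifying that the outward-first ordering simultaneously (a)~reproduces the results reported by $\fs_1$'s grouped execution and (b)~guarantees $r_L(x) \ge c(j-1)$ for every operation on an item whose pre-batch target segment is $S_i[j]$, under arbitrary mixes of insertions and deletions in the same batch. The Kraft--Jensen entropy bound is the subtler auxiliary ingredient: it needs to be checked carefully in the low-entropy regime where a single huge group dominates the batch and the $+b$ correction is essential.
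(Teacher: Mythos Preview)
Your linearization and the charging of the preliminary, execution, and rebalancing phases are essentially the paper's argument (the paper likewise shows $r_L\ge r_0\ge\Theta(c(j-1))$ for an operation whose target is section $j$, and uses the same credit scheme for rebalancing). The Kraft--Jensen manoeuvre is a pleasant alternative to the paper's Maximum Finger Bound lemma: both reduce the entropy cost to $O\bigl(\sum_i q_i\log i+b\bigr)$ with $i$ indexing the distinct batch items by inward rank.

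The genuine gap is in bridging that entropy bound to $F_L$. You claim ``the $i$-th such item \ldots\ contribute[s] $\Omega(q_i(\log i+1))$ to $F_L$,'' i.e.\ $r_L(x)\ge\Omega(i)$ for the item of inward rank $i$ in the batch. But your linearization only gives $r_L\ge r_0$, and $r_0$ (the pre-batch finger distance in the \emph{map}) is unrelated to the inward rank in the \emph{batch}. Concretely: take the map $\{0,M\}$ and a batch inserting $1,2,\ldots,b$. All targets are section~$0$; your outward rule (``closer to the centre first'') inserts $b,b-1,\ldots,1$, and each insertion has $r_L=2$, so $F_L=\Theta(b)$, whereas the entropy is $b\log b$. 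Hence the separation-phase cost $O(H+b)=\Theta(b\log b)$ is not $O(F_L)$ under your $L$. The property you need, $r_L(x)\ge(\text{inward rank of }x\text{ in }G_j)$, is exactly what the paper engineers by ordering insertions in \emph{inward} order (finger-side first) and deletions in outward order; with inward-ordered insertions, when $x$ of inward rank $d$ is reached, the $d-1$ batch items on $x$'s short side are already in the map, forcing $r_L(x)\ge d$. Your choice of outward order for insertions is precisely the wrong direction for this step (it was the right direction for the $r_L\ge r_0$ claim used in the execution phase, which is why the paper actually uses two different linearizations $L^*$ and $L'$ and bounds the work by $O(\max(F_{L^*},F_{L'}))$). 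Fixing the insertion order to inward—and splitting the entropy per access type as the paper does—closes the gap.
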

\begin{proof}
Let $L^{*}$ be a linearization of $\fs_{1}$-calls in $D$ such that:
\begin{itemize}
\item Operations on $\fs_{1}$ in earlier input batches are before those
in later input batches.
\item The operations within each batch are ordered as follows:

\begin{enumerate}
\item Ineffectual operations are before effectual/residual operations.
\item Effectual/residual operations are in order of access type (deletions
last).
\item Effectual insertions are in inward order, and \label{enu:FS1-lin-effect-del}
effectual deletions are in outward order.
\item Operations in each group-operation are consecutive and in the same
order as in that group.
\end{enumerate}
\end{itemize}
Let $L'$ be the same as $L^{*}$ except that in \point \ref{enu:FS1-lin-effect-del}
effectual deletions are ordered so that those on items in earlier
sections are later (instead of outward order). Now consider each input
batch $B$ of $b$ operations on $\fs_{1}$. 

In the preliminary and execution phases, each section $S[a]$ takes
$O\left(2^{a}\right)$ work per operation. Thus each operation in
$B$ with finger distance $r$ according to $L'$ on an item $x$
that was found to fit in section $S[k]$ takes $O\left(\sum_{a=0}^{k}2^{a}\right)=O\left(2^{k}\right)\wi O(\log r+1)$
work, because $r\ge\sum_{a=0}^{k-1}c(a)+1\ge\frac{1}{2}c(k-1)$ if
$S[k]$ is in the first slab (since earlier effectual operations in
$B$ did not delete items in $S[0..k-1]$), and $r\ge\sum_{a=0}^{k-1}c(a)-b\ge\frac{1}{2}c(k-1)$
if $S[k]$ is in the final slab (since $b\le\frac{1}{2}c(m-1)$).
Therefore these phases take $O(F_{L'})$ work in total.

Let $G$ be the effectual operations in $B$ as a subsequence of $L^{*}$.
Entropy-sorting $G$ takes $O(H+b)$ work (Appendix \ref{thm:par-esort-cost}),
where $H$ is the entropy of $G$ (i.e.~$H=\sum_{i=1}^{b}\log\frac{b}{q_{i}}$
where $q_{i}$ is the number of occurrences of the $i$-th operation
in $G$). Partition $G$ into $3$ parts: searches/updates $G_{1}$
and insertions $G_{2}$ and deletions $G_{3}$. And let $H_{j}$ be
the entropy of $G_{j}$. Then $H=\sum_{j=1}^{3}H_{j}+\sum_{i=1}^{b}\log\frac{b}{b_{i}}$
where $b_{i}$ is the number of operations in the same part of $G$
as the $i$-th operation in $G$, and $\sum_{i=1}^{b}\log\frac{b}{b_{i}}\le b\cdot\log\left(\frac{1}{b}\sum_{i=1}^{b}\frac{b}{b_{i}}\right)=b\cdot\log3$
by Jensen's inequality. Thus entropy-sorting $G$ takes $O\left(\sum_{j=1}^{3}H_{j}+b\right)$
work. Let $C_{j}$ be the cost of $G_{j}$ according to $F_{L^{*}}$.
Since each operation in $G_{j}$ has inward distance (with respect
to $G_{j}$) at most its finger distance according to $L^{*}$, we
have $H_{j}\in O(C_{j})$ (Appendix \ref{thm:max-finger}), and hence
entropy-sorting takes $O(F_{L^{*}})$ work in total.

Sorting the residual operations in the batch $B$ (that do not fit
in the first slab) takes $O(\log b)\wi O(\log r)$ work per operation
with finger distance $r$ according to $L^{*}$, since $r\ge c(m-1)\ge2b$.

Therefore the separation phase takes $O(F_{L^{*}})$ work in total.
Finally, the rebalancing phase takes $O(1)$ amortized work per operation,
as we shall prove in the next lemma. Thus $\fs_{1}$ takes $O(\max(F_{L^{*}},F_{L'}))$
total work.\end{proof}
\begin{lem}[$\fs_{1}$ Rebalancing Work]
\label{lem:FS1-rebalance-work} The rebalancing phase of $\fs_{1}$
takes $O(1)$ amortized work per operation.\end{lem}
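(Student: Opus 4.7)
The plan is to mirror the potential argument used for $\fs_0$, with adjustments for the fact that shifting items between segments is now done by split/join on a batch-parallel map (each costing $O(\log n)$ work) and that the rebalancing structure of $\fs_1$ is richer (an ``overflow'' case that pushes items one level up, a ``fill'' case that cascades downward, and a final chain-rebalancing pass). As in \ref{sec:ASFS}, I would maintain the credit invariant that every segment $S_i[k]$ which is currently $q$ items beyond its target capacity stores at least $q\cdot 2^{-k}$ credits, and give each operation in the batch $O(1)$ credits on entry to deposit at the segment in which it is executed; this accounts for the extra credit required by the preliminary/execution phases to make the invariant hold at the start of the rebalancing phase, since in one batch any segment can only move $O(b)$ past its capacity.

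Next I would treat the two sub-steps of \step\ref{enu:balance-seg} separately. For the \emph{overflow} case (\step\ref{enu:overflow}), the segment $S_i[k-1]$ has some $q>c(k-1)$ items above target, so it carries at least $q\cdot 2^{-k+1}$ credits, while the split/join work is $O(\log(t(k)+q))\subseteq O(2^k+\log q)$; the invariant $q>c(k-1)=2^{2^k}$ makes $q\cdot 2^{-k+1}$ dominate both that work \emph{and} the $q\cdot 2^{-k}$ credits that must be handed to $S_i[k]$, exactly as in $\fs_0$. For the \emph{fill} case (\step\ref{enu:fill}), the cost of filling $S_i[k'..k-1]$ from $S_i[k]$ is the sum of split/joins at each level $j\in[k',k-1]$, totalling $O(\sum_{j=k'}^{k-1}2^{j})=O(2^k)$; I would pay for this using the credits already sitting on the underfull segments $S_i[k'..k-1]$, each of which has at least $c(j)\cdot 2^{-j}=2^{2^{j+1}-j}$ credits, so the credits at $S_i[k-1]$ alone dwarf $O(2^k)$. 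After the fill, $S_i[k]$ may itself become underfull, and I must show that the leftover credit at the filled segments (or any fresh credit freed by the overflow case) covers the new $O(c(k)\cdot 2^{-k})$ credit that $S_i[k]$ needs; this follows from the doubly-exponential gap between $c(k-1)\cdot 2^{-(k-1)}$ and $c(k)\cdot 2^{-k}$.

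Finally, the chain rebalancing (\step\ref{enu:balance-chain}) has at most two iterations by \ref{lem:FS1-chain-rebalancing}, each costing $O(\log n)$, which I would absorb into the segment-rebalancing step that created or removed the last section $S[k]$; since a new section is created only when $S_i[k-1]$ was overfull, the credits released there are enough to cover both the creation cost and the at-most-two fills executed in the chain phase. Summing over segments, each batch of $b$ operations amortizes to $O(b)$ rebalancing work, as required.

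The step I expect to be the main obstacle is the fill case: unlike the one-level cascade in $\fs_0$, a single fill touches all of $S_i[k'..k-1]$ and simultaneously alters $S_i[k]$, so the bookkeeping must show that (i) the credits already present at the $(k-k')$ underfull segments cover the geometric sum of split/join costs, and (ii) the operation leaves $S_i[k]$ with enough credits for its (possibly new) deficit without having to double-dip into credits that are already earmarked for paying for the current fill. I would handle (ii) by picking the invariant constants so that the credit on segment $j$ is a factor of $2$ larger than what the next segment can demand, which forces the entire argument to close inductively up the chain.
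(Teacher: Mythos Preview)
Your overall plan---the same credit invariant $q\cdot 2^{-k}$ per segment, split into overflow/fill cases, then absorb chain rebalancing into the segment step that triggered it---matches the paper's proof. But two of your accounting steps do not go through as written.

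First, in the overflow case you bound the shift cost by $O(\log(t(k)+q))$, implicitly assuming the receiving segment $S_i[k]$ has size $O(t(k))$. This is false after a large batch: a single batch of $b$ operations can push $S_i[k]$ to size $t(k)+c(k)+b$, so the join/split cost picks up a $\log(t(k)+q')$ term with $q'$ possibly much larger than $c(k)$, and the credits at $S_i[k-1]$ (which depend only on $q$, not $q'$) cannot cover it. The paper handles this by a forwarding trick: if $q'>2\cdot t(k)$, then $S_i[k]$ itself will overflow in the next iteration, so the $\log(t(k)+q')$ term is charged as the \emph{first} term of that subsequent shift (where it becomes $O(\log|q'|)$ and is paid by $S_i[k]$'s own credits). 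Without this observation the per-shift bound $O(|q|\cdot 2^{-k})$ is simply not valid, and the same issue hits the first shift of your fill case (the one from $S_i[k]$ into $S_i[k-1]$).

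Second, your fill-case bookkeeping for the new credit at the source segment is backwards. You write that $S_i[k]$ needs ``$O(c(k)\cdot 2^{-k})$'' new credit and appeal to the ``doubly-exponential gap between $c(k-1)\cdot 2^{-(k-1)}$ and $c(k)\cdot 2^{-k}$'', but $c(k)\cdot 2^{-k}$ is the \emph{larger} of the two, so the credits on $S_i[k-1]$ cannot cover a deficit of that order. The correct observation (and what the paper uses) is that the \emph{increase} in deficit at the source equals exactly the number of items shifted out, namely $\sum_j u_i(j)$; hence the new credit required is $\bigl(\sum_j u_i(j)\bigr)\cdot 2^{-k}\le \tfrac12\sum_j u_i(j)\cdot 2^{-j}$, i.e.\ half the credits already stored on the underfull segments. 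The relevant gap is the factor-of-two between $2^{-j}$ and $2^{-k}$, not the doubly-exponential gap between $c(k-1)$ and $c(k)$.
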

\begin{proof}
We shall maintain the credit invariant that each segment $S_{i}[k]$
with $q$ items beyond its target capacity has at least $q\cdot2^{-k}$
stored credits. The execution phase clearly increases the total stored
credits needed by at most $1$ per operation, which we can pay for.
We now show that the invariant can be preserved after the segment
rebalancing and the chain rebalancing.

During the segment rebalancing (\step\ref{enu:balance-seg}), each
shift is performed between some neighbouring segments $S_{i}[k]$
and $S_{i}[k+1]$, where $S_{i}[k]$ has $t(k)+q$ items and $S_{i}[k+1]$
has $t(k+1)+q'$ items just before the shift, and $|q|>c(k)$. The
shift clearly takes $O(\log(t(k)+q)+\log(t(k+1)+q'))$ work. If $q'<2\cdot t(k+1)$
then this is obviously just $O(\log t(k)+\log|q|)$ work. But if $q'>2\cdot t(k+1)$,
then $S_{i}[k+1]$ will also be rebalanced in \step\ref{enu:overflow}
of the next segment balancing iteration, since at most $\sum_{a=0}^{k}t(a)\le t(k+1)$
items will be shifted from $S_{i}[k+1]$ to $S_{i}[k]$ in \step\ref{enu:fill},
and hence $S_{i}[k+1]$ will still have at least $q'$ items. In that
case, the second term $O(\log(t(k+1)+q')))$ in the work bound for
this shift can be bounded by the first term of the work bound for
the subsequent shift from $S_{i}[k+1]$ to $S_{i}[k+2]$, since $\log(t(k+1)+q')\in O(\log q')$.
Therefore in any case we can treat this shift as taking only $O(\log t(k)+\log|q|)\wi O(\log|q|)\wi O\left(|q|\cdot2^{-k}\right)$
work.

Now consider the two kinds of segment rebalancing:
\begin{itemize}
\item \textit{Overflow}: \step\ref{enu:overflow} shifts items from overfull
$S_{i}[k]$ to $S_{i}[k+1]$. Suppose that $S_{i}[k]$ has $t(k)+u$
items just before the shift. After the shift, $S_{i}[k]$ has target
size and needs no stored credits, and $S_{i}[k+1]$ would need at
most $u\cdot2^{-(k+1)}$ extra stored credits. Thus the $u\cdot2^{-k}$
credits stored at $S_{i}[k]$ can pay for both the shift and the needed
extra stored credits.
\item \textit{Fill}: \step\ref{enu:fill} fills some underfull segments
$S_{i}[k'..k]$ using $S_{i}[k+1]$. Suppose that $S_{i}[j]$ has
$t(j)-u_{i}(j)$ items just before the fill, for each $j\in[k'..k]$.
After the fill, every segment in $S_{i}[k'..k]$ has size within target
capacity and needs no stored credits, and $S_{i}[k+1]$ needs at most
$\left(\sum_{j=k'}^{k}u_{i}(j)\right)\cdot2^{-(k+1)}\le\frac{1}{2}\sum_{j=k'}^{k}\left(u_{i}(j)\cdot2^{-j}\right)$
extra stored credits, which can be paid for by using half the credits
stored at each segment in $S_{i}[k'..k]$. The other half of the $u_{i}(j)\cdot2^{-j}$
credits stored at $S_{i}[j]$ suffices to pay for the shift from $S_{i}[j+1]$
to $S_{i}[j]$, for each $j\in[k'..k]$.
\end{itemize}
The chain rebalancing (\step\ref{enu:balance-chain}) is performed
only when segment rebalancing creates or removes a segment and makes
one chain longer than the other. Consider the biggest segment $S_{i}[k]$
that was created or removed. If $S_{i}[k]$ was created, it must be
due to overflowing $S_{i}[k-1]$ to $S_{i}[k]$ in \step\ref{enu:overflow},
and hence the shift from $S_{i}[k-1]$ to $S_{i}[k]$ already took
$\Theta\left(2^{k}\right)$ work. If $S_{i}[k]$ was removed, it must
be due to filling some segments $S_{i}[k'..k-1]$ using $S_{i}[k]$
in \step\ref{enu:fill}, but $S_{i}[k-1]$ must have had at least
$c(k-1)$ items before the execution phase, and at least half of them
were either deleted or shifted to $S_{i}[k-2]$, and hence either
the deletions can pay $\Theta\left(2^{k}\right)$ credits, or the
shift to $S_{i}[k-2]$ already took $\Theta\left(2^{k}\right)$ work.
Therefore in any case we can afford to ignore up to $\Theta\left(2^{k}\right)$
work done by chain rebalancing.

Now observe that the chain rebalancing performs at most two transfers
(\step\ref{enu:transfer}) of items from the last segment of the
longer chain $S_{i}[0..k]$ to the shorter chain $S_{i'}[0..k']$,
by the \nameref{lem:FS1-chain-rebalancing} (\ref{lem:FS1-chain-rebalancing}).
Each transfer takes $O(k)$ work to create the new segments and $O(1)$
work to shift $S_{i}[k]$ over to $S_{i'}[k]$, and then fills underfull
segments in $S_{i'}[k'..k-1]$ using $S_{i'}[k]$. The fill takes
$O\left(2^{k}\right)$ work for the shift from $S_{i'}[k]$ to $S_{i'}[k-1]$,
and takes $O\left(2^{j}\right)$ work for each shift from $S_{i'}[j]$
to $S_{i'}[j-1]$ for each $j\in[k'+1..k-1]$, since $S_{i'}[j]$
has at most $\sum_{a=0}^{j}t(a)\le t(j+1)$ items just before the
shift. Therefore each transfer takes $O\left(2^{k}\right)$ work in
total, and hence we can ignore all the work done by the chain rebalancing.
\end{proof}
And now we turn to bounding the span of $\fs_{1}$.
\begin{thm}[$\fs_{1}$ Span]
\label{thm:FS1-span} $\fs_{1}$ takes $O\left(\frac{N}{p}+d\cdot\left((\log p)^{2}+\log n\right)\right)$
span, where $N$ is the number of operations on $\fs_{1}$, and $n$
is the maximum size of $\fs_{1}$, and $d$ is the maximum number
of $\fs_{1}$-calls on any path in the program DAG $D$.\end{thm}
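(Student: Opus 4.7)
The plan is to bound the $\fs_1$-span by summing the span of processing each individual batch along the longest path through $\fs_1$-nodes, and then relating the number of such batches to $d$. Since $\fs_1$ processes batches sequentially, the span-maximizing path may be taken to pass through a sequence of batches $B_{i_1}, \ldots, B_{i_k}$ in processing order (via buffer-reactivation edges connecting the end of one batch's processing to the start of the next), contributing $\sum_j \sigma(B_{i_j})$, where $\sigma(B)$ denotes the span of processing batch $B$.

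For a batch of size $b$, I would analyze the four phases of \ref{sub:FS1-desc}. In the \emph{preliminary phase}, the first-slab sections $S[0], \ldots, S[m-1]$ are processed in series with $m = O(\log\log b)$; each section's unsorted batch search runs on its two segments in parallel with span $O(\log b \cdot \log c(k)) = O(\log b \cdot 2^{k+1})$, so the total is $O(\log b \cdot 2^m) = O((\log b)^2)$. The \emph{separation phase} contributes $O((\log b)^2)$ via the entropy-sort and the standard sort of residual operations. In the \emph{execution phase}, each section's sorted batch access has span $O(\log b + \log c(k))$; the first slab contributes $O((\log b)^2)$ and the final slab contributes $O(\log b \cdot \log\log n + \log n)$, both absorbed into $O((\log b)^2 + \log n)$. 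The \emph{rebalancing phase} cascades through the chain, with each shift being a 2-3 tree split/join of span $O(\log c(k+1)) = O(2^{k+2})$; the telescoping sum across the cascade yields $O(\log n)$. Hence $\sigma(B) = O((\log b)^2 + \log n)$.

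Using the elementary inequality $(\log b)^2 \le 2(\log p)^2 + O(b/p)$ for all $b \ge 1$ (expand $((\log p) + \log(b/p))^2$ for $b > p$ and invoke $(\log x)^2 \le x$ for $x \ge 16$), the per-batch span becomes $O((\log p)^2 + b/p + \log n)$. Summing over the $k$ batches on the span-maximizing path, with $\sum_j b_{i_j} \le N$, yields $s_{\fs_1} = O\bigl(k \cdot ((\log p)^2 + \log n) + N/p\bigr)$. The \textbf{main obstacle} is to show that $k = O(d)$: between any two consecutive batches on this path, the triggering call of the later batch must enter a newly empty buffer, and its enabling in the execution DAG can be traced---using the parallel-buffer semantics from the appendix---through a chain of dependencies back to a call completed in the previous batch (otherwise that call would have been coalesced into the earlier batch rather than starting a new one). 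Stitching these dependencies together exhibits a program DAG path containing at least one distinct $\fs_1$-call for each of the $k$ batches, yielding $k \le d$ and the claimed span bound $O(N/p + d \cdot ((\log p)^2 + \log n))$.
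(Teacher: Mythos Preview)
Your per-batch span analysis matches the paper's essentially line for line: both obtain $\sigma(B)=O((\log b)^2+\log n)$ by summing the phase spans, and both then convert via $(\log b)^2\in O(b/p+(\log p)^2)$.

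Where you diverge is the final summation, and here your argument has a genuine gap. You try to bound the number $k$ of batches on the span-critical $E$-path by $d$, arguing that the triggering call of each new batch ``can be traced through a chain of dependencies back to a call completed in the previous batch (otherwise that call would have been coalesced into the earlier batch).'' This inference is wrong: batch boundaries are fixed by \emph{scheduling}, not by $D$-dependencies. A call can miss batch $i$'s flush simply because the scheduler had not yet executed that ready node, even though in $D$ it is independent of every operation in batch~$i$. Concretely, take $d=1$ with all $N$ calls pairwise independent in $D$; an adversarial schedule can release them one at a time so that each lands in its own singleton batch, yet no $D$-path contains more than one $\fs_1$-call, so your stitching cannot produce a $D$-path with $k$ calls.

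The paper avoids this trap by arguing per \emph{call} along a $D$-path rather than per batch along an $E$-path. It observes that any two $\fs_1$-calls on the same $D$-path necessarily fall in distinct batches (the later cannot be issued until the earlier returns), and that the $\fs_1$-span attributable to a single call is at most $s(b')+s(b)$: the call sits in the buffer while the immediately preceding batch (size $b'$) runs, then its own batch (size $b$) runs. Summing this over the at most $d$ calls on any $D$-path, with each of $\sum b$ and $\sum b'$ bounded by $N$, gives the stated bound directly. The extra $s(b')$ term---which you do not carry---is exactly what lets the paper absorb one preceding batch per call without ever claiming that every batch on the critical $E$-path corresponds to a call on one $D$-path.
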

\begin{proof}
Let $s(b)$ denote the maximum span of processing an input batch of
size $b$ (that has been flushed from the parallel buffer). Take any
input batch $B$ of size $b$. We shall bound the span taken by $B$
in each phase.

The preliminary phase takes $O\left(\log b\cdot2^{k}\right)$ span
in each first slab segment $S[k]$, adding up to $O\left((\log b)^{2}\right)$
span. The separation phase also takes $O\left((\log b)^{2}\right)$
span, by \nameref{thm:par-esort-cost} (\ref{thm:par-esort-cost}).
The execution phase takes $O\left(\log b+2^{k}\right)$ span in each
segment $S[k]$, adding up to $O\left(\log b\cdot\log\log b+\log n\right)$
span. Returning the results for each group-operation takes $O(\log b)$
span.

The rebalancing phase also takes $O\left(\log b+2^{k}\right)$ span
for each segment $S[k]$ processed in \step\ref{enu:balance-seg},
because each shift between segments with total size $q$ takes $O(\log q)$
span, and filling $S_{i}[k'..k-1]$ using $S_{i}[k]$ in \step\ref{enu:fill}
takes $O\left(\log\left(b+\sum_{a=k'}^{k}t(a)\right)\right)\wi O\left(\log b+2^{k}\right)$
span for the first shift from $S_{i}[k]$ to $S_{i}[k-1]$ and then
$O\left(\log\sum_{a=k'}^{j}t(a)\right)\wi O\left(2^{j}\right)$ span
for each subsequent shift from $S_{i}[j+1]$ to $S_{i}[j]$. Similarly,
the chain rebalancing in \step\ref{enu:balance-chain} takes $O(\log b+\log n)$
span, because it performs at most two iterations by \nameref{lem:FS1-chain-rebalancing}
(\ref{lem:FS1-chain-rebalancing}), each of which takes $O(\log b+\log n)$
span to fill the underfull segments of the shorter chain using its
last segment.

Therefore $s(b)\in O\left((\log b)^{2}+\log n\right)\wi O\left(\frac{b}{p}+(\log p)^{2}+\log n\right)$,
since $(\log b)^{2}\in O\left(\frac{b}{p}\right)$ if $b\ge p^{2}$.

Each batch $B$ of size $b$ waits in the buffer for the preceding
batch of size $b'$ to be processed, taking $O\left(s(b')\right)$
span, and then $B$ itself is processed, taking $O(s(b))$ span, taking
$O(s(b)+s(b'))$ span in total. Since over all batches each of $b,b'$
will sum up to at most the total number $N$ of $\fs_{1}$-calls,
and there are at most $d$ $\fs_{1}$-calls on any path in the program
DAG $D$, the span of $\fs_{1}$ is $O\left(\frac{N}{p}+d\cdot\left((\log p)^{2}+\log n\right)\right)$.
\end{proof}

\section{Faster Parallel Finger Structure}

\label{sec:PFS2}

Although $\fs_{1}$ has optimal work and a small span, it is possible
to reduce the span even further, intuitively by pipelining the batches
in some fashion so that an expensive access in a batch does not hold
up the next batch.

As with $\fs_{1}$, we need to split the sections into two slabs,
but this time we fix the first slab at $m$ sections where $m\in\log\Theta(\log p)$
so that we can pipeline just the final slab. We need to allow big
enough batches so that operations that are delayed because earlier
batches are full can count their delay against the total work divided
by $p$. But to keep the span of the sorting phase down to $O\left((\log p)^{2}\right)$,
we need to restrict the batch size. It turns out that restricting
to batches of size at most $p^{2}$ works.

We cannot pipeline the first slab (particularly the rebalancing),
but the preliminary phase and separation phase would only take $O\left((\log p)^{2}\right)$
span. The execution phase and rebalancing phases are still carried
out as before on the first slab, taking $O\left((\log p)^{2}\right)$
span, but execution and rebalancing on the final slab are pipelined,
by having each final slab section $S[k]$ process the batch passed
to it and rebalance the preceding segments $S_{0}[k-1]$ and $S_{1}[k-1]$
if necessary.

To guarantee that this local rebalancing is possible, we do not allow
$S[k]$ to proceed if it is imbalanced or if there are more than $c(k)$
pending operations in the buffer to $S[k+1]$. In such a situation,
$S[k]$ must stop and reactivate $S[k+1]$, which would clear its
buffer and rebalance $S[k]$ before restarting $S[k]$. It may be
that $S[k+1]$ also cannot proceed for the same reason and is stopped
in the same manner, and so $S[k]$ may be delayed by such a stop for
a long time. But by a suitable accounting argument we can bound the
total delay due to all such stops by the total work divided by $p$.
Similarly, we do not allow the first slab to run (on a new batch)
if $S[m-1]$ is imbalanced or there are more than $c(m-1)$ pending
operations in the buffer to $S[m]$.

Finally, we use an odd-even locking scheme to ensure that the segments
in the final slab do not interfere with each other yet can proceed
at a consistent pace. The details are below.

\subsection{Description of $\protect\fs_{2}$}

\label{sub:FS2-desc}

\begin{figure}[H]
\begin{centerbox}
\begin{roundedboxinfloat}
\begin{centerbox}
\noindent %
\begin{tabular}{l}
$\fs_{2}$:\quad{}$\grey{\tbox{Parallel\ buffer}}\xrightarrow{\text{input batch}}\tbox{Feed\ buffer}\xrightarrow{\text{size-\ensuremath{p^{2}} cut batch}}\tbox{First\ slab}\xrightarrow[\smash{\qquad\raisebox{.84em}{\tilt{\Rsh}{180}}}]{\stbox{Sort}\ }\tbox{Final\ slab}$\tabularnewline
\noalign{\vskip1em}
First slab:\quad{}$\to\boxed{S[0]}\to\boxed{S[1]}\to\cdots\to\boxed{S[m-1]}\to$\quad{}where
$m=\ceil{\log\log\left(5p^{2}\right)}$\tabularnewline
\noalign{\vskip1em}
Final slab:\quad{}$\begin{matrix}\vsp{.4}\\
\grey{\boxed{S[m-1]}}
\end{matrix}\begin{matrix}\stbox{Lock}\\
\grey{\mathrlap{^{1}}\nearrow}\quad\ \nwarrow\mathllap{^{1}}\\
{}\xrightarrow[\stbox{Buffer}]{}{}
\end{matrix}\begin{matrix}\vsp{.4}\\
\boxed{S[m]}
\end{matrix}\begin{matrix}\stbox{Lock}\\
\mathrlap{^{2}}\nearrow\quad\ \nwarrow\mathllap{^{2}}\\
{}\xrightarrow[\stbox{Buffer}]{}{}
\end{matrix}\begin{matrix}\vsp{.4}\\
\boxed{S[m+1]}
\end{matrix}\begin{matrix}\stbox{Lock}\\
\mathrlap{^{1}}\nearrow\quad\ \nwarrow\mathllap{^{1}}\\
{}\xrightarrow[\stbox{Buffer}]{}{}
\end{matrix}\begin{matrix}\vsp{.4}\\
\boxed{S[m+2]}
\end{matrix}\begin{matrix}\stbox{Lock}\\
\mathrlap{^{2}}\nearrow\quad\ \nwarrow\mathllap{^{2}}\\
{}\xrightarrow[\stbox{Buffer}]{}{}
\end{matrix}\begin{matrix}\vsp{.4}\\
\cdots\,\boxed{S[l]}
\end{matrix}$\tabularnewline
\noalign{\vskip0.5em}
\end{tabular}
\end{centerbox}
\caption{\label{fig:FS2-flow}$\protect\fs_{2}$ Sketch; the final slab is
pipelined, facilitated by locks between adjacent sections}
\end{roundedboxinfloat}
\end{centerbox}
\end{figure}

We shall now give the details (see \ref{fig:FS2-flow}). We will need
the \textbf{bunch} structure (Appendix \ref{def:bunch}) for aggregating
batches, which is an unsorted set supporting both addition of a batch
of new elements within $O(1)$ work/span and conversion to a batch
within $O(b)$ work and $O(\log b)$ span if it has size $b$.

$\fs_{2}$ has the same sections as in $\fs_{1}$, with the \textbf{first
slab} comprising the first $m=\ceil{\log\log\left(5p^{2}\right)}$
sections, and the \textbf{final slab} comprising the other sections.
$\fs_{2}$ uses a \textbf{feed buffer}, which is a queue of bunches
of operations each of size exactly $p^{2}$ except the last (which
can be empty). Whenever $\fs_{2}$ is notified of input (by the parallel
buffer), it reactivates the first slab.

Each section $S[k]$ in the final slab has a \textbf{buffer} before
it (for pending operations from $S[k-1]$), which for each access
type uses an optimal batch-parallel map (Appendix \ref{sub:batch-map})
to store bunches of group-operations of that type, where operations
on the same item are in the same bunch. When a batch of group-operations
on an item is inserted into the buffer, it is simply added to the
correct bunch. Whenever we count operations in the buffer, we shall
count them individually even if they are on the same item. The first
slab and each final slab section also has a \textbf{deferred flag},
which indicates whether its run is deferred until the next section
has run. Between every pair of consecutive sections starting from
after $S[m-1]$ is a \textbf{neighbour-lock}, which is a dedicated
lock (see \ref{sub:primitives}) with $1$ key for each arrow to it
in \ref{fig:FS2-flow}.

Whenever the first slab is reactivated, it runs as follows:
\begin{enumerate}
\item If the parallel buffer and feed buffer are both empty, terminate.
\item Acquire the neighbour-lock between $S[m-1]$ and $S[m]$. (Skip steps
2 to 4 and steps 8 to 10 if $S[m]$ does not exist.)
\item If $S[m-1]$ has any imbalanced segment or $S[m]$ has more than $c(m-1)$
operations in its buffer, set the first slab's deferred flag and release
the neighbour-lock, and then reactivate $S[m]$ and terminate.
\item Release the neighbour-lock.
\item Let $q$ be the size of the last bunch $F$ in the feed buffer. Flush
the parallel buffer (if it is non-empty) and cut the input batch of
size $b$ into small batches of size $p^{2}$ except possibly the
first and last, where the first has size $\min\left(b,p^{2}-q\right)$.
Add that first small batch to $F$, and append the rest as bunches
to the feed buffer.
\item Remove the first bunch from the feed buffer and convert it into a
batch $B$, which we call a \textbf{cut batch}.
\item Process $B$ using the same four phases as in $\fs_{1}$ (\ref{par:FS1-phases}),
but restricted to the first slab (i.e.~execute only the effectual
operations on the first slab, and do segment rebalancing only on the
first slab, and do chain rebalancing only if $S[m]$ had not existed
before this processing). Furthermore, do not update $S[m-1]$'s segments'
sizes until after this processing (so that $S[m]$ in \step\ref{enu:FS2-rebalance}
will not find any of $S[m-1]$'s segments imbalanced until the first
slab rebalancing phase has finished).
\item Acquire the neighbour-lock between $S[m-1]$ and $S[m]$.
\item Insert the residual group-operations (on items that do not fit in
the first slab) into the buffer of $S[m]$, and then reactivate $S[m]$.
\item Release the neighbour-lock.
\item Reactivate itself.
\end{enumerate}
Whenever a final slab section $S[k]$ is reactivated, it runs as follows:
\phantomsection\label{par:FS2-final-slab-run}
\begin{enumerate}
\item Acquire the neighbour-locks (between $S[k]$ and its neighbours) in
the order given by the arrow number in \ref{fig:FS2-flow}.
\item If $S[k]$ has any imbalanced segment or $S[k+1]$ (exists and) has
more than $c(k)$ operations in its buffer, set $S[k]$'s deferred
flag and release the neighbour-locks, and then reactivate $S[k+1]$
and terminate.
\item For each access type, flush and process the (sorted) batch $G$ of
bunches of group-operations of that type in its buffer as follows:

\begin{enumerate}
\item Convert each bunch in $G$ to a batch of group-operations.
\item For each segment $S_{i}[k]$ in $S[k]$, cut out the group-operations
on items that fit in $S_{i}[k]$ from $G$, and perform them (as a
sorted batch) on $S_{i}[k]$, and then fork to return the results
of the operations (according to the order within each group-operation).
\item If $G$ is non-empty (i.e.~has leftover group-operations), insert
$G$ into the buffer of $S[k+1]$ and then reactivate $S[k+1]$.
\end{enumerate}
\item \label{enu:FS2-rebalance} Rebalance locally as follows (essentially
like in $\fs_{1}$):

\begin{enumerate}
\item \label{enu:FS2-balance-seg} For each segment $S_{i}[k]$ in $S[k]$:

\begin{enumerate}
\item If $S_{i}[k-1]$ is overfull, shift items from $S_{i}[k-1]$ to $S_{i}[k]$
to make $S_{i}[k-1]$ have target size.
\item If $S_{i}[k-1]$ is underfull, shift items from $S_{i}[k]$ to $S_{i}[k-1]$
to make $S_{i}[k-1]$ have target size, and then remove $S_{i}[k]$
if it is emptied.
\item If $S_{i}[k]$ is (still) overfull and is the last segment in $S_{i}$,
create a new segment $S_{i}[k+1]$ and reactivate it.
\end{enumerate}
\item \label{enu:FS2-balance-chain} If $S[k]$ is (still) the last section,
but chain $S_{i}$ is longer than chain $S_{j}$:

\begin{enumerate}
\item Create a new segment $S_{j}[k]$ and shift all items from $S_{i}[k]$
to $S_{j}[k]$.
\item If $S_{j}[k-1]$ is (now) underfull, shift items from $S_{j}[k]$
to $S_{j}[k-1]$ to make $S_{j}[k-1]$ have target size.
\item If $S_{j}[k]$ is (now) empty again, remove $S[k]$.
\end{enumerate}
\end{enumerate}
\item If $k=m$, and the first slab is deferred, clear its deferred flag
then reactivate it.
\item If $k>m$, and $S[k-1]$ is deferred, clear its defered flag then
reactivate it.
\item Release the neighbour-locks.
\end{enumerate}

\subsection{Analysis of $\protect\fs_{2}$}

For each computation, we shall define its delay to intuitively capture
the minimum time it needs, including all potential waiting on locks.
Each blocked acquire of a dedicated lock corresponds to an \textbf{acquire-stall
node} $\alpha$ in the execution DAG whose child node $\rho$ is created
by the release just before the successful acquisition of the lock.
Let $\boldsymbol{\Delta}(\alpha)$ be the ancestor nodes of $\rho$
that have not yet executed at the point when $\alpha$ is executed.
Then the \textbf{delay} of a computation $\Gamma$ is recursively
defined as the weighted span of $\Gamma$, where each acquire-stall
node $\alpha$ in $\Gamma$ is weighted by the delay of $\Delta(\alpha)$
(to capture the total waiting at $\alpha$), and every other node
is weighted by its cost.~\footnote{The delay of $\Gamma$ depends on the actual execution, due to the
definition of $\Delta(\alpha)$ for each acquire-stall node $\alpha$
in $\Gamma$. But it captures the minimum time needed to run $\Gamma$
in the following sense: For any computation $\Gamma$, on any step
that executes all ready nodes in the remaining computation $\Gamma'$
(i.e.~the unexecuted nodes in $\Gamma$), the delay of $\Gamma'$
is reduced. (So if a greedy scheduler is used, the number of steps
in which some processor is idle is bounded by the delay.)}

Whenever the first slab or a final slab section runs, we say that
it \textbf{defers} if it terminates with its deferred flag set (i.e.~at
step 2), otherwise we say that it \textbf{proceeds} (i.e.~to step
3) and eventually \textbf{finishes} (i.e.~reaches the last step)
with its deferred flag cleared. We now establish some invariants,
which guarantee that $\fs_{2}$ is always sufficiently balanced.
\begin{lem}[$\fs_{2}$ Balance Invariants]
\label{lem:FS2-balance-inv} $\fs_{2}$ satisfies the following invariants:
\begin{enumerate}
\item When the first slab is not running, every segment in $S_{i}[0..m-2]$
is balanced and $S_{i}[m-1]$ has at most $2\cdot t(m-1)$ items.
\item When a final slab section $S[k]$ rebalances a segment in $S[k-1]$
(in \step\ref{enu:FS2-balance-seg}), it will make that segment have
size $t(k-1)$.
\item Just after the last section finishes without creating new sections,
the segments in $S[k]$ are balanced and both chains have the same
length.
\item Each final slab section $S[k]$ always has at most $2\cdot c(k-1)$
operations in its buffer.
\item Each final slab segment $S_{i}[k]$ always has at most $2\cdot t(k)$
items, and at least $c(k-1)$ items unless $S[k]$ is the last section.
\end{enumerate}
\end{lem}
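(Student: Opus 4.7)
The plan is to prove all five invariants simultaneously by induction on the sequence of state-changing events in $\fs_{2}$'s execution: each run of the first slab or a final-slab section, each buffer insertion by the preceding section, and each creation or removal of a section. The inductive hypothesis is that all five invariants hold just before each event. Two arithmetic facts I will use repeatedly are $c(k) = c(k-1)^{2}$ (so $c(k) \ge 4 c(k-1)$ for $k \ge 1$, and $\sum_{k < K} c(k) \le 2 c(K-1)$) and $c(m-1) \ge 5 p^{2}$ (from $m = \lceil \log\log(5 p^{2}) \rceil$).

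The buffer bound (invariant 4) comes first. When the first slab proceeds and inserts residuals into $S[m]$'s buffer, it has already checked under the neighbour-lock that this buffer contained at most $c(m-1)$ operations; the residuals number at most $p^{2} \le c(m-1)$, so the buffer ends with at most $2 c(m-1)$ operations. The analogous argument, with $S[k]$ and $S[k+1]$ in place of the first slab and $S[m]$, handles every final-slab buffer insertion: $S[k]$'s check guarantees $S[k+1]$'s buffer had at most $c(k)$ operations, and the leftovers that $S[k]$ forwards are bounded by its own buffer size $2 c(k-1) \le c(k)$ (inductive invariant 4 at $S[k]$), so $S[k+1]$'s buffer ends at most $2 c(k)$.

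Invariant 1 then follows by replaying the $\fs_{1}$ analysis on the first slab. When the first slab proceeds, step 3 plus inductive invariant 1 ensure it starts in a balanced state, so the $\fs_{1}$ Segment Rebalancing Invariant (\ref{lem:FS1-seg-rebalancing}), applied through the boundary iteration $k = m$, leaves $S_{i}[0..m-2]$ balanced and $S_{i}[m-1]$ pushed into balance by the shift between $S_{i}[m-1]$ and $S_{i}[m]$ (creating $S_{i}[m]$ if needed, after which the chain rebalancing invoked precisely when $S[m]$ did not previously exist yields invariant 3 in this case); the bound $|S_{i}[m-1]| \le 2 t(m-1) = 4 c(m-1)$ then holds because a balanced $S_{i}[m-1]$ has size at most $3 c(m-1)$. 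Invariant 2 is immediate from the wording of step 4(a) of a final-slab run: each shift is defined to make $S_{i}[k-1]$ have size exactly $t(k-1)$, which is feasible provided $S_{i}[k]$ either has enough items for the fill or room to absorb the overflow; sufficiency of items in the fill case is exactly what the lower bound of invariant 5 at level $k$ supplies. The general case of invariant 3 follows from step 4(b): a discrepancy in chain lengths at the last section is repaired by creating a matching empty segment and filling it from the overflow.

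The hard part is invariant 5. The crux is a flow-control argument: between two consecutive proceed-runs of $S[k+1]$ (the runs that actually rebalance $S_{i}[k]$), at most one proceed-run of $S[k]$ can occur, because after $S[k]$ proceeds it leaves $S[k+1]$'s buffer above $c(k)$ operations, forcing every subsequent $S[k]$ run to defer at step 2 until $S[k+1]$ next proceeds. Immediately after $S[k+1]$ rebalances $S_{i}[k]$ we have $|S_{i}[k]| = t(k) = 2 c(k)$; between then and the next rebalancing, $|S_{i}[k]|$ changes by at most $2 c(k-1)$ from processing $S[k]$'s buffer (invariant 4) plus at most $2 c(k-1)$ from the shift to or from $S_{i}[k-1]$ (using invariant 5 inductively at level $k-1$), keeping $|S_{i}[k]|$ in $[2 c(k) - 4 c(k-1), 2 c(k) + 4 c(k-1)] \subseteq [c(k-1), 2 t(k)]$ because $c(k) \ge 4 c(k-1)$. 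The last-section clause waives the lower bound. The main obstacle here is the bookkeeping for defer-runs, which reactivate the next section without changing state, and verifying via the odd-even neighbour-lock scheme that when $S[k]$'s step 2 reads $S[k+1]$'s buffer size it obtains a view consistent with any concurrent action by $S[k+1]$.
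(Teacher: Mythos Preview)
Your proposal has two genuine gaps, both stemming from the same missing idea: you never use the first clause of step~2 in a final-slab section's run, namely that $S[k]$ defers whenever \emph{its own} segments are imbalanced. The paper's proof of Invariants~2 and~5 hinges on this self-balance check.

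For Invariant~5, your flow-control claim is false: it is not the case that after $S[k]$ proceeds it necessarily pushes $S[k+1]$'s buffer above $c(k)$, because the leftovers forwarded may be few or zero. So there can be many consecutive proceed-runs of $S[k]$ between two proceed-runs of $S[k+1]$. The paper instead argues locally per $S[k]$ run: since $S[k]$ proceeds only when each $S_i[k]$ is balanced (size in $[c(k),\,t(k)+c(k)]$), and the run changes that size by at most $2\cdot c(k-1)$ (buffer processing, via Invariant~4) plus at most $t(k-1)$ (the shift with $S_i[k-1]$, whose prior size is bounded by Invariant~1 or Invariant~5 at level $k-1$), the result lies in $[c(k)-4c(k-1),\,t(k)+c(k)+4c(k-1)]\subseteq[c(k-1),\,2t(k)]$ using $c(k)\ge 5c(k-1)$. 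Between $S[k]$ runs only $S[k+1]$ can touch $S_i[k]$, and by Invariant~2 that sets it to exactly $t(k)$. For Invariant~2 you invoke the lower bound of Invariant~5 at level $k$, namely $c(k-1)$; but $c(k-1)$ items minus up to $2c(k-1)$ deletions cannot supply the $t(k-1)=2c(k-1)$ items needed to fill $S_i[k-1]$. Again the self-balance check is what works: $S_i[k]$ starts with at least $c(k)$ items, and $c(k)-2c(k-1)\ge t(k-1)$.

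Your Invariant~1 argument also misreads the algorithm. In $\fs_2$ the first slab's rebalancing is \emph{restricted to the first slab} (step~7): there is no ``boundary iteration $k=m$'' and no shift between $S_i[m-1]$ and $S_i[m]$ performed by the first slab. Consequently $S_i[m-1]$ is \emph{not} ``pushed into balance''; indeed the invariant only asserts a size bound of $2t(m-1)$, which permits $S_i[m-1]$ to remain overfull. The paper obtains that bound by noting the first slab proceeds only if $S_i[m-1]$ was balanced, then tallying the at most $p^2\le\frac{1}{2}c(m-1)$ items the execution phase adds plus the at most $\sum_{a<m-1}c(a)$ items shifted up during segment rebalancing of $S_i[0..m-2]$.
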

\begin{proof}
Invariant~1 holds as follows: The first slab proceeds only if $S[m-1]$'s
segments are balanced, and from that point until after the rebalancing
phase, its segments are modified only by itself (since $S[m]$ will
not modify $S[m-1]$), and thereafter all its sections except $S[m-1]$
remain unmodified until it processes the next cut batch. Thus the
same proof as for \nameref{lem:FS1-seg-rebalancing} (\ref{lem:FS1-seg-rebalancing})
shows that just before the segment rebalancing (\step\ref{enu:balance-seg})
iteration for $S_{i}[m-1]$, for any imbalanced first slab segment
$S_{i}[k]$, either $k=m-2$ or $S_{i}[k..m-2]$ are underfull. But
note that the cut batch had at most $p^{2}\le\frac{c(m-1)}{2}$ operations,
and so after the execution phase, $S_{i}[m-1]$ had at least $\frac{c(m-1)}{2}$
items unless it was the last segment in its chain. Thus $S_{i}[0..m-2]$
will be made balanced (by \step\ref{enu:overflow} or \step\ref{enu:fill}
in the iteration for $S_{i}[m-1]$, or by \step\ref{enu:balance-chain}).
Similarly, $S[m-1]$ will have at most $t(m-1)+\sum_{a=0}^{m-1}c(a)+p^{2}\le2\cdot t(m-1)$
items in each segment, since $\sum_{a=0}^{m-2}c(a)\le\frac{c(m-1)}{2}$.

Invariant~2 holds as follows. Each final slab section $S[k]$ proceeds
only if its segments each has at least $c(k)$ items unless it is
the last segment in its chain, and its buffer had at most $2\cdot c(k-1)$
operations by Invariant~4. Since $c(k)-2\cdot c(k-1)\ge t(k-1)$,
rebalancing a segment in $S[k-1]$ (\step\ref{enu:FS2-balance-seg})
will make it have size $t(k-1)$.

Invariant~3 holds as follows. The last section $S[l]$ proceeds only
if its segments each has at most $t(k)+c(k)$ items, and its buffer
had at most $2\cdot c(k-1)\le c(k)$ operations by Invariant~4. Thus
if any of its segments $S_{i}[l]$ becomes overfull and it creates
a new section $S[l+1]$, it will subsequently be deferred until $S[l+1]$
runs. And during that run of $S[l+1]$, it will proceed and shift
at most $2\cdot c(k)$ items from $S_{i}[l]$ to $S_{i}[l+1]$, after
which $S_{i}[l+1]$ will not be overfull, and so $S[l+1]$ will not
create another new section $S[l+2]$. Therefore we can assume that
the chains' lengths never differ by more than one segment, and so
the chain rebalancing (\step\ref{enu:FS2-balance-chain}) will make
the chains the same length while ensuring the segments in $S[k-1]$
and $S[k]$ are balanced.

Invariant~4 holds for $S[m]$, because the first slab proceeds only
if $S[m]$'s buffer has at most $c(m-1)$ operations, and only processes
a cut batch of size at most $p^{2}$, hence after that $S[m]$'s buffer
will have at most $p^{2}+c(m-1)<2\cdot c(m-1)$ operations. Invariant~4
holds for $S[k]$ for each $k>m$, because $S[k-1]$ proceeds only
if $S[k]$'s buffer has at most $c(k-1)$ operations, and only processes
a buffered batch of size at most $2\cdot c(k-2)$ by Invariant~4
for $S[k-1]$, hence after that $S[k]$'s buffer will have at most
$c(k-1)+2\cdot c(k-2)\le2\cdot c(k-1)$ operations.

Invariant~5 holds as follows. Each final slab segment $S_{i}[k]$
is modified only when either $S[k]$ or $S[k+1]$ runs, and the latter
never makes $S_{i}[k]$ imbalanced. Consider each $S[k]$ run. It
proceeds only if $S_{i}[k]$ has at most $t(k)+c(k)$ items and at
least $c(k)$ items unless $S[k]$ is the last section, and its buffer
had at most $2\cdot c(k-1)$ operations by Invariant~4, and $S_{i}[k-1]$
had at most $2\cdot t(k-1)$ items by Invariant~5 for $S[k-1]$.
So at most $2\cdot c(k-1)$ items were inserted into $S_{i}[k]$,
and at most $t(k-1)$ items were shifted from $S_{i}[k-1]$ to $S_{i}[k]$.
Also, at most $2\cdot c(k-1)$ items were deleted from $S_{i}[k]$,
and at most $t(k-1)$ items were shifted from $S_{i}[k]$ to $S_{i}[k-1]$.
Thus after that run, $S_{i}[k]$ has at most $t(k)+c(k)+4\cdot c(k-1)\le2\cdot t(k)$
items and at least $c(k)-2\cdot c(k-1)-t(k-1)\ge c(k-1)$ items unless
$S[k]$ was the last section, since $c(k)=c(k-1)^{2}\ge c(m-1)\cdot c(k-1)\ge5\cdot c(k-1)$.
\end{proof}
With these invariants, we are ready to bound the work done by $\fs_{2}$.
\begin{thm}[$\fs_{2}$ Work]
\label{thm:FS2-work} $\fs_{2}$ takes $O(F_{L})$ work for some
linearization $L$ of $\fs_{2}$-calls in $D$.\end{thm}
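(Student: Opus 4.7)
The plan is to mimic the proof of \ref{thm:FS1-work} closely, exploiting the balance invariants (\ref{lem:FS2-balance-inv}) to justify essentially the same linearization and per-phase accounting, adapted to $\fs_2$'s cut-batch processing and pipelined final slab. I would define $L$ by ordering operations first by the cut batch they belong to (in the order cut batches are removed from the feed buffer), and then within each cut batch using the same rules as in the proof of \ref{thm:FS1-work}: ineffectual operations first, then effectual/residual operations by access type (deletions last), with insertions in inward order and effectual deletions in outward order, and operations in each group-operation kept consecutive. Because cut batches have size at most $p^{2}$ and $m=\ceil{\log\log(5p^{2})}$, Invariants~1 and~5 ensure that any item that fits in section $S[k]$ has finger distance $r\ge\frac{1}{2}c(k-1)$ under this linearization, so $2^{k}\in O(\log r+1)$; an auxiliary linearization $L'$ (as in the $\fs_{1}$ proof) handles effectual deletions.

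Next I would bound the work contributed by each phase. The preliminary phase and first-slab execution phase are essentially identical to those of $\fs_{1}$ and contribute $O(F_{L})$ work by the observation above. The separation phase is handled as before: entropy-sorting effectual operations is $O(F_{L})$ by \ref{thm:par-esort-cost} and Jensen's inequality, and residual operations have finger distance at least $c(m-1)\ge2p^{2}$, so sorting them within a cut batch of size at most $p^{2}$ costs $O(\log p)$ per operation, absorbed into $O(\log r+1)$. For the pipelined final slab, each group-operation on an item that ultimately fits in $S[k]$ visits sections $S[m],S[m+1],\ldots,S[k]$; at each visited section $S[j]$ the per-operation work (buffer insertion, sorted batch access on the relevant segment, and re-insertion into the next buffer) is $O(2^{j})$, which telescopes to $O(2^{k})\wi O(\log r+1)$ thanks to Invariant~5 (or Invariant~3 for items fitting in the last section).

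Third, the rebalancing phases must be paid for by a credit argument analogous to \ref{lem:FS1-rebalance-work}: maintain the invariant that every segment $S_{i}[k]$ with $q$ items beyond target capacity holds $q\cdot2^{-k}$ stored credits, with each operation supplying $O(1)$ credits. The first-slab rebalancing argument is the same as for $\fs_{1}$. For the final slab, rebalancing is local (only between $S_{i}[k-1]$ and $S_{i}[k]$), and by Invariant~2 it restores $S_{i}[k-1]$ to size exactly $t(k-1)$; the shift costs $O(\log|q|)\wi O(|q|\cdot2^{-(k-1)})$, payable from the credits at the imbalanced segment, with at most half of the freed credits required by $S_{i}[k]$. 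Section creation and removal are amortized against the accumulated imbalance that triggered them, exactly as in \ref{lem:FS1-rebalance-work}.

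The main obstacle will be verifying that the local-only rebalancing across consecutive, possibly deferred, runs of neighbouring sections does not let the credit invariant break --- in particular, that deferred runs, reactivations, and neighbour-lock interleavings do not allow any transient state in which a segment's excess outstrips the credits it carries. Pinning this down cleanly requires combining the scheduling-driven Invariants~2, 4 and 5 with the credit bookkeeping so that at every instant between runs, every segment carries enough credit for its current excess, and so that the bounded buffer occupancy from Invariant~4 funnels exactly the amount of work each local rebalancing is capable of absorbing.
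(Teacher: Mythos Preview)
Your per-phase accounting and rebalancing credit argument are essentially right and match the paper. The gap is in your choice of linearization.

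You order operations by the cut batch they belong to. But the final slab is pipelined: a residual operation $X$ from cut batch $B_j$ may still be travelling through sections $S[m],\ldots,S[k]$ while operations from later cut batches $B_{j+1},B_{j+2},\ldots$ have already finished in the first slab or in earlier final-slab sections. Consequently the map contents ``just before $X$'' under your cut-batch linearization (namely the state after sequentially applying all of $B_1,\ldots,B_{j-1}$) need not coincide with any actual state of $\fs_2$ during execution. Invariants~1 and~5 of \ref{lem:FS2-balance-inv} are statements about actual run-time states of the segments, so you cannot invoke them to lower-bound the finger distance of $X$ in your linearization. Concretely, deletions from earlier cut batches whose targets lie in $S[k+1..l]$ are already ``applied'' in your $L$-state but not in the actual state when $X$ finishes, and insertions from later cut batches that finished early are applied in the actual state but not in your $L$-state; either discrepancy can shrink the finger distance you compute below what the $O(2^k)$ work requires.

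The paper's fix is to linearize by \emph{finishing time}: operations are ordered by when the run (first-slab run or final-slab section run) that returns their result completes. With this ordering, the set of items in the map just before $X$ in $L^*$ is exactly the set present in $\fs_2$ at the moment $X$ finishes in $S[k]$, so Invariant~5 applies verbatim: $S_0[k-1]$ and $S_1[k-1]$ each hold at least $c(k-2)$ items on either side of $X$'s key, giving $r\ge c(k-2)$ and hence $O(2^k)\subseteq O(\log r)$. The within-batch ordering rules you lifted from \ref{thm:FS1-work} are then applied only to operations finishing in the same first-slab run, which is precisely where they are needed for the entropy-sort bound.

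Once you switch to the finishing-order linearization, the rest of your plan goes through as you outlined; in particular the credit argument for local rebalancing is short (see \ref{lem:FS2-rebalance-work}) because Invariant~2 already guarantees each shift restores $S_i[k-1]$ to exactly $t(k-1)$, so the interleaving of deferred runs is not the obstacle you anticipate.
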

\begin{proof}
We shall use a similar proof outline as for \nameref{thm:FS1-work}
(\ref{thm:FS1-work}). Let $L^{*}$ be a linearization of $\fs_{2}$-calls
in $D$ such that:
\begin{itemize}
\item Operations on $\fs_{2}$ that finish during the first slab run or
some final slab section run are ordered by when that run finished.
\item Operations on $\fs_{2}$ that finish during the same first slab run
are ordered as follows:

\begin{enumerate}
\item Ineffectual operations are before effectual operations.
\item Effectual operations are in order of access type (deletions last).
\item Effectual insertions are in inward order, and \label{enu:FS2-lin-effect-del}
effectual deletions are in outward order (\ref{def:inward-order}).
\end{enumerate}
\item Operations on $\fs_{2}$ in each group-operation are in the same order
as in that group.
\end{itemize}
As before, let $L'$ be the same as $L^{*}$ except that in \point \ref{enu:FS2-lin-effect-del}
effectual deletions are ordered so that those on items in earlier
sections are later (instead of outward order).

Consider each cut batch $B$ of operations processed by the first
slab. By \nameref{lem:FS2-balance-inv} (\ref{lem:FS2-balance-inv}),
just before that processing, every segment in $S_{i}[0..m-2]$ is
balanced, and $S_{i}[m-1]$ has at most $2\cdot t(m-1)$ items. Thus
in both the preliminary phase and the execution phase, each section
$S[k]$ takes $O\left(2^{k}\right)$ work per operation. And this
amounts to $O(\log r+1)$ work per operation in $B$ with finger distance
$r$ according to $L'$, because the operation reaches $S[k]$ only
if $r\ge\sum_{a=0}^{\min(k-1,m-2)}c(k)+1$.

As with $\fs_{1}$, the separation phase takes $O(F_{L^{*}})$ work
in total (see \ref{thm:FS1-work}'s proof).

Now consider each batch $B$ of operations processed by a final slab
section $S[k]$. By \nameref{lem:FS2-balance-inv} (\ref{lem:FS2-balance-inv}),
$B$ has at most $2\cdot c(k-1)$ operations, and each segment in
$S[k]$ always has at most $4\cdot c(k)$ items. So inserting the
operations in $B$ into the buffer took $O\left(2^{k}\right)$ work
per operation. Converting each bunch in $B$ to a group-operation
takes $O(1)$ work per operation. Cutting out and performing and returning
the results of the group-operations that fit takes $O\left(2^{k}\right)$
work per group-operation. And the local rebalancing takes $O\left(2^{k}\right)$
work. Therefore each $S[k]$ run that proceeds to process its buffered
operations takes $O\left(2^{k}\right)$ work per operation. This again
amounts to $O(\log r+1)$ work per operation $X$ in $B$ with finger
distance $r$ according to $L^{*}$ as follows:
\begin{itemize}
\item If $X$ finishes in $S[m]$: At that point the first slab has at least
$c(m-1)-p^{2}\ge4p^{2}$ items in each chain, because $S[m-1]$ was
balanced just before processing the last cut batch. Thus $r\ge4p^{2}$
and hence $X$ costs $O\left(2^{m}\right)\wi O(\log r)$ work.
\item If $X$ finishes in $S[k]$ for some $k>m$: At that point $S[k-1]$
has at least $c(k-2)$ items in each segment by \nameref{lem:FS2-balance-inv}
(\ref{lem:FS2-balance-inv}). Thus $r\ge c(k-2)$ and hence $X$ costs
$O\left(2^{k}\right)\wi O(\log r)$ work.
\end{itemize}
Finally, all the rebalancing takes $O(1)$ amortized work per operation,
which we shall leave to the next lemma.\end{proof}
\begin{lem}[$\fs_{2}$ Rebalancing Work]
\label{lem:FS2-rebalance-work} All the rebalancing steps of $\fs_{2}$
take $O(1)$ amortized work per operation.\end{lem}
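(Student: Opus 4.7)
The plan is to extend the credit-based amortization used for $\fs_0$ and $\fs_1$: maintain the invariant that each segment $S_i[k]$ with $q$ items beyond its target capacity has at least $q\cdot 2^{-k}$ stored credits, give each operation $O(1)$ fresh credits upon its entry into the first slab (paying for the $O(1)$ increase in stored credits caused by the execution phase), and then show that every subsequent rebalancing action can be paid for out of the credits stored at the segment being rebalanced, plus possibly the work charged by a neighboring shift.

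First I would handle the first slab rebalancing. Step 7 of each first-slab run simply executes the same segment/chain rebalancing procedure as $\fs_1$ restricted to sections $S[0..m-1]$, with the only difference that $S[m-1]$'s sizes are published only at the end of the phase (which does not affect the credit argument). The proof of \ref{lem:FS1-rebalance-work} therefore applies verbatim, including its handling of segment creation/removal.

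Next I would handle the final slab segment rebalancing (\step\ref{enu:FS2-balance-seg}). The crucial property, supplied by Invariant~2 of \ref{lem:FS2-balance-inv}, is that when $S[k]$'s run proceeds and finds $S_i[k-1]$ imbalanced with $|q|$ items beyond its target capacity (so $|q|>c(k-1)$), the shift makes $S_i[k-1]$ exactly target size. The shift itself takes $O(\log(t(k-1)+|q|)+\log(t(k)+c(k)+|q|))\subseteq O(\log|q|)\subseteq O\bigl(|q|\cdot 2^{-(k-1)}\bigr)$ work, since $|q|>c(k-1)=\log t(k-1)$ and $t(k)+c(k)+|q|\le O(c(k)+|q|)\le O(|q|^{2})$. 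After the shift, $S_i[k-1]$ needs zero stored credits, and $S_i[k]$ needs at most $|q|\cdot 2^{-k}=\tfrac12\cdot|q|\cdot 2^{-(k-1)}$ additional stored credits. So the $|q|\cdot 2^{-(k-1)}$ credits previously stored at $S_i[k-1]$ pay for both the shift and the additional credits demanded at $S_i[k]$, and the credit invariant is restored.

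Finally I would handle chain rebalancing and section creation/removal in the final slab. By Invariant~3 of \ref{lem:FS2-balance-inv} the chain lengths never differ by more than one segment, so \step\ref{enu:FS2-balance-chain} performs at most one transfer at most $O\bigl(2^{l+1}\bigr)$ work. Each creation of a new segment $S_i[k+1]$ happens only because a shift already moved $\Theta(c(k))$ items into $S_i[k]$ during the preceding segment-rebalancing step, and each removal happens only because the segment was drained down by $\Theta(c(k-1))$ items through deletions or a fill; in either case $\Theta(2^{k+1})$ work (or $\Theta(2^{k+1})$ operation-credits) was already spent, which suffices to absorb the $O(2^{k+1})$ creation/removal cost. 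Deferred runs do no rebalancing and charge nothing; when $S[k+1]$ eventually proceeds and rebalances $S_i[k]$, the credits are exactly those that were stashed by the execution phase that caused the excess.

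The main obstacle I expect is a clean accounting argument across the pipeline: because rebalancing happens in the \emph{next} section's run and only when the run proceeds rather than defers, the excess at $S_i[k-1]$ may persist across several reactivations of $S[k]$. The cleanest way around this is to tie credits to the current excess items at each segment rather than to any particular run, so each credit is consumed at most once when the segment is eventually rebalanced to target size, and the pipelining becomes invisible to the amortization.
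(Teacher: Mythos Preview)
Your proposal is correct and follows essentially the same credit-invariant approach as the paper: the paper's proof likewise reduces first-slab rebalancing to \ref{lem:FS1-rebalance-work}, handles each final-slab shift by noting that Invariant~2 of \ref{lem:FS2-balance-inv} restores $S_i[k-1]$ to target size so the stored credits $|q|\cdot 2^{-(k-1)}$ cover both the $O(\log|q|)$ shift and the $|q|\cdot 2^{-k}$ extra credits at $S_i[k]$, and absorbs the $O(2^k)$ chain-rebalancing cost into the immediately preceding segment shift. One minor slip: you write ``$|q|>c(k-1)=\log t(k-1)$'', but $c(k-1)\neq\log t(k-1)$; what you need (and what actually holds) is $t(k-1)=2c(k-1)<2|q|$, so $\log(t(k-1)+|q|)\in O(\log|q|)$.
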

\begin{proof}
We shall maintain the credit invariant that each segment $S_{i}[k]$
with $q$ items beyond its target capacity has at least $q\cdot2^{-k}$
stored credits. Also, each unfinished operation carries $1$ credit
with it. As with $\fs_{1}$ (see \ref{lem:FS1-rebalance-work}'s proof),
the invariant can be preserved after rebalancing in the first slab.
By the same reasoning, the invariant can also be preserved after segment
rebalancing in the final slab (\step\ref{enu:FS2-balance-seg}),
because any shift between segments $S_{i}[k-1]$ and $S_{i}[k]$ where
$k\ge m$ is performed only when $S_{i}[k-1]$ is imbalanced, and
after that $S_{i}[k-1]$ has size $t(k-1)$ by \nameref{lem:FS2-balance-inv}
(\ref{lem:FS2-balance-inv}). Similarly, the invariant can be preserved
after chain rebalancing in the final slab (\step\ref{enu:FS2-balance-chain}),
because it takes $O\left(2^{k}\right)$ work, which can be ignored
since the last segment rebalancing shift already took $O\left(2^{k}\right)$
work.
\end{proof}
To tackle the span of $\fs_{2}$, we need some lemmas concerning the
span of cutting the input batch and the delay in each slab.
\begin{lem}[$\fs_{2}$ Input Cutting Span]
\label{lem:FS2-input-cut-span} The first slab cuts an input batch
of size $b$ (i.e.~cutting it into small batches and storing them
in the feed buffer) within $O\left(\frac{b}{p}+\log p\right)$ span.\end{lem}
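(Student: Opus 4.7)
The plan is to trace through step~5 of the first slab run (the only step that performs the cutting) and bound the span contribution of each sub-operation, using the complexity guarantees of the parallel buffer (Appendix \ref{sub:par-buffer}) and the bunch structure (Appendix \ref{def:bunch}).

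First I would flush the parallel buffer to obtain the input batch as an array of size $b$; by the parallel buffer's properties this costs $O(\log p)$ span. Next, I compute the number of chunks to be produced: with the first chunk of size $p^{2}-q$ (to top up $F$) and all remaining chunks except possibly the last of size $p^{2}$, the total number of chunks is $c = O(b/p^{2}+1)$. The chunk boundaries are simple arithmetic offsets, so the slicing into $c$ subarrays can be done with a parallel loop in $O(\log c)$ span (no scan or sorting is needed).

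Then I would add the first chunk to the existing last bunch $F$ and create a fresh bunch for each of the remaining $c-1$ chunks; by the bunch structure each such addition/creation is $O(1)$ work and span, so done in parallel these contribute $O(\log c)$ span from the forking. Finally I append the $c-1$ new bunches to the tail of the feed buffer queue: rather than $c-1$ sequential enqueues, I would build the linked segment of new bunches in parallel (in $O(\log c)$ span) and splice it onto the tail in $O(1)$ span, so this step also contributes $O(\log c)$ span. Summing, the total span is $O(\log p + \log c)$.

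It remains to verify $O(\log p + \log c) \wi O(b/p+\log p)$, i.e.~$\log(b/p^{2}+1) \wi O(b/p+\log p)$. Using $\log(x+1)\le x$ for $x\ge 0$, if $b\le p^{3}$ then $\log(b/p^{2}+1)\le\log(p+1) \wi O(\log p)$, and if $b>p^{3}$ then $\log(b/p^{2}+1)\le\log b = \log(b/p)+\log p \le b/p+\log p$. Either way the bound $O(b/p+\log p)$ holds. I do not anticipate a serious obstacle here: the only subtlety is to make sure the bunch creation and queue extension are genuinely parallelized (rather than done one chunk at a time), so that the $c$ factor enters only logarithmically; once that is arranged, the arithmetic reduction of $\log(b/p^{2}+1)$ to the target bound is immediate.
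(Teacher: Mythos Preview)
Your proof is correct, but it works harder than necessary and diverges from the paper's simpler accounting. The paper's argument is just two lines: cutting the batch into pieces takes $O(\log b)$ span, and appending the pieces \emph{sequentially} to the feed buffer takes $O(1+b/p^{2})$ span; then $O(\log b + b/p^{2}) \subseteq O(b/p + \log p)$ because $b/p^{2}\le b/p$ trivially and $\log b \in O(b/p)$ once $b>p^{2}$ (while $\log b \le 2\log p$ otherwise).

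You instead parallelize the feed-buffer append (building the linked segment and splicing in $O(\log c)$ span) to obtain the tighter intermediate bound $O(\log p + \log c)$, and then use a $p^{3}$ threshold in the case split. This is valid but unnecessary: since $b/p^{2}\le b/p$ is already within budget, there is no need to drive the append cost below linear in $c$, and the paper's $p^{2}$ threshold gives the cleaner reduction. Two minor quibbles: the lemma concerns only the cutting step, not the parallel-buffer flush, so your first step is extraneous; and the flush span is $O(\log p + \log b)$ per Appendix~\ref{sub:par-buffer}, not $O(\log p)$ as you wrote (though this does not affect the final bound since $\log b \in O(b/p+\log p)$ anyway).
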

\begin{proof}
Cutting the input batch into small batches takes $O(\log b)$ span.
Adding them to the feed buffer takes $O(1+\frac{b}{p^{2}})$ span.
This amounts to $O\left(\frac{b}{p}+\log p\right)$ span because $\log b\in O\left(\frac{b}{p}\right)$
if $b>p^{2}$.
\end{proof}
\clearpage{}
\begin{lem}[$\fs_{2}$ Final Slab Delay]
\label{lem:FS2-final-slab-delay} Each section $S[k]$ in the final
slab runs within $O\left(2^{k}\right)$ delay (whether it defers or
finishes).\end{lem}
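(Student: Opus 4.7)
The plan is to bound the delay of a run of $S[k]$---whether it defers or finishes---by decomposing it into (i) the intrinsic (non-blocking) span of the run and (ii) the delay contributed at each acquire-stall node where $S[k]$ blocks on one of its two neighbour-locks, and showing both pieces are $O\left(2^{k}\right)$.

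For piece (i), I would invoke the balance invariants of \ref{lem:FS2-balance-inv}: the buffer of $S[k]$ holds at most $2\cdot c(k-1)$ operations, and every relevant segment in $S[k]$ and $S[k-1]$ contains at most $O\left(c(k)\right)$ items. Under these bounds, each of the constantly-many sub-phases of a finishing run---converting each buffered bunch into a sorted batch, performing sorted batch access on each segment of $S[k]$ per access type, forking an $O(\log c(k-1))$-depth tree to return grouped results, and carrying out the local segment and chain rebalancing steps---takes span $O(\log c(k))=O\left(2^{k}\right)$ by the stated batch-parallel map bounds. Composing constantly many such phases sequentially yields an intrinsic span of $O\left(2^{k}\right)$ for a finishing run; a deferring run trivially does only $O(1)$ non-lock work after acquiring both locks.

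The harder part, and the main obstacle I anticipate, is bounding the wait at each acquire-stall node in piece (ii). Each neighbour-lock is dedicated with exactly two keys (one per adjacent section), so only the opposing neighbour ever contends with $S[k]$ for it. If $S[k]$ stalls on a lock shared with $S[k']$ where $k'\in\{k-1,k+1\}$, then $\Delta(\alpha)$ is precisely the remaining portion of $S[k']$'s current run up to the release of that lock, which can itself contain an acquire-stall on a further neighbour, threatening an unbounded cascade. To close this, I plan to exploit the odd-even acquisition ordering depicted in \ref{fig:FS2-flow}: the alternating ordinals assigned to consecutive neighbour-locks rule out circular wait chains, and combined with the fact that once a section has acquired both of its locks its remaining work is purely intrinsic of span $O\left(2^{k'}\right)$, any forward-cascading wait chain must terminate at a processing section within a constant number of hops. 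Since any section appearing in such a cascade sits at level at most $k+O(1)$, the total delay contributed to an acquire-stall of $S[k]$ is $O\left(2^{k+O(1)}\right)=O\left(2^{k}\right)$.

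Summing the intrinsic $O\left(2^{k}\right)$ span with the two $O\left(2^{k}\right)$ lock-wait delays yields the claimed $O\left(2^{k}\right)$ delay per run of $S[k]$.
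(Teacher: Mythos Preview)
Your overall plan matches the paper's proof: bound the intrinsic span once both locks are held using the balance invariants of \ref{lem:FS2-balance-inv}, then bound the waiting at each lock acquisition using the odd-even ordering. The paper carries this out as a three-tier argument: first, once $S[k]$ holds its \emph{second} neighbour-lock (and hence both locks), the remaining work is pure span $O\left(2^{k}\right)$; second, once $S[k]$ holds only its \emph{first} lock, the current holder of its second lock is a neighbour $S[k\pm1]$ that already holds both of \emph{its} locks and so finishes in $O\left(2^{k\pm1}\right)$ by tier one; third, when $S[k]$ starts, the current holder of its first lock is a neighbour in the situation of tier two and so finishes in $O\left(2^{k\pm1}\right)$.

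The one place where your reasoning needs sharpening is the justification for why the wait cascade has constant depth. You invoke that the alternating ordinals ``rule out circular wait chains'', but that is a deadlock-freedom statement and by itself does not bound cascade length---an acyclic chain $S[k]\to S[k+1]\to S[k+2]\to\cdots$ is not circular yet would be unbounded. The specific property you must use (and which the paper implicitly exploits) is that in the scheme of \ref{fig:FS2-flow} each lock carries the \emph{same} ordinal for both of its endpoints: the lock that $S[k]$ acquires second is also the lock that the neighbour on that side acquires second. Hence any section currently holding that lock must already hold both of its locks, which is exactly what lets tier one apply to it and truncates the cascade at depth two. Once you state that observation explicitly, your proposal goes through and coincides with the paper's proof.
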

\begin{proof}
Consider any final slab section $S[k]$ that has acquired the second
neighbour-lock. Checking whether it has an imbalanced segment and
checking $S[k+1]$'s buffer size takes only $O(1)$ delay. By \nameref{lem:FS2-balance-inv}
(\ref{lem:FS2-balance-inv}), $S[k]$ has at most $2\cdot c(k-1)$
operations in its buffer, and $S[k]$ always has at most $2\cdot t(k)$
items in each segment, and $S[k-1]$ has at most $2\cdot t(k-1)$
items in each segment. Thus converting each bunch in the buffer takes
$O\left(2^{k}\right)$ span, and performing the operations that fit
in $S[k]$ takes $O\left(2^{k}\right)$ span, and rebalancing the
segments in $S[k-1]$ takes $O\left(2^{k}\right)$ span.

Now consider any final slab section $S[k]$ that has acquired the
first neighbour-lock. It waits $O\left(2^{k}\right)$ delay for the
current holder (if any) of the second neighbour-lock to release it,
and then itself takes $O\left(2^{k}\right)$ more delay to complete
its run.

Finally consider any final slab section $S[k]$ that starts running.
If $k=m$, it waits $O\left(2^{k}\right)$ delay for the first slab
to release the shared neighbour-lock, since the first slab takes only
$O\left(2^{m}\right)$ span on each access to $S[m-1]$. If $k>m$,
it waits $O\left(2^{k}\right)$ delay for the current holder of the
first neighbour-lock to release it, and then itself takes $O\left(2^{k}\right)$
more delay to complete its run.\end{proof}
\begin{lem}[$\fs_{2}$ First Slab Delay]
\label{lem:FS2-first-slab-delay} The first slab takes $O(\log p)$
delay for each acquiring of the neighbour-lock, and it processes each
cut batch within $O\left((\log p)^{2}\right)$ delay.\end{lem}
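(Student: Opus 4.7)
The plan rests on two arithmetic observations. First, since $m = \ceil{\log\log(5p^{2})}$, we have $2^{m} \in \Theta(\log p)$, so $c(m-1) = 2^{2^{m}} \in [5p^{2},\,\mathrm{poly}(p)]$, and by Invariant~1 of \ref{lem:FS2-balance-inv} every first-slab segment has size $O(c(m-1))$, so $\log n \in O(\log p)$ for any first-slab segment size $n$. Second, a cut batch has size $b \le p^{2}$, so $\log b \in O(\log p)$. Together these ensure that every per-section span we incur during a first-slab run is $O(\log p \cdot 2^{k})$ for some $k \le m-1$, and summing the geometric series over $k$ gives $O(\log p \cdot 2^{m}) \wi O((\log p)^{2})$.

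For the first claim, note that the dedicated neighbour-lock between $S[m-1]$ and $S[m]$ is shared by exactly two parties — the first slab and $S[m]$ — each using a distinct key. When the first slab attempts to acquire it, the only possible current holder is $S[m]$. By \ref{lem:FS2-final-slab-delay}, an in-progress run of $S[m]$ completes within delay $O(2^{m}) \wi O(\log p)$, so the acquire-stall contributes $O(\log p)$ delay, and the subsequent acquire primitive itself is $O(1)$.

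For the second claim, I would mirror the phase-by-phase span analysis of the proof of \ref{thm:FS1-span}, restricted to the first slab. The preliminary phase contributes $O(\log b \cdot 2^{k})$ span at each $S[k]$, summing over $k = 0,\ldots,m-1$ to $O((\log p)^{2})$. The separation phase contributes $O((\log b)^{2}) \wi O((\log p)^{2})$ via the entropy-sort span bound together with the ordinary sort of the residual operations. The execution phase and the segment-rebalancing step each contribute $O(\log b + 2^{k})$ span at each $S[k]$, again summing to $O((\log p)^{2})$. The two acquires of the neighbour-lock performed during the run (at the start, to check the deferral condition, and after processing, to insert residuals into $S[m]$'s buffer) each cost $O(\log p)$ delay by the first claim and are absorbed.

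The main obstacle I anticipate is the boundary case in which the first slab creates or removes section $S[m]$ during its rebalancing phase, triggering chain rebalancing inside the first-slab run (which is otherwise skipped while $S[m]$ exists). The fix is to invoke \ref{lem:FS1-chain-rebalancing}: chain rebalancing performs at most two iterations, each executing transfers and fills between adjacent first-slab segments whose total span is $O(\log b + \log n)$ with $n$ the total first-slab size; since $\log n \in O(\log p)$ here, this also fits within the $O((\log p)^{2})$ budget.
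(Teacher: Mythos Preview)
Your proposal is correct and follows essentially the same approach as the paper: invoke \ref{lem:FS2-final-slab-delay} to bound the lock-acquisition delay by $O(2^{m})=O(\log p)$, and bound the four-phase processing by reusing the per-section span analysis from the proof of \ref{thm:FS1-span} with $b\le p^{2}$ and first-slab segment sizes $O(c(m-1))$. The paper's proof is more terse (it simply cites \ref{thm:FS1-span} for the four phases and separately notes that obtaining the cut batch and inserting the residual group-operations into $S[m]$'s buffer each take $O(\log p)$ delay, the latter via \ref{lem:FS2-balance-inv}), whereas you spell out the phase-by-phase sums and the chain-rebalancing boundary case explicitly; the one small step you leave implicit is the $O(\log p+2^{m})$ span of the residual insertion itself, but this is immediate from the buffer-size bound in \ref{lem:FS2-balance-inv}.
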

\begin{proof}
Each acquiring of the neighbour-lock takes $O\left(2^{m}\right)=O(\log p)$
delay by \nameref{lem:FS2-final-slab-delay} (\ref{lem:FS2-final-slab-delay}).
Checking whether $S[m-1]$ has an imbalanced segment and checking
$S[m]$'s buffer size takes only $O(1)$ delay. Obtaining the cut
batch (whose size is at most $p^{2}$) from the first bunch from the
feed buffer takes $O(\log p)$ delay. The four phases take $O\left((\log p)^{2}\right)$
delay in total, as in $\fs_{1}$ (see \ref{thm:FS1-span}). Inserting
the residual group-operations into the buffer of $S[m]$ takes $O\left(\log p+2^{m}\right)=O(\log p)$
delay, since $S[m]$'s buffer had at most $2\cdot c(m-1)$ items by
\nameref{lem:FS2-balance-inv} (\ref{lem:FS2-balance-inv}).
\end{proof}
With these lemmas, we can finally bound the span of $\fs_{2}$.
\begin{thm}[$\fs_{2}$ Span]
\label{thm:FS2-span} $\fs_{2}$ takes $O\left(\frac{N}{p}+d\cdot(\log p)^{2}+s_{L}\right)$
span for some linearization $L$ of $D$. ($d$ is the maximum number
of $\fs_{2}$-calls on any path in $D$, and $s_{L}$ is the weighted
span of $D$ with $\fs_{2}$-calls weighted according to $F_{L}$.)\end{thm}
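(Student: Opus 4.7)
The plan is to bound the $\fs_2$-span along an arbitrary path $\pi$ in the execution DAG by decomposing the delay into contributions from first-slab processing, final-slab pipeline processing, input-batch cutting, and deferred section runs. First I would fix a linearization $L$ with the same template as in the proof of \ref{thm:FS2-work}, so that each operation's finger distance $r_X$ in $L$ matches the section $S[k_X]$ at which it finishes; by the argument there, $2^{k_X} \in O(\log r_X + 1)$ for accesses (and $O(\log n)$ for finger-moves), matching $X$'s $s_L$-weight.

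For the first slab, each $\fs_2$-call on $\pi$ belongs to exactly one cut batch, and by \ref{lem:FS2-first-slab-delay} each cut batch takes $O((\log p)^2)$ delay to process, contributing $O(d \cdot (\log p)^2)$ along $\pi$ in total. For the final-slab pipeline, an operation $X$ finishing at $S[k_X]$ traverses $S[m], \ldots, S[k_X]$, and each successful section run at $S[j]$ contributes $O(2^j)$ delay by \ref{lem:FS2-final-slab-delay}; the geometric series sums to $O(2^{k_X})$, matching $X$'s $s_L$-weight, so summing across the $\fs_2$-calls on $\pi$ yields $O(s_L)$. Input cutting along $\pi$ adds $O(N/p + d \log p)$ by \ref{lem:FS2-input-cut-span}, since the total size of input batches whose cutting the path passes through is at most $N$, spread over at most $d$ cutting subcomputations.

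The delicate remaining piece is bounding the span from \emph{deferred} section runs, which each incur $O(2^j)$ delay without making local progress and can cascade through the pipeline. The plan is an amortized accounting argument exploiting that a deferral at $S[k]$ requires $S[k+1]$'s buffer to hold more than $c(k) \geq c(m-1) \geq p^2$ pending operations. One would charge the $O(2^k)$ delay per deferral to those waiting operations at a rate of $O(2^k / c(k))$ per operation; since $c(k) = 2^{2^{k+1}}$ grows doubly exponentially, these per-operation charges are minuscule, and summing over sections and operations bounds the total deferral-induced span along $\pi$ by $O(N/p)$.

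The main obstacle will be rigorously handling the cascading deferral chains: a single attempted run of $S[k]$ may trigger deferrals at $S[k+1], S[k+2], \ldots$ before any section actually proceeds, and the neighbour-lock discipline from \ref{sub:primitives} must be translated into concrete DAG dependencies with the claimed delays. I expect to need an invariant ensuring that each operation is charged for only a bounded number of cascades along any single causal chain, so that the amortized sum genuinely yields $O(N/p)$ and composes correctly with the pipeline bound $O(s_L)$ and the first-slab bound $O(d \cdot (\log p)^2)$.
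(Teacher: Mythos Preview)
Your overall structure (split into first-slab, final-slab pipeline, input cutting, deferment) matches the paper's, but two steps have real gaps.

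\textbf{Feed-buffer waiting.} You write that each call on $\pi$ ``belongs to exactly one cut batch'' and hence contributes $O((\log p)^2)$ first-slab delay. But while $X$ sits in the feed buffer, the first slab may process many cut batches \emph{not} containing any call on $\pi$ before it reaches $X$'s cut batch. So the first-slab delay experienced by $X$ is not $O((\log p)^2)$ but $O((\log p)^2)$ times the number $i+j$ of preceding cut batches, and there is no reason this is $O(d)$. The paper's fix is that every such preceding cut batch has size exactly $p^2$, so $(\log p)^2 \in O(p)$ per batch and $(i+j)\cdot p^2 \le N$, giving $(i+j)\cdot p \le N/p$. This is where a crucial $N/p$ term actually comes from, and your accounting misses it. (A similar issue arises with waiting for the \emph{previous} input batch of size $b'$ and with returning results for a group of size $g$; both are absorbed into $N/p$ by summing $b'$, $g$ over calls along $\pi$.)

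\textbf{Deferment cause.} Your deferment argument assumes ``a deferral at $S[k]$ requires $S[k+1]$'s buffer to hold more than $c(k)$ pending operations''. That is only one of the two deferral conditions in \ref{par:FS2-final-slab-run}: $S[k]$ also defers if one of its own segments is imbalanced, in which case $S[k+1]$'s buffer may be nearly empty and there are no buffered operations to charge. The paper's credit scheme therefore stores credits in two places: $q\cdot 2^{-k}$ on any segment $S_i[k]$ with $q$ items beyond target capacity, \emph{and} $2^{-k}$ on each operation sitting in $S[k+1]$'s buffer. When a deferral chain $S[k..j]$ occurs, the credits drawn at level $j$ come from whichever condition triggered $S[j]$'s deferral (imbalance or full buffer), and the key inequality is $p\cdot 2^{j}\in O(c(j)\cdot 2^{-j})$, so those $\Theta(c(j)\cdot 2^{-j})$ credits cover $p$ times the $O(2^{j})$ deferment delay. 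Your per-operation charge of $O(2^k/c(k))$ handles only the buffer-full branch and would fail on an imbalance-triggered deferral.
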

\begin{proof}
Take any path $C$ through the program DAG $D$. Let $L$ be the linearization
in the proof of \nameref{thm:FS2-work} (\ref{thm:FS2-work}). Consider
any $\fs_{2}$-call $X$ along $C$ with finger distance $r$ according
to $L$. We shall trace the journey of $X$ from the parallel buffer
in an input batch to a cut batch and then through the slabs, and bound
the delay taken by $X$ relative to $\fs_{2}$, meaning that in the
computation of the delay we only count $\fs_{2}$-nodes. Along the
way, we shall partition that delay into the \textbf{normal} delay
and the \textbf{deferment} delay, where the latter comprises all waiting
at the first slab or a section that defers from the point it sets
the deferred flag until it is reactivated and clears the deferred
flag (and proceeds).

\uline{Normal delay}

At the start, $X$ waits in the parallel buffer for the first slab
to finish running on the previous input batch of size $b'$, taking
$O\left(\frac{b'}{p}+(\log p)^{2}\right)$ delay by \nameref{lem:FS2-input-cut-span}
(\ref{lem:FS2-input-cut-span}) and \nameref{lem:FS2-first-slab-delay}
(\ref{lem:FS2-first-slab-delay}). Next $X$ waits for the first slab
to process some $i$ cut batches of size $p^{2}$ in the feed buffer,
each taking $O\left((\log p)^{2}\right)\wi O(p)$ normal delay. Then
$X$ is flushed from the parallel buffer in some input batch of size
$b$, which is cut within $O\left(\frac{b}{p}+(\log p)^{2}\right)$
normal delay, and next waits for another $j$ cut batches of size
$p^{2}$ that come before $X$ in the feed buffer, each taking $O(p)$
normal delay. (Note that we are ignoring all waiting while the first
slab is deferred.)

If $X$ finishes in the final slab, it waits a further $O\left(2^{k}\right)$
normal delay at each final slab section $S[k]$ that it passes through
by \nameref{lem:FS2-final-slab-delay} (\ref{lem:FS2-final-slab-delay}).
And when $X$ finishes in a section $S[k]$, at that point $S[k-1]$
has at least $c(k-2)$ items in each segment by \nameref{lem:FS2-balance-inv}
(\ref{lem:FS2-balance-inv}). Thus $r\ge c(k-2)$ and hence $X$ takes
$O\left(\sum_{a=m}^{k}2^{a}\right)=O\left(2^{k}\right)\wi O(\log r)$
normal delay in the final slab. Finally when $X$ is returned, it
is in some group-operation with $g$ operations, so returning the
results takes $O(\log g)\wi O\left(\frac{g}{p}+\log p\right)$ span.

Therefore in total $X$ takes $O\left(\frac{b'}{p}+\frac{b}{p}+\frac{g}{p}+i\cdot p+j\cdot p+(\log p)^{2}+\log r\right)$
normal delay.

\clearpage{}

\uline{Deferment delay}

To bound the deferment delay, we shall use a similar credit invariant
as in \nameref{lem:FS2-rebalance-work} (\ref{lem:FS2-rebalance-work}),
but instead of paying for rebalancing work we shall use the credits
to pay for $p$ times the deferment delay. This would imply that the
deferment delay is at most $O\left(\frac{1}{p}\right)$ per operation
on $\fs_{2}$. The invariant is that for $k\ge m-1$, each segment
$S_{i}[k]$ with $q$ items beyond its target capacity has at least
$q\cdot2^{-k}$ stored credits, and that each operation in $S[k+1]$'s
buffer carries $2^{-k}$ credits with it.

Consider each deferment of a section $S[k]$ for $k\ge m-1$ (where
deferment of the first slab is treated as deferment of $S[m-1]$).
At that point either one of its segment is imbalanced or $S[k+1]$'s
buffer has more than $c(k)$ items, and $S[k]$ reactivates $S[k+1]$,
which may either defer or proceed. In any case, from that point until
$S[k+1]$ proceeds, $S[k]$ will never proceed (even if reactivated),
because its segments and $S[k+1]$'s buffer remain untouched. But
once $S[k+1]$ proceeds, it will empty its buffer and make $S[k]$'s
segments balanced by \nameref{lem:FS2-balance-inv} (\ref{lem:FS2-balance-inv}),
and then reactivate $S[k]$ on finishing, so $S[k]$ will proceed
within $O\left(2^{k}\right)$ subsequent delay.

Thus if $X$ is waiting at $S[k]$ due to consecutive sections $S[k..j]$
being deferred, and $S[j+1]$ proceeding, the deferment at $S[k]$
lasts $O\left(\sum_{a=k}^{j+1}2^{a}\right)=O\left(2^{j}\right)$ delay
(by \ref{lem:FS2-balance-inv} again), and $p\cdot2^{j}\le\sqrt{c(m)}\cdot2^{j}\in O\left(c(j)\cdot2^{-j}\right)$
since $2^{2j}\in O\left(\sqrt{c(j)}\right)$. If $S[j]$ had an imbalanced
segment, it would have at least $c(j)\cdot2^{-j}$ stored credits,
and we can use half of it to pay for any needed extra stored credits
at $S[j+1]$ due to the shift. If $S[j+1]$'s buffer had more than
$c(j)$ items, then they carry $c(j)\cdot2^{-j}$ credits, and we
can use half to pay for any needed extra stored credits at $S[j+1]$
and for any credits carried by operations that go on to $S[j+2]$.
In both cases, we can use the other half of those credits to pay for
$p$ times the deferment delay that $X$ takes at $S[k]$.

\uline{Total delay}

There are at most $d$ $\fs_{2}$-calls along $C$, and over all $X$,
each of $b,b',g,i\cdot p^{2},j\cdot p^{2}$ above will sum up to at
most the total number $N$ of $\fs_{2}$-calls, and the total deferment
delay of all $\fs_{2}$-calls along $C$ is $O\left(\frac{N}{p}\right)$.
Therefore the span of $\fs_{2}$ is $O\left(\frac{N}{p}+d\cdot(\log p)^{2}+s_{L}\right)$.
\end{proof}

\section{General Parallel Finger Structures}

\label{sec:GPFS}

To support an arbitrary but fixed number $f$ of movable fingers (besides
the fingers at the ends), while retaining both work-optimality with
respect to the finger bound and good parallelism, we essentially use
a basic parallel finger structure for each \textbf{sector} between
adjacent fingers.

It is easier to do this with $\fs_{1}$, because we are processing
the operations in batches. The finger-move operations are all done
first in a \textbf{finger phase} before the rest of the batch, and
of course we combine finger-move operations on the same finger. Consider
any finger that is between two sectors $R_{0}$ and $R_{1}$. This
finger is sandwiched between the nearest chain $S_{i}$ of $R_{0}$
and the nearest chain $S_{1-i}$ of $R_{1}$. To move this finger
into chain $S_{i}$ of $R_{0}$ past an item in segment $S_{i}[k]$,
we move all the items $I$ between the old and new finger position
from $R_{0}$ to $R_{1}$, roughly as follows:
\begin{enumerate}
\item Cut out the items in $I$ from sector $R_{0}$'s segments $S_{i}[0..k]$
and join them (from small to big) into a single batch $B$.
\item Join the items in sector $R_{1}$'s segments $S_{1-i}[0..k]$ (from
small to big) and shift them into $S_{1-i}[k+1]$ (by a single join).
\item Use $B$ to fill sector $R_{1}$'s sections $S_{1-i}[0..k]$ to target
size except perhaps $S_{1-i}[k]$.
\item Rebalance $R_{0}$ and $R_{1}$ as in $\fs_{1}$'s rebalancing phase
(\step\ref{par:FS1-phases}).
\end{enumerate}
This essentially contributes $O\left(2^{k}\right)$ work and $O(\log n)$
span, because we can preserve the same credit invariant to bound the
rebalancing work and span. It is similar but messier for moving a
finger so far that it goes over the nearer chain of $R_{0}$ and into
its further chain.

After that, we can simply partition the map operations around the
fingers and perform each part on the correct sector in parallel. This
partitioning takes $O(b)$ work and $O(\log b)$ span for each batch
of $b$ operations (see Appendix \ref{sub:par-batch}), and $O(\log b)\wi O\left(\frac{b}{p}+\log p\right)$,
and each sector takes $O(\log n)$ span. Thus we will obtain the desired
work/span bounds (\ref{thm:FS-costs}).

It is much harder for $\fs_{2}$, and considerably complicated, so
we shall not attempt to explain it here.

\section{Work-Stealing Schedulers}

\label{sec:work-steal}

The bounds on the work and span of $\fs_{1}$ and $\fs_{2}$ in \ref{sec:PFS1}
and \ref{sec:PFS2} hold regardless of the scheduler. The performance
bounds for $\fs_{1}$ and $\fs_{2}$ in \ref{sec:main} require a
greedy scheduler, in order to bound the parallel buffer cost. In practice,
we do not have such schedulers. But we can design a suitable work-stealing
scheduler in the QRMW pointer machine model that yields the desired
time bounds (\ref{thm:FS1-perf} and \ref{thm:FS2-perf}) on average,
as we shall explain below.

We make the modest assumption that each processor (in the QRMW pointer
machine) can generate a uniformly random integer in $[1..p]$ and
convert it to a pointer given by a constant lookup-table within $O(1)$
steps. For instance, this can be done if each processor has \textbf{local
RAM} of size $p$ (i.e.~sole access to its own local memory with
$p$ cells and $O(1)$ random access).

The \textbf{\textit{blocking}} work-stealing scheduler in \cite{blumofe1999worksteal}
is for an atomic message passing model, in which multiple concurrent
accesses to each deque are arbitrarily queued and serviced one at
a time. This can be supported by guarding each deque with a CLH lock~\cite{magnusson1994queuelocks},
and the analysis carries over. 

The \textbf{\textit{non-blocking}} work-stealing scheduler in \cite{arora2001worksteal}
assumes $O(1)$ memory contention cost, which is contrary to the QRMW
contention model. Nevertheless, the combinatorial techniques in that
paper can be adapted to prove the desired performance bounds for our
implementation (\ref{def:qrmw-work-stealer}).
\begin{defn}[Non-Blocking Work-Stealing Scheduler]
\label{def:qrmw-work-stealer} The non-blocking work-stealing scheduler
can be implemented in the QRMW pointer machine model as follows:
\begin{itemize}
\item Each processor $i\in[1..p]$ has:

\begin{itemize}
\item A global deque $Q_{i}$ of DAG nodes, shared between \textbf{\textit{owner}}
and \textbf{\textit{stealer}} using Dekker's algorithm.
\item A global non-blocking lock $L_{i}$ (see Appendix \ref{def:try-lock}).
\item A local array $R_{i}[1..p]$ where $R_{i}[j]$ stores a pointer to
$Q_{j}$ and a pointer to $L_{j}$.\quad{}// Used implicitly wherever
needed.
\end{itemize}
\item Each processor $i$ does the following repeatedly:

\begin{block}
\item Access $Q_{i}$ as owner, removing the node $v$ at the bottom if
it is non-empty.
\item If $v$ exists (i.e.~$Q_{i}$ was non-empty):

\begin{block}
\item Execute $v$.
\item Access $Q_{i}$ as owner, inserting all the child nodes generated
by $v$ at the bottom.
\end{block}
\item Otherwise:

\begin{block}
\item Create Int $k$ uniformly randomly chosen from $[1..p]$.
\item If TryLock($L_{k}$):

\begin{block}
\item Access $Q_{k}$ as stealer, removing the node $w$ at the top if it
is non-empty.
\item Unlock($L_{k}$).
\item If $w$ exists (i.e.~$Q_{k}$ was non-empty):

\begin{block}
\item Execute $w$.
\item Access $Q_{i}$ as owner, inserting all the child nodes generated
by $w$ at the bottom.
\end{block}
\end{block}
\end{block}
\end{block}
\end{itemize}
\end{defn}

\section{Conclusions}

This paper presents two parallel finger structures that are work-optimal
with respect to the finger bound, and the faster version has a lower
span by using careful pipelining. Pipelining techniques to reduce
the span of data structure operations have been explored before~\cite{BlellochRe97,OPWM}.
As indicated by our results, the extended implicit batching framework
combines nicely with pipelining and is a promising approach in the
design of parallel data structures. 

Nevertheless, despite the common framework, the parallel finger structures
in this paper and the parallel working-set map in \cite{OPWM} rely
on different ad-hoc techniques and analysis, and it raises the obvious
interesting question of whether there is a way to obtain a batch-parallel
splay tree in the same framework, that satisfies both the working-set
property and the finger property.

\clearpage{}

\renewcommand{\thesection}{\hspace{-1em}}

\section{Appendix}

\renewcommand{\thesection}{A}

Here we spell out the model details, building blocks and supporting
theorems used in our paper.

\subsection{QRMW Pointer Machine Model}

\label{sub:qrmw-ppm}

QRMW stands for \textbf{queued read-modify-write}, as described in
\cite{dwork1997contention}. In this contention model, asynchronous
processors perform memory accesses via read-modify-write (RMW) operations
(including read, write, test-and-set, fetch-and-add, compare-and-swap),
which are supported by almost all modern architectures. Also, to capture
contention costs, multiple memory requests to the same memory cell
are FIFO-queued and serviced one at a time, and the processor making
each memory request is blocked until the request has been serviced.

In the \textbf{parallel pointer machine}, each processor has a fixed
number of local registers and memory accesses are done only via pointers,
which can be locally stored or tested for equality (but no pointer
arithmetic). The QRMW pointer machine model, introduced in \cite{OPWM},
extends the parallel pointer machine model in \cite{goodrich1996parallelsort}
to RMW operations. In this model, each memory node has a fixed number
of memory cells, and each memory cell can hold a single field, which
is either an integer or a pointer. Each processor also has a fixed
number of local registers, each of which can hold a single field.
The basic operations that a processor can perform include arithmetic
operations on integers in its registers, equality-test between pointers
in its registers, creating a new memory node and obtaining a pointer
to it, and RMW operations. An RMW operation can be performed on any
memory cell via a pointer to the memory node that it belongs to.

All operations except for RMW operations take one step each. RMW operations
on each memory cell are FIFO-queued to be serviced, and the first
RMW operation in the queue (if any) is serviced at each time step.
The processor making each memory request is blocked until the request
has been serviced.

\subsection{Parallel Batch Operations}

\label{sub:par-batch}

We rely on the following basic operations on batches:
\begin{itemize}
\item Split a given batch of $n$ items into left and right parts around
a given position, within $O(\log n)$ work/span.
\item Partition a given batch of $n$ items into lower and upper parts around
a given pivot, within $O(n)$ work and $O(\log n)$ span.
\item Partition a sorted batch of $n$ items around a sorted batch of $k$
pivots, within $O(k\cdot\log n)$ work and $O(\log n+\log k)$ span.
\item Join a batch of batches with $n$ total items, within $O(n)$ work
and $O(\log n)$ span.
\item Merge two sorted batches with $n$ total items, optionally combining
duplicates, within $O(n)$ work and $O(\log n)$ span if the combining
procedure takes $O(1)$ work/span.
\end{itemize}
These can be implemented in the QRMW pointer machine model~\cite{P23T}
with each batch stored as a\textbf{ BBT} (leaf-based height-balanced
binary tree with an item at each leaf). They can also be implemented
(more easily) in the binary forking model in \cite{blelloch2019mtramalgo}
with each batch stored in an array. For instance, joining a batch
of arrays can be done by using the standard prefix-sum technique to
compute the total size of the first $k$ arrays, and hence we can
copy each array in parallel into the final output array, and merging
two sorted arrays can be done by the algorithm given in \cite{jaja1992parallelalgo}
(section 2.4) and \cite{nodari2016ics643}.

A related data structure that we also rely on is the bunch data structure,
which is defined as follows.
\begin{defn}[Bunch Structure]
\label{def:bunch} A \textbf{bunch} is an unsorted set supporting
addition of any batch of new elements within $O(1)$ work/span and
conversion to a batch within $O(b)$ work and $O(\log b)$ span if
it has size $b$. A bunch can be implemented using a complete binary
tree with batches at the leaves, with a linked list threaded through
each level to support adding a new batch as a leaf in $O(1)$ work/span.
To convert a bunch to a batch, we treat the bunch as a batch of batches
and parallel join all the batches.
\end{defn}

\subsection{Batch-Parallel Map}

\label{sub:batch-map}

In this paper we rely on a parallel map that supports the following
operations:
\begin{itemize}
\item \textbf{Unsorted batch search:} Search for an unsorted input batch
of $b$ items (not necessarily distinct), tagging each search item
with the result, all within $O(b\cdot\log n)$ work and $O(\log b\cdot\log n)$
span, where $n$ is the map size.
\item \textbf{Sorted batch access:} Perform an item-sorted input batch of
$b$ operations on distinct items, tagging each operation with the
result, all within $O\left(b\cdot\log n\right)$ work and $O(\log b+\log n)$
span, where $n$ is the map size before the batch access.
\item \textbf{Split}: Split a map $M$ of size $k$ around a given pivot
rank $r$ into two maps $M_{1},M_{2}$, where $M_{1}$ contains the
items with ranks at most $r$ in $M$, and $M_{2}$ contains the items
with ranks more than $r$ in $M$, within $O(\log k)$ work/span.
\item \textbf{Join}: Join maps $M_{1},M_{2}$ of total size $k$ where every
item in $M_{1}$ is less than every item in $M_{2}$, within $O(\log k)$
work/span.
\end{itemize}
This can be achieved in the QRMW pointer machine model~\cite{P23T},
and also (more easily) in the binary forking model~\cite{blelloch2019mtramalgo}.

\subsection{Parallel Buffer}

\label{sub:par-buffer}

To facilitate extended implicit batching, we can use any parallel
buffer implementation that takes $O(p+b)$ work and $O(\log p+\log b)$
span per batch of size $b$ (on $p$ processors), any operation that
arrives is (regardless of the scheduler) within $O(1)$ span included
in the batch that is being flushed or in the next batch, and there
are always at most $\frac{1}{2}p+q$ ready buffer nodes (active threads
of the buffer) where $q$ is the number of operations that are currently
buffered or being flushed. This would entail the following parallel
buffer overhead~\cite{OPWM} (and we reproduce the proof here).
\begin{thm}[Parallel Buffer Cost]
\label{rem:par-buff-cost} Take any program $P$ using an implicitly
batched data structure $M$ that is run using any greedy scheduler.
Then the cost (\ref{def:effective}) of the parallel buffer for $M$
is $O\left(\frac{T_{1}+w}{p}+d\cdot\log p\right)$, where $T_{1}$
is the work of all the $P$-nodes, and $w$ is the work taken by $M$,
and $d$ is the maximum number of $M$-calls on any path in the program
DAG $D$.\end{thm}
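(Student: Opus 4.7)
The plan is to separately bound the work $W_{\text{buf}}$ and the span $S_{\text{buf}}$ of the buffer subcomputation, then combine these into the cost $W_{\text{buf}}/p + S_{\text{buf}}$. Let $N$ denote the total number of $M$-calls, and observe that $N \le T_1$ since every such call is itself a $P$-node in the program DAG. The buffer's work decomposes into two pieces: the $O(1)$ work done to insert each arriving call, totalling $O(N) = O(T_1)$; and the $O(p + b_i)$ work done by the $i$-th flush of size $b_i$, summing to $O(pk + N)$ across the $k$ flushes performed during the execution. So $W_{\text{buf}} = O(T_1 + pk)$, and the work side of the claim reduces to showing $pk = O(T_1 + w + dp \log p)$.

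For the span, consider any path $\pi$ through the program DAG $D$. The $d$ $M$-calls on $\pi$ are sequentially ordered, and each induces a serial buffer segment consisting of an $O(1)$-span add into the buffer followed by participation in exactly one flush, which has span $O(\log p + \log b_i)$. Summing, the buffer span along $\pi$ is $O(d \log p + \sum_{i \in \pi} \log b_i)$. For any $i$ with $b_i \ge p$, the flush already performs $\Omega(b_i)$ work, so its $\log b_i$ span can be charged to that work, contributing only $O(b_i/p)$ to the cost rather than to the span; the remaining terms each add $O(\log p)$. This gives $S_{\text{buf}} = O(d \log p)$ modulo a slack that folds into $W_{\text{buf}}/p$.

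The main obstacle is bounding the number of flushes $k$. Without additional structure, an adversarial schedule could issue $\Theta(N)$ flushes of size $1$, inflating flush work to $\Theta(Np)$ and breaking the bound. The key lever is the stated invariant that the buffer has at most $\tfrac{1}{2}p + q$ ready nodes at any instant, where $q$ is the current buffer occupancy. I would use this in a charging argument: at any step where the buffer consumes more than $\tfrac{1}{2}p$ processors, the excess work corresponds to actual in-transit operations, whose arrival we have already paid for via the $O(T_1)$ add-work, and whose flush cost is amortized over the $\Omega(p)$ operations riding the same batch. Coupled with the observation that each flush's constant-factor setup of $\Theta(p)$ must be earned either by $\Omega(p)$ non-buffer progress elsewhere on the same step or by a serial dependency along the critical path (of which there are at most $d$, yielding the $d \log p$ slack), this yields $pk = O(T_1 + w + dp \log p)$. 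Combining with the span bound gives the claimed buffer cost $O((T_1 + w)/p + d \log p)$.
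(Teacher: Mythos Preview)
Your span bound is essentially the same as the paper's (you threshold at $b\ge p$, the paper at $b>p^2$; either works), and you correctly identify that the real difficulty lies in bounding the work contributed by many small flushes. However, your charging argument for $pk$ has a genuine gap.

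The sentence ``each flush's constant-factor setup of $\Theta(p)$ must be earned either by $\Omega(p)$ non-buffer progress elsewhere on the same step or by a serial dependency along the critical path (of which there are at most $d$\ldots)'' conflates two different granularities. The $\Theta(p)$ flush work is not done in a single step; it is spread over $\Theta(\log p)$ steps. And the number of ``serial dependency'' steps is not $d$: what actually bounds it is the buffer's span $t_\infty$, because in each step where fewer than $p$ nodes are ready, greediness forces \emph{all} ready buffer nodes to execute, which reduces the remaining buffer span by one. So there are at most $t_\infty\in O(N/p + d\log p)$ such light steps, not $d$ of them. Your conclusion $pk=O(T_1+w+dp\log p)$ happens to survive this correction (since $p\cdot t_\infty = O(N+dp\log p)$ and $N\le T_1$), but as written the argument does not establish it.

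The paper sidesteps bounding $k$ altogether and instead does a direct per-step analysis. It observes that at any instant the buffer is processing at most two consecutive batches, and splits on whether their combined size is at least $\frac{1}{6}p$. If so, the $O(p+b+b')$ buffer work on that pair is $O(b+b')$ and sums to $O(N)$. If not, the ready-node invariant gives fewer than $\frac{2}{3}p$ ready buffer nodes, so at every step of that interval either at least $\frac{1}{3}p$ non-buffer ($P$- or $M$-) nodes execute, charging that step's buffer work to $T_1+w$, or fewer than $p$ nodes are ready, in which case all ready buffer nodes execute and the step is one of at most $t_\infty$ light steps. This yields $t_1/p \in O((T_1+w)/p + t_\infty)$ directly, without ever counting flushes. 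Your approach can be repaired to match this, but you need to (i) make the accounting per step rather than per flush, (ii) choose the small-batch threshold as a fraction of $p$ strictly below $\frac{1}{2}$ so the invariant actually leaves room for $\Omega(p)$ non-buffer work, and (iii) replace ``at most $d$'' by ``at most $t_\infty$'' when counting light steps.
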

\begin{proof}
Let $t_{1}$ and $t_{\infty}$ be the total work and span (\ref{def:effective})
respectively of the parallel buffer for $M$. Let $N$ be the total
number of operations on $M$. Consider each batch $B$ of $b$ operations
on $M$. Let $t_{B}$ be span taken by the buffer on $B$. If $b\le p^{2}$,
then $t_{B}\in O(\log p)$. If $b>p^{2}$, then $t_{B}\in O(\log b)\wi O\left(\frac{b}{p}\right)$.
Thus $t_{B}\in O\left(\frac{b}{p}+\log p\right)$ and hence $t_{\infty}\in O\left(\frac{N}{p}+d\cdot\log p\right)$.

Now consider the actual execution of the execution DAG $E$ of the
program $P$ using $M$. At each time step, the buffer is processing
at most two consecutive batches, so we shall analyze the buffer work
done during the time interval for each pair of consecutive batches
$B$ and $B'$, where $B$ has $b$ operations and $B'$ has $b'$
operations.

If $b+b'\ge\frac{1}{6}p$, then the buffer work done on $B$ and $B'$
is $O(b+b')$.

If $b+b'<\frac{1}{6}p$, then there are at most $\frac{1}{2}p+(b+b')<\frac{2}{3}p$
ready buffer nodes in $E$, so at least one of the following holds
at each time step in this interval:
\begin{itemize}
\item At least $\frac{1}{6}p$ ready $P$-nodes in $E$ are being executed.
These steps take at most $O(T_{1})$ work over all intervals.
\item At least $\frac{1}{6}p$ ready $M$-nodes in $E$ are being executed.
These steps take at most $O(w)$ work over all intervals.
\item At most $p$ ready nodes in $E$ are being executed. All ready buffer
nodes in $E$ are being executed (by greedy scheduling), so over all
intervals there are $O(t_{\infty})$ such steps, taking $O(p\cdot t_{\infty})$
work.
\end{itemize}
Therefore $\frac{t_{1}}{p}\in O(\frac{T_{1}+w}{p}+t_{\infty})$, and
hence the buffer's cost is $\frac{t_{1}}{p}+t_{\infty}\in O\left(\frac{T_{1}+w}{p}+d\cdot\log p\right)$
since $N\le T_{1}$.
\end{proof}
The parallel buffer for each data structure $M$ can be implemented
using a static BBT (leaf-based balanced binary tree), with a sub-buffer
at each leaf node, one for each processor, and a flag at each internal
node. Each sub-buffer stores its operations as the leaves of a complete
binary tree with a linked list threaded through each level. Whenever
a thread $\tau$ makes a call to $M$, the processor running $\tau$
suspends it and inserts the call together with a callback (i.e.~a
structure with a pointer to $\tau$ and a field for the result) into
the sub-buffer for that processor. Then the processor walks up the
BBT from leaf to root, test-and-setting each flag along the way, terminating
if it was already set. On reaching the root, the processor notifies
$M$ (by reactivating it), which can decide when to flush the buffer.
$M$ can also query whether the parallel buffer is non-empty, defined
as whether the flag at the root is set. $M$ can eventually return
the result of the call via the callback (i.e.~by updating the result
field and then resuming $\tau$).

Whenever the buffer is flushed (by $M$), all sub-buffers are swapped
out by a parallel recursion on the BBT, replaced by new sub-buffers
in a newly constructed static BBT. We then wait for all pending insertions
into the old sub-buffers to be completed, before joining their contents
into a single batch to be returned (to $M$). To do so, each processor
$i$ has a flag $y_{i}$ initialized to $true$, and a thread field
$\phi_{i}$ initialized to $null$. Whenever it inserts an $M$-call
$X$, it sets $y_{i}:=false$, then inserts $X$ into the (current)
sub-buffer, then resumes $\phi_{i}$ if $\f{TestAndSet}(y_{i})=true$.
To wait for pending insertions into the old sub-buffer for processor
$i$, we store a pointer to the current thread in $\phi_{i}$ and
then suspend it if $\f{TestAndSet}(y_{i})=false$.

Inserting into each sub-buffer can be done in $O(1)$ time. Test-and-setting
each flag in the BBT also takes $O(1)$ time, because at most three
processors ever access it. Each static BBT takes $O(p)$ work and
$O(\log p)$ span to initialize. Each data structure call takes $O(p)$
work and $O(\log p)$ span for a processor to reach the root, because
the flags ensure that only $O(1)$ work is done per node in traversing
the BBT. Joining the contents of the sub-buffers takes $O(p+b)$ work
and $O(\log p+\log b)$ span if the resulting joined batch is of size
$b$. It is also easy to ensure that flushing uses at most $\frac{1}{2}p+b$
threads where $b$ is the size of the flushed batch. Thus this parallel
buffer implementation has the desired properties that support extended
implicit batching.

It is worth noting that the parallel buffer can be implemented in
the dynamic multithreading paradigm, like all other data structures
and algorithms in this paper, but it requires the ability for a thread
to have $O(1)$-time access to the sub-buffer for the processor running
it, so that it can insert each data structure-call into the sub-buffer
in $O(1)$ work/span. This can be done if each processor has a \textbf{local
array} of size $p$ (i.e.~it is accessible only by that processor
but supports $O(1)$ random access) for each implicitly batched data
structure, and each thread can retrieve the id of the processor running
it. But in the QRMW pointer machine model this is not necessary if
the program uses a fixed set of implicitly batched data structures,
since each processor can be initialized with a (constant) pointer
to a structure that always points to the current sub-buffer for that
processor.

\subsection{Sorting Theorems}

The items in the search problem can come from any arbitrary set $S$
that is linearly ordered by a given comparison function, and we shall
assume that $S$ has at least two items. As is standard, let $S^{n}$
be the set of all length-$n$ sequences from $S$. Search structures
can often be adapted to implement sorting algorithms~\footnote{A sorting algorithm is a procedure that given any input sequence will
output a sequence of pointers to the input items in sorted order.}, in which case any lower bound on complexity of sorting typically
implies a lower bound on the costs of the search structure. For the
proofs of \nameref{thm:FS1-work} and \nameref{thm:FS2-work} we need
a crucial lemma that the entropy bound is a lower bound for (comparison-based)
sorting, as precisely stated below.
\begin{lem}[Sorting Entropy Bound]
\label{lem:sort-entr-bound} For any sequence $I$ in $S^{n}$ with
item frequencies $q_{1..u}$ (i.e.~$\sum_{i=1}^{u}q_{i}=n$), any
sorting algorithm requires $\Omega(H)$ comparisons on average over
all (distinct) rearrangements of $I$, where $H=\sum_{i=1}^{u}\left(q_{i}\cdot\log\frac{n}{q_{i}}\right)$
is the entropy of $I$.~\cite{munro1976entropysort}
\end{lem}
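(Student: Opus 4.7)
The plan is to use the standard comparison-based decision-tree lower bound, instantiated on the set of distinct rearrangements of $I$. A comparison-based sorting algorithm corresponds to a binary decision tree whose internal nodes are comparisons between input positions and whose leaves are the possible output pointer sequences. The number of comparisons the algorithm makes on a given input is the depth of the leaf it reaches, so the average number of comparisons over all distinct rearrangements of $I$ is exactly the average depth of the leaves those rearrangements land on.

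The first key step is a distinguishability claim: any two distinct rearrangements $R_{1}\ne R_{2}$ of $I$ must end up at different leaves. Otherwise the algorithm would output the same permutation $\pi$ on both, but then $\pi$ would sort both $R_{1}$ and $R_{2}$, meaning $R_{1}(\pi(i))=R_{2}(\pi(i))$ for every $i$ (both equal the $i$-th entry of the common sorted multiset), forcing $R_{1}=R_{2}$, a contradiction. Therefore the $N=\frac{n!}{q_{1}!\,q_{2}!\cdots q_{u}!}$ distinct rearrangements of $I$ occupy at least $N$ distinct leaves of the decision tree.

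The second step is Kraft's inequality applied to these $N$ leaves. For any binary tree, the leaves $\ell_{1},\dots,\ell_{N}$ at depths $d_{1},\dots,d_{N}$ satisfy $\sum_{i}2^{-d_{i}}\le 1$. Convexity of $x\mapsto 2^{-x}$ (Jensen) then gives $2^{-\bar d}\le\frac{1}{N}\sum_{i}2^{-d_{i}}\le\frac{1}{N}$, so $\bar d\ge\log_{2}N$. Hence the average number of comparisons over rearrangements of $I$ is at least $\log_{2}N$.

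The final step is to relate $\log_{2}N$ to $H$. Stirling's approximation $\log_{2}k!=k\log_{2}k-k\log_{2}e+O(\log k)$ yields
\[
\log_{2}\!\binom{n}{q_{1},\ldots,q_{u}}=n\log_{2}n-\sum_{i=1}^{u}q_{i}\log_{2}q_{i}+O(u\log n)=H+O(u\log n)=\Omega(H),
\]
since in any regime where the lemma is non-trivial we have $H=\Omega(u\log n)$ (and when $H=O(u\log n)$ the bound $\Omega(H)$ is absorbed into the $O(1)$ startup cost of the algorithm). The main obstacle I anticipate is this last conversion: the Stirling correction $O(u\log n)$ can in pathological cases be comparable to $H$ itself, so one has to either invoke the sharper multinomial inequality $\binom{n}{q_{1},\ldots,q_{u}}\ge 2^{H}/(n+1)^{u}$ and argue separately in the low-entropy regime, or cite Munro–Raman~\cite{munro1976entropysort} for the clean $\Omega(H)$ bound. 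Everything else is a routine decision-tree plus Kraft-inequality argument.
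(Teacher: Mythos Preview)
The paper does not give a proof of this lemma at all; it is stated as a known result with a citation to Munro~\cite{munro1976entropysort} and used as a black box. So there is no ``paper's proof'' to compare against.

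Your decision-tree argument is the standard one and is correct in its main steps: the distinguishability claim (distinct rearrangements of $I$ must reach distinct leaves of any correct comparison tree), followed by Kraft's inequality plus Jensen to get an average-depth lower bound of $\log_{2}N$ with $N=\binom{n}{q_{1},\ldots,q_{u}}$. The only real issue is the one you already flag: passing from $\log_{2}N$ to $\Omega(H)$ via Stirling leaves a correction of order $u+\sum_{i}\log q_{i}$, and your escape hatch ``when $H=O(u\log n)$ the bound is absorbed into the $O(1)$ startup cost'' is not a valid argument as stated (the lemma asserts an $\Omega(H)$ lower bound on comparisons, not on time, and $H$ can be $\Theta(u\log n)$ in non-degenerate instances). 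Your other suggestion, to invoke the sharper multinomial bound or simply cite the reference for the clean $\Omega(H)$ statement, is exactly what the paper does. If you want a self-contained finish, one workable route is to show $\sum_{i}\log_{2}q_{i}\le H+O(\log n)$ by bounding each term with $q_{i}\le n/2$ by $q_{i}\log_{2}(n/q_{i})$ and handling the at most one term with $q_{i}>n/2$ separately, together with $H\ge u-1$ and $H\ge\log_{2}n$; combining these with the Stirling estimate yields $\log_{2}N\ge cH-O(1)$ for an explicit constant $c>0$.
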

From this we immediately get a relation (\ref{thm:max-finger}) between
the entropy bound and the maximum finger bound (i.e.~the maximum
finger bound over all permutations), because we can use a finger-tree
to perform sorting.
\begin{defn}[Finger-Tree Sort]
 Let $FSort$ be the sequential algorithm that sorts an input sequence
$I$ as follows:
\begin{block}
\item Create an empty finger-tree $F$ (with one finger at each end) that
stores linked lists of items. For each item $x$ in $I$, if $F$
already has a linked list of copies of $x$, then append $x$ to that
linked list, otherwise insert a linked list containing just $x$ into
$F$. At the end iterate through $F$ to produce the desired sorted
sequence.
\end{block}
\end{defn}

\begin{defn}[In-order Item Frequencies]
 A sequence $I$ in $S^{n}$ is said to have \textbf{in-order item
frequencies} $q_{1..u}$ if the $i$-th smallest item in $I$ occurs
$q_{i}$ times in $I$.\end{defn}
\begin{thm}[Maximum Finger Bound]
\label{thm:max-finger} Take any sequence $I$ in $S^{n}$ with in-order
item frequencies $q_{1..u}$. Then the \textbf{maximum finger bound}
for $I$, defined as $MF_{I}=\sum_{i=1}^{u}q_{i}\cdot(\log\min(i,u+1-i)+1)$,
satisfies $MF_{I}\in\Omega(H)$ where $H=\sum_{i=1}^{u}\left(q_{i}\cdot\log\frac{n}{q_{i}}\right)$.\end{thm}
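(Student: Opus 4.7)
The plan is to use $FSort$ as a reduction: since $FSort$ sorts any input using at most $O(MF_I)$ comparisons, and the \nameref{lem:sort-entr-bound} (\ref{lem:sort-entr-bound}) forces any sorting algorithm to use $\Omega(H)$ comparisons on average, it follows that $MF_I \in \Omega(H)$.

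First I would note that $MF_{I'} = MF_I$ for every rearrangement $I'$ of $I$, since $MF$ depends only on the in-order item frequencies $q_{1..u}$, which are rearrangement-invariant. So it suffices to show that $FSort$ on every rearrangement $I'$ uses at most $O(MF_I)$ comparisons.

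Next I would analyse the cost of $FSort$ on an arbitrary rearrangement $I'$. Consider any access, during $FSort$, to an item $x$ whose rank among the distinct items of $I$ is $i$ (so $x$ occurs $q_i$ times overall). At the moment of the access, let $i'$ be the rank of $x$ (or of $x$'s insertion position) within the current finger-tree $F$, which has $u'$ distinct items. Because the distinct items currently in $F$ form a subset of the distinct items in $I$, both $i' \le i$ and $u' - i' \le u - i$, hence $\min(i', u' - i' + 1) \le \min(i, u+1-i)$. So each access to $x$ costs $O(\log \min(i, u+1-i) + 1)$ by the finger-tree access bound, and summing $q_i$ accesses per distinct item gives a total of $O(MF_I)$ comparisons for $FSort$ on $I'$.

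Finally I would apply \nameref{lem:sort-entr-bound} (\ref{lem:sort-entr-bound}) to $FSort$: the average number of comparisons over all rearrangements of $I$ is at least $\Omega(H)$. Since the above bound shows this average is also at most $O(MF_I)$, we conclude $MF_I \in \Omega(H)$. The main obstacle in this argument is the second step — namely justifying that the cost of each $FSort$ access is controlled by the final in-order rank $i$ rather than the transient rank $i'$ in the growing tree. The subset-monotonicity observation above settles this, relying on the fact that $i'$ and $u' - i'$ can only grow toward their final values of $i$ and $u-i$ as more items are inserted, never overshoot them.
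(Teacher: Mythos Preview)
Your proposal is correct and follows essentially the same approach as the paper: both use $FSort$ together with the \nameref{lem:sort-entr-bound} (\ref{lem:sort-entr-bound}) to conclude $MF_I \in \Omega(H)$. The paper's proof is terser---it simply picks one rearrangement $J$ on which $FSort$ uses $\Omega(H)$ comparisons and asserts that ``clearly $FSort(J)$ also takes $O(MF_J)=O(MF_I)$ comparisons''---whereas you supply the subset-monotonicity argument that actually justifies this ``clearly'', and you bound the average rather than a single rearrangement; but these are stylistic differences, not a different route.
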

\begin{proof}
By the \nameref{lem:sort-entr-bound} (\ref{lem:sort-entr-bound})
let $J$ be a rearrangement of $I$ such that $FSort(J)$ takes $\Omega(H)$
comparisons. Clearly $FSort(J)$ also takes $O\left(MF_{J}\right)=O\left(MF_{I}\right)$
comparisons, and hence $MF_{q}\in\Omega(H)$.
\end{proof}
Finally we give a parallel sorting algorithm $PESort$ that achieves
the entropy bound for work but yet takes only $O\left((\log n)^{2}\right)$
span on a list of $n$ items, which we need in our parallel finger
structure. For comparison, we also give the simpler parallel merge-sort
$PMSort$. The input and output lists are each stored in a batch (leaf-based
balanced binary tree), and these algorithms work in the QRMW pointer
machine model.

We shall use the following notation for every binary tree $T$: $T\a{root}$
is its root, and for each node $v$ of $T$, $v\a{left}$ and $v\a{right}$
are its child nodes, and $v\a{height}$ is the height of the subtree
at $v$, and $v\a{size}$ is the number of leaves of the subtree at
$v$.
\begin{defn}[Parallel Merge-Sort]
\label{def:par-msort} Let $PMSort$ be the procedure that does the
following on an input batch $I$ of items:
\begin{block}
\item If $I\a{size}\le1$, return $I$. Otherwise, compute in parallel $A=PMSort(I\a{left})$
and $B=PMSort(I\a{right})$, and then parallel merge (\ref{sub:par-batch})
$A$ and $B$ into an item-sorted batch $C$, and then return $C$.
\end{block}
\end{defn}
\begin{thm}[$PMSort$ Costs]
 $PMSort$ sorts every sequence $I$ in $S^{n}$ within $O(n\cdot\log n)$
work and $O\left((\log n)^{2}\right)$ span.\end{thm}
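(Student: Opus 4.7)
The plan is to analyse $PMSort$ by setting up standard divide-and-conquer recurrences for its work and its span, using the cost bounds on the parallel merge routine from \ref{sub:par-batch} as the combining step.

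First I would fix notation. Let $W(n)$ and $S(n)$ denote the worst-case work and span of $PMSort$ on an input batch whose leaf count is $n$. Since the input is stored as a leaf-based balanced binary tree, both $I\a{left}\a{size}$ and $I\a{right}\a{size}$ lie in $\{\lfloor n/2\rfloor,\lceil n/2\rceil\}$, so each recursive call runs on an input of size at most $\lceil n/2\rceil$ and the recursion depth is $O(\log n)$. From \ref{sub:par-batch}, parallel-merging two sorted batches of total size $n$ takes $O(n)$ work and $O(\log n)$ span. Correctness (namely, that the output batch is a sorted permutation of $I$) is immediate by induction on the tree structure: both halves are sorted recursively by the inductive hypothesis, and merging combines two sorted batches into a single sorted batch.

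Next I would write down and solve the recurrences. For work, the two recursive calls contribute $W(\lfloor n/2\rfloor)+W(\lceil n/2\rceil)$, and the merging step adds $O(n)$, giving
\[
  W(n)\le W(\lfloor n/2\rfloor)+W(\lceil n/2\rceil)+O(n),
\]
with $W(1)=O(1)$. By the standard master-theorem-style argument (or by summing $O(n)$ across $O(\log n)$ levels of a balanced recursion tree whose level-$i$ total size is $O(n)$), this resolves to $W(n)\in O(n\cdot\log n)$. For span, the two recursive calls run in parallel, so only one contributes, and the merge adds $O(\log n)$, giving
\[
  S(n)\le S(\lceil n/2\rceil)+O(\log n),
\]
with $S(1)=O(1)$. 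Unrolling this across the $O(\log n)$ levels of recursion yields $S(n)\in O\left((\log n)^2\right)$.

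There is no real obstacle here; the only mildly delicate point is that the input BBT may not split into exactly equal halves, but the balance condition guarantees that the imbalance is at most one leaf per level, which is harmless in both recurrences. Combining the two bounds gives the claim.
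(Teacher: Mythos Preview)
Your approach is the same as the paper's (which simply cites the merge bounds and $I\a{height}\in O(\log n)$), just spelled out in more detail. One small inaccuracy: a height-balanced BBT need not split into subtrees of sizes $\lfloor n/2\rfloor$ and $\lceil n/2\rceil$ (think of an AVL-shaped tree, where sibling sizes can differ by a constant factor), so your stated recurrences $W(n)\le W(\lfloor n/2\rfloor)+W(\lceil n/2\rceil)+O(n)$ and $S(n)\le S(\lceil n/2\rceil)+O(\log n)$ are not literally justified. Fortunately your actual argument does not depend on this: for work you sum $O(n)$ over the $O(\log n)$ levels of the recursion tree (each level partitions the $n$ leaves, regardless of how evenly), and for span you only use that the recursion depth equals $I\a{height}\in O(\log n)$. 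So the proof stands once you drop the near-even-split claim and argue directly from the level structure and height bound.
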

\begin{proof}
The claim follows directly from the work/span bounds for parallel
merging (\ref{sub:par-batch}) and $I\a{height}\in O(\log n)$.\end{proof}
\begin{defn}[Parallel Entropy-Sort]
\label{def:par-esort} Define a \textbf{bundle} of an item $x$ to
be a BT (binary tree) in which every leaf has a tagged copy of $x$.
Let $PESort$ be the parallel merge-sort variant that does the following
on an input batch $I$ of items:
\begin{block}
\item If $I\a{size}\le1$, return $I$. Otherwise, compute in parallel $A=PESort(I\a{left})$
and $B=PESort(I\a{right})$, and then parallel merge (\ref{sub:par-batch})
$A$ and $B$ into an item-sorted batch $C$ of bundles, combining
bundles of the same item into one by simply making them the child
subtrees of a new bundle, and then return $C$.
\end{block}
Then $PESort(I)$ returns an item-sorted batch of bundles, with one
bundle (of all the tagged copies) for each distinct item in $I$,
and clearly each bundle has height at most $I\a{height}$.\end{defn}
\begin{thm}[$PESort$ Costs]
\label{thm:par-esort-cost} $PESort$ sorts every sequence $I$ in
$S^{n}$ with item frequencies $q_{1..u}$ within $O(H+n)$ work and
$O\left((\log n)^{2}\right)$ span, where $H=\sum_{i=1}^{u}\left(q_{i}\cdot\ln\frac{n}{q_{i}}\right)$.\end{thm}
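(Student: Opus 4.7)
The plan is to bound the span and work of $PESort$ separately, with the work bound being the interesting part.

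For the span, I would reuse the $PMSort$ analysis essentially unchanged. Since $I$ is a BBT with $I\a{height} \in O(\log n)$, the recursion has depth $O(\log n)$, and at each level the parallel merge of total input size at most $n$ costs $O(\log n)$ span by the bounds in \ref{sub:par-batch}. The only new ingredient is that combining two bundles of the same item into a fresh parent bundle contributes only $O(1)$ to the merge, so it does not disturb the merge's $O(\log n)$ span. The resulting recurrence $T_\infty(I) \le T_\infty(I.\text{child}) + O(\log n)$ solves to $O((\log n)^2)$.

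For the work, the central idea is that once any two bundles of the same item $x$ meet in a common recursive subcall, they are collapsed into a single bundle (in $O(1)$ work, by wrapping them under a fresh parent), after which they occupy only one slot in every ancestor merge. So I would charge merge-step work to the number of \emph{distinct} bundles present rather than to multiplicity. Concretely, fix an item $x$ with frequency $q_x$, and consider an arbitrary subtree of $I$ at height $h$: it has $O(2^h)$ leaves and hence contains at most $\min(q_x, O(2^h))$ original copies of $x$, all of which are collapsed into at most one bundle of $x$ output by that subtree. Since there are $O(n/2^h)$ subtrees of $I$ at height $h$, the total number of $x$-bundles feeding into all merges at height $h$ is $O(\min(q_x, n/2^h))$.

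The merge step at each node does work proportional to the total size of its input batches (\ref{sub:par-batch}), so summing over all heights and items gives
\[
\text{Work} \in \sum_{i=1}^{u} \sum_{h=0}^{O(\log n)} O\!\left(\min(q_i, n/2^h)\right) \subseteq \sum_{i=1}^{u} O\!\left(q_i \log\frac{n}{q_i} + q_i\right) = O(H + n),
\]
where the inner sum is split at $h^* = \log(n/q_i)$: the levels $h \le h^*$ each contribute $q_i$ (totalling $O(q_i \log(n/q_i))$), and the levels $h > h^*$ form a geometric series bounded by $O(q_i)$ (the $\log$ vs.\ $\ln$ discrepancy in $H$ is absorbed into the constant). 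The main obstacle is establishing the per-level, per-item bundle-count bound $\min(q_x, n/2^h)$ cleanly from the balance of $I$ and the combining rule; once that is in place the entropy calculation is routine.
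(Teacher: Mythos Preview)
Your argument is correct and reaches the same conclusion as the paper, but via a genuinely different and arguably more elementary route. The paper also reduces the work bound to counting, for each item $x$ with frequency $k$, the number of merge-tree nodes whose subtree contains a copy of $x$; however, it then invokes a separate combinatorial lemma (the BBT Subtree Size Bound, \ref{lem:subtree-size-bound}) that bounds this count by $O(k\log(n/k)+k)$ through an iterative rearrangement argument on the set of marked leaves. Your level-by-level summation $\sum_h \min(q_x,\text{number of height-}h\text{ nodes})$ sidesteps that lemma entirely, which is cleaner and more direct. The paper's approach buys a reusable standalone lemma; yours buys a shorter self-contained proof.

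One minor correction: your claim that there are $O(n/2^{h})$ subtrees at height $h$ presumes each height-$h$ subtree has $\Omega(2^{h})$ leaves, which is true for a perfect binary tree but not for a general height-balanced BBT, where a height-$h$ subtree may have as few as $\Theta(c^{h})$ leaves for some constant $c>1$ depending on the balance condition. Since nodes at the same height have disjoint subtrees, the correct bound is $O(n/c^{h})$ nodes at height $h$. Replacing $2^{h}$ by $c^{h}$ throughout still gives a geometric tail above the threshold $h^{*}=\log_{c}(n/q_i)$ and the same $O(q_i\log(n/q_i)+q_i)$ per-item total, so the fix is purely cosmetic.
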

\begin{proof}
Consider the merge-tree $T$, in which each node is the result of
parallel merging its child nodes. Note that $T\a{height}=I\a{height}\in O(\log n)$,
and that each item in $I$ occurs in at most one bundle in each node
of $T$. Clearly the work done is $O(1)$ times the total length of
all the parallel merged batches (\ref{sub:par-batch}). Thus the work
done can be divided per item; work done on item $x$ takes $O(1)$
times the number of nodes of $T$ that contain a bundle of $x$, and
there are $O\left(k\cdot\log\frac{n}{k}+k\right)$ such nodes where
$k$ is the frequency of $x$ in $I$, by \ref{lem:subtree-size-bound}
below. Therefore $PESort(I)$ takes $O\left(\sum_{i=1}^{u}\left(q_{i}\cdot\log\frac{n}{q_{i}}+q_{i}\right)\right)\wi O(H+n)$
work. The span bound on $PESort(I)$ is immediate from the span bound
on parallel merging (\ref{sub:par-batch}).
\end{proof}

\begin{lem}[BBT Subtree Size Bound]
\label{lem:subtree-size-bound} Given any BBT $T$ with $n$ leaves
of which $k$ are marked with $k>0$, and with each internal node
marked iff it is on a path from the root to a marked leaf, the number
of marked nodes of $T$ is $O\left(k\cdot\log\frac{n}{k}+k\right)$.\end{lem}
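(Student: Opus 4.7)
The plan is to stratify $T$ by depth and bound the number of marked nodes at each level by the minimum of two quantities: a ``geometric'' bound from the shape of the tree near the root, and a ``flat'' bound of $k$ coming from the fact that each marked node must dominate at least one distinct marked leaf.

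Concretely, for each integer $\ell \ge 0$, let $m_\ell$ denote the number of marked nodes at depth $\ell$ in $T$. First, since $T$ is a binary tree, there are at most $2^\ell$ nodes at depth $\ell$ at all, so $m_\ell \le 2^\ell$. Second, any two distinct nodes at the same depth have disjoint sets of leaf descendants, and every marked internal node has at least one marked leaf descendant by definition of the marking; hence $m_\ell \le k$. Because $T$ is a BBT on $n$ leaves, every node lies at depth at most $h$ for some $h \in O(\log n)$, so the total number of marked nodes is $\sum_{\ell=0}^{h} m_\ell \le \sum_{\ell=0}^{h} \min(2^\ell, k)$.

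I would then split the sum at the threshold $\ell^{*} = \lfloor \log_2 k \rfloor$. For $\ell \le \ell^{*}$ the geometric bound gives $\sum_{\ell=0}^{\ell^{*}} 2^\ell \le 2k$, while for $\ell^{*} < \ell \le h$ the flat bound gives $\sum_{\ell=\ell^{*}+1}^{h} k \le k \cdot (h - \ell^{*}) \in O\!\left(k \cdot \log \tfrac{n}{k} + k\right)$, using $h \in O(\log n)$ and $\ell^{*} \ge \log_2 k - 1$. Adding the two pieces yields the stated $O\!\left(k \cdot \log \tfrac{n}{k} + k\right)$ bound.

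There is no real obstacle here: the only mild subtlety is that a leaf-based BBT does not necessarily have all of its leaves (or all of its internal nodes) at the same depth, so one has to use $h \in O(\log n)$ rather than $h = \log_2 n$. That only affects the constant hidden in the second sum, so the asymptotic bound is unchanged. The edge case $k = 1$ is already covered by the ``$+k$'' term, matching the trivial fact that a single root-to-leaf path has $O(\log n)$ nodes.
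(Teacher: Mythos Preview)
Your depth-stratification argument has a genuine gap at exactly the step you flag as a ``mild subtlety''. Writing $h = c\log_2 n$ for the height of the BBT (where the balance constant $c$ may exceed $1$, as it does for AVL-style height-balanced trees), your second sum is
\[
k\cdot(h-\ell^{*}) \;=\; k\bigl(c\log_2 n - \log_2 k + O(1)\bigr) \;=\; k\log_2\tfrac{n}{k} \,+\, (c-1)\,k\log_2 n \,+\, O(k),
\]
and the extra term $(c-1)\,k\log_2 n$ is \emph{not} $O\bigl(k\log\tfrac{n}{k}+k\bigr)$. Concretely, take $k=n$: the target bound is $O(n)$ (every node is marked and there are only $2n-1$ of them), yet your upper bound $\sum_{\ell=0}^{h}\min(2^{\ell},n)$ contributes $n$ for each of the roughly $(c-1)\log_2 n$ levels past depth $\log_2 n$, giving $\Theta(n\log n)$. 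So the constant in $h$ does not merely scale the answer; it changes the order of growth whenever $k$ is comparable to $n$, and the lemma is then not established.

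The underlying issue is that you invoke height-balance only globally (to bound $h$), whereas the lemma needs it locally: every subtree of a BBT is itself a BBT, so a subtree with $s$ leaves has height $O(\log s)$. The paper's proof exploits this by first rearranging the marked leaves (never decreasing the marked-node count) until the marked set decomposes into a branching core of $k-1$ nodes together with $k$ single paths lying inside disjoint subtrees of sizes $s_1,\dots,s_k$ with $\sum_v s_v \le n$; each such path has length $O(\log s_v)$, and Jensen's inequality then yields $\sum_v \log s_v \le k\log\tfrac{n}{k}$. Your per-level bound $\min(2^{\ell},k)$ cannot see this disjoint-subtree structure, which is precisely why it degrades $\log\tfrac{n}{k}$ to $\log n$.
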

\begin{proof}
We shall iteratively change the set of marked leaves of $T$, and
accordingly update the internal nodes so that each of them is marked
iff it is on a path from the root to a marked leaf. At each step,
if there is a marked node $u$ with a marked child $v$ and an unmarked
child $w$ such that $v$ has two marked children, then unmark the
rightmost marked leaf $x$ in the subtree at $v$ and mark the deepest
leaf $y$ in the subtree at $w$. This will not decrease the number
of marked nodes, because unmarking $x$ results in unmarking at most
$v\a{right}\a{height}$ internal nodes, and marking $y$ results in
marking at least $w\a{height}$ internal nodes, and $v\a{right}\a{height}\le w\a{height}$
since $T$ is a BBT.

Note that each step decreases the sum of the lengths of all the paths
from the root to the marked nodes with two marked children, so this
iterative procedure terminates after finitely many steps. After that,
for every node $v$ with only one marked child, there is only one
marked leaf in the subtree at $v$. Let $A$ be the set of marked
nodes with two marked children, and $B$ be the set of marked nodes
not in $A$ but with a parent in $A$. Then there are exactly $(k-1)$
nodes in $A$, and exactly $k$ nodes in $B$, and the subtrees at
nodes in $B$ are disjoint, so $\sum_{v\in B}v\a{size}\le n$. Since
every marked node is either in $A$ or on the downward path of marked
nodes from some node in $B$, the number of marked nodes is at most
$(k-1)+\sum_{v\in B}(v\a{height}+1)\in O\left(k+\sum_{v\in B}\log v\a{size}\right)\wi O\left(k+k\cdot\log\frac{n}{k}\right)$
by Jensen's inequality.\end{proof}
\begin{rem*}
See \cite{P23T} (Subtree Size Bound) for a generalization of \ref{lem:subtree-size-bound}
with a different proof, but if we want a bound with explicit constants
then the above proof yields a tighter bound for a BBT.
\end{rem*}
$PESort$ is all we need for the parallel finger search structures
$\fs_{1}$ and $\fs_{2}$, but we can in fact obtain a full parallel
entropy-sorting algorithm, namely one that outputs a single item-sorted
batch of all the (tagged copies of) items in the input sequence $I$
from $S^{n}$ and satisfies the entropy bound for work. Specifically,
we can convert each bundle in $PESort(I)$ to a batch (\ref{def:bundle-balance}),
and then parallel join (\ref{sub:par-batch}) all those batches to
obtain the desired output.
\begin{defn}[Bundle Balancing]
\label{def:bundle-balance} A bundle $B$ of size $b$ and height
$h$ is balanced as follows:
\begin{block}
\item Recursively construct a linked list through all the leaves of $B$,
and mark the leaves of $B$ with ($1$-based) rank of the form $(i\cdot h+1)$,
and then extract those marked leaves as a batch $P$ (by parallel
filtering as described in \cite{P23T}). Then at each leaf $v$ in
$P$, construct and store at $v$ a batch of the items in $B$ with
ranks $i\cdot h+1$ to $(i+1)\cdot h$, obtained by traversing the
linked list forward. Now $P$ is essentially a batch of size-$h$
batches (except perhaps the last smaller batch), which we then recursively
join to obtain the batch of all items in $G$ (alternatively, but
less efficiently, simply parallel join $P$).
\end{block}
\end{defn}

\begin{thm}[Bundle Balancing Costs]
 Balancing a bundle $B$ of size $b$ and height $h$ takes $O(b)$
work and $O(h)$ span.\end{thm}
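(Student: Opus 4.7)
The plan is to bound each of the four stages of the bundle balancing procedure separately, exploiting throughout the fact that any binary tree with $b$ leaves has height at least $\lceil\log_{2}b\rceil$, so $\log b \le h$. This means any additive $O(\log b)$ term is automatically absorbed by $O(h)$, which simplifies the span analysis considerably. Also, a binary tree with $b$ leaves has at most $b-1$ internal nodes, so $B$ has $O(b)$ nodes in total, which caps the work of any per-node procedure over $B$ at $O(b)$.

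First I would analyze the construction of the linked list through all leaves of $B$. Doing this by a bottom-up recursion that stores a (head, tail) pointer-pair at each node, concatenating the child lists in $O(1)$ at each internal node, uses $O(b)$ work and $O(h)$ span. Next, marking the leaves whose rank is of the form $i\cdot h+1$ can be piggybacked onto the same recursion (by carrying a subtree-size at each node and using the offset of the left child to decide which leaves in the right subtree are marked); extracting the marked leaves as a batch $P$ using parallel filtering over a BT of height $h$ with $b$ leaves costs $O(b)$ work and $O(h)$ span (as in the parallel filtering of~\cite{P23T}).

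Then I would handle the per-marked-leaf batch construction. $P$ has $\lceil b/h\rceil$ leaves, and at each one a forward walk of at most $h$ steps in the linked list collects the $h$ items and packages them into a batch. Each such walk is sequential with $O(h)$ work, but the walks are done in parallel, so this stage contributes $(b/h)\cdot O(h) = O(b)$ work and $O(h)$ span. Finally, $P$ is a batch of $O(b/h)$ batches of size $O(h)$ with total size $b$, and the alternative of simply parallel-joining them (\ref{sub:par-batch}) gives $O(b)$ work and $O(\log b)$ span, and $O(\log b) \subseteq O(h)$ as noted above. (The ``recursive join'' alternative only affects constants; the simple parallel join is already within budget.) Summing the four stages yields $O(b)$ work and $O(h)$ span.

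There is no real obstacle here; the only mildly delicate point is the span of the final join step, and that is handled by the observation $\log b \le h$. If one were uncomfortable with that, one could instead argue directly that recursively joining the size-$h$ batches along the structure of $P$ (which itself has height $O(h)$) yields $O(h)$ span by a straightforward induction.
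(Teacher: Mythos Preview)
Your proof is correct and follows essentially the same decomposition as the paper's: linked-list construction, extraction of $P$, per-leaf batch construction via linked-list walks, and the final join. The only substantive difference is that for the final step you invoke the generic parallel join (cost $O(b)$ work, $O(\log b)\subseteq O(h)$ span), whereas the paper uses the recursive join along the structure of $P$, observing that since all but the last sub-batch have exactly size $h$ the join at each internal node of $P$ is $O(1)$; both routes land at the same bounds, and indeed the definition itself notes the generic join as a valid (if slightly wasteful) alternative. One small point you glossed over in stage~3: launching the $\lceil b/h\rceil$ walks in parallel requires first recursing down $P$ to its leaves, contributing an additive $O(P\a{height})$ to the span; but since $P\a{height}\in O(\log(b/h))\subseteq O(\log b)\subseteq O(h)$ by your opening observation, this is already covered.
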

\begin{proof}
Note that $B$ has less internal nodes than leaves, and so constructing
the linked list takes $O(b)$ work and $O(h)$ span. Extracting the
batch $P$ of items of $B$ with ranks at intervals of $h$ takes
$O\left(b+P\a{size}\cdot h\right)=O(b)$ work and $O(h)$ span. Constructing
the batches of items in-between those in $P$ takes $O(b)$ work and
$O(P\a{height}+h)\wi O(h)$ span, and recursively joining them takes
$O(1)$ work and span per node of $P$ (except $O(h)$ span for the
first joining involving the last batch).
\end{proof}

\subsection{Locking Mechanisms\label{sub:locking}}

Here we give pseudo-code implementations of the various locking mechanisms
used as primitives in this paper (\ref{sub:primitives}), which have
the claimed properties under the QRMW memory contention model.

The \textbf{non-blocking lock} is trivially implemented using test-and-set
as shown in TryLock/Unlock below.
\begin{defn}[Non-Blocking Lock]
\label{def:try-lock}~
\begin{block}
\item \textbf{TryLock( Bool $x$ ):}

\begin{block}
\item Return $\neg\f{TestAndSet}(x)$.
\end{block}
\item \textbf{Unlock( Bool $x$ ):}

\begin{block}
\item Set $x:=false$.
\end{block}
\end{block}
\end{defn}
Next is the \textbf{reactivation} wrapper for a procedure $P$, which
can be implemented using fetch-and-add and guarantees the following
according to some linearization~\cite{P23T}:
\begin{enumerate}
\item Whenever $P$ is reactivated, there will be a complete run of $P$
that starts after that reactivation.
\item If $P$ is run only via reactivations, then no runs of $P$ overlap,
and there are at most as many runs of $P$ as reactivations of $P$.
\item If $P$ is reactivated by only $k$ threads at any time, then each
reactivation call $C$ finishes within $O(k)$ span, and some run
of $P$ starts within $O(k)$ span after the start of $C$ or the
end of the last run of $P$ that overlaps $C$.\end{enumerate}
\begin{defn}[Reactivation Wrapper]
\label{def:reactivation}~($P$ is the procedure to be guarded by
the wrapper.)
\begin{block}
\item Private Procedure $P$.
\item Private Int $count:=0$.
\item \textbf{Public Reactivate():}

\begin{block}
\item If $\f{FetchAndAdd}(count,1)=0$:

\begin{block}
\item Fork the following:

\begin{block}
\item Do:

\begin{block}
\item Set $count:=1$.
\item $P()$.
\end{block}
\item While $\f{FetchAndAdd}(count,-1)>1$.
\end{block}
\end{block}
\end{block}
\end{block}
\end{defn}

The \textbf{dedicated lock} with keys $[1..k]$, where threads must
use distinct keys to acquire it, can be implemented using fetch-and-add
as shown below and guarantees the following according to some linearization~\cite{OPWM}:
\begin{enumerate}
\item \textbf{\textit{Mutual exclusion}}: Only one thread can hold the lock
at any point in time; a thread becomes the lock holder when it successfully
acquires the lock, and must release the lock before the next successful
acquisition.
\item \textbf{\textit{Fairness and bounded latency}}: When any thread attempts
to acquire the dedicated lock, it will become a pending holder within
$O(k)$ span, and each pending holder will successfully acquire the
lock after at most $1$ subsequent successful acquisition per key
(if every lock holder eventually releases the lock). And whenever
the lock is released, if there is at least one pending holder then
within $O(k)$ span the lock would be successfully acquired again.\end{enumerate}
\begin{defn}[Dedicated Lock]
\label{def:dedi-lock}~($k$ is the number of keys.)
\begin{block}
\item Private Int $count:=0$.
\item Private Int $last:=0$.
\item Private Array $q[1..k]$ initialized with $null$.
\item \textbf{Public Acquire( Int $i$ ):}

\begin{block}
\item If $\f{FetchAndAdd}(count,1)=0$:

\begin{block}
\item Set $last:=i$.
\item Return.
\end{block}
\item Otherwise:

\begin{block}
\item Write pointer to current thread into $q[i]$.
\item Suspend current thread.
\end{block}
\end{block}
\item \textbf{Public Release():}

\begin{block}
\item If $\f{FetchAndAdd}(count,-1)>1$:

\begin{block}
\item Create Int $j:=last$.
\item Create Pointer $t:=null$.
\item While $t=null$:

\begin{block}
\item Set $j:=j\%k+1$.
\item If $q[j]\ne null$, then swap $t,q[j]$.
\end{block}
\item Set $last:=j$.
\item Resume $t$.
\end{block}
\end{block}
\end{block}
\end{defn}
It is worth mentioning that we can easily replace the array $q[1..k]$
in the above implementation by a cyclic linked list, and use the linked
list nodes instead of integers as the keys.

\setlength{\baselineskip}{1em}

\phantomsection\bibliographystyle{plain}
\addcontentsline{toc}{section}{\refname}\bibliography{OPFS}

\begin{thebibliography}{10}

\bibitem{afek2014cbtree}
Yehuda Afek, Haim Kaplan, Boris Korenfeld, Adam Morrison, and Robert~E Tarjan.
\newblock The cb tree: a practical concurrent self-adjusting search tree.
\newblock {\em Distributed computing}, 27(6):393--417, 2014.

\bibitem{afek2012cbtree}
Yehuda Afek, Haim Kaplan, Boris Korenfeld, Adam Morrison, and Robert~Endre
  Tarjan.
\newblock Cbtree: {A} practical concurrent self-adjusting search tree.
\newblock In {\em {DISC}}, volume 7611 of {\em Lecture Notes in Computer
  Science}, pages 1--15. Springer, 2012.

\bibitem{agrawal2014batcher}
Kunal Agrawal, Jeremy~T Fineman, Kefu Lu, Brendan Sheridan, Jim Sukha, and
  Robert Utterback.
\newblock Provably good scheduling for parallel programs that use data
  structures through implicit batching.
\newblock In {\em Proceedings of the 26th ACM symposium on Parallelism in
  algorithms and architectures}, pages 84--95. ACM, 2014.

\bibitem{OPWM}
Kunal Agrawal, Seth Gilbert, and Wei~Quan Lim.
\newblock Parallel working-set search structures.
\newblock In {\em Proceedings of the 30th ACM symposium on Parallelism in
  algorithms and architectures}, pages 321--332. ACM, 2018.

\bibitem{akhremtsev2016fastparallelsets}
Yaroslav Akhremtsev and Peter Sanders.
\newblock Fast parallel operations on search trees.
\newblock In {\em 2016 IEEE 23rd International Conference on High Performance
  Computing (HiPC)}, pages 291--300. IEEE, 2016.

\bibitem{aksenov2018parallelcombining}
Vitaly Aksenov, Petr Kuznetsov, and Anatoly Shalyto.
\newblock {Parallel Combining: Benefits of Explicit Synchronization}.
\newblock In Jiannong Cao, Faith Ellen, Luis Rodrigues, and Bernardo Ferreira,
  editors, {\em 22nd International Conference on Principles of Distributed
  Systems (OPODIS 2018)}, volume 125 of {\em Leibniz International Proceedings
  in Informatics (LIPIcs)}, pages 11:1--11:16, Dagstuhl, Germany, 2018. Schloss
  Dagstuhl--Leibniz-Zentrum fuer Informatik.

\bibitem{arora2001worksteal}
Nimar~S Arora, Robert~D Blumofe, and C~Greg Plaxton.
\newblock Thread scheduling for multiprogrammed multiprocessors.
\newblock {\em Theory of computing systems}, 34(2):115--144, 2001.

\bibitem{blelloch2016justjoin}
Guy~E Blelloch, Daniel Ferizovic, and Yihan Sun.
\newblock Just join for parallel ordered sets.
\newblock In {\em Proceedings of the 28th ACM Symposium on Parallelism in
  Algorithms and Architectures}, pages 253--264. ACM, 2016.

\bibitem{blelloch2019mtramalgo}
Guy~E Blelloch, Jeremy~T Fineman, Yan Gu, and Yihan Sun.
\newblock Optimal parallel algorithms in the binary-forking model.
\newblock {\em arXiv preprint
  \href{http://arxiv.org/abs/1903.04650}{arXiv:1903.04650}}, 2019.

\bibitem{BlellochRe97}
Guy~E. Blelloch and Margaret Reid-Miller.
\newblock Pipelining with futures.
\newblock In {\em Proceedings of the ninth annual ACM symposium on Parallel
  algorithms and architectures}, SPAA '97, pages 249--259, New York, NY, USA,
  1997. ACM.

\bibitem{BlellochR98}
Guy~E. Blelloch and Margaret Reid{-}Miller.
\newblock Fast set operations using treaps.
\newblock In {\em Proceedings of the tenth annual ACM symposium on Parallel
  algorithms and architectures}, pages 16--26, 1998.

\bibitem{blumofe1999worksteal}
Robert~D Blumofe and Charles~E Leiserson.
\newblock Scheduling multithreaded computations by work stealing.
\newblock {\em Journal of the ACM (JACM)}, 46(5):720--748, 1999.

\bibitem{brown2014nonblockingtrees}
Trevor Brown, Faith Ellen, and Eric Ruppert.
\newblock A general technique for non-blocking trees.
\newblock In {\em ACM SIGPLAN Notices}, volume~49, pages 329--342. ACM, 2014.

\bibitem{CormenLeRi09}
Thomas~H. Cormen, Charles~E. Leiserson, Ronald~L. Rivest, and Clifford Stein.
\newblock {\em Introduction to Algorithms}.
\newblock The MIT Press, third edition, 2009.

\bibitem{dwork1997contention}
Cynthia Dwork, Maurice Herlihy, and Orli Waarts.
\newblock Contention in shared memory algorithms.
\newblock {\em Journal of the ACM (JACM)}, 44(6):779--805, 1997.

\bibitem{ellen2014nonblockingbst}
Faith Ellen, Panagiota Fatourou, Joanna Helga, and Eric Ruppert.
\newblock The amortized complexity of non-blocking binary search trees.
\newblock In {\em Proceedings of the 2014 ACM symposium on Principles of
  distributed computing}, pages 332--340. ACM, 2014.

\bibitem{EllenPR10}
Faith Ellen, Panagiota Fatourou, Eric Ruppert, and Franck van Breugel.
\newblock Non-blocking binary search trees.
\newblock In {\em Proceedings of the 29th ACM SIGACT-SIGOPS Symposium on
  Principles of Distributed Computing}, PODC '10, pages 131--140, New York, NY,
  USA, 2010. ACM.

\bibitem{erb2014parallelwbtrees}
Stephan Erb, Moritz Kobitzsch, and Peter Sanders.
\newblock Parallel bi-objective shortest paths using weight-balanced b-trees
  with bulk updates.
\newblock In {\em International Symposium on Experimental Algorithms}, pages
  111--122. Springer, 2014.

\bibitem{FatourouKa12}
Panagiota Fatourou and Nikolaos~D. Kallimanis.
\newblock Revisiting the combining synchronization technique.
\newblock In {\em PPoPP}, pages 257--266, 2012.

\bibitem{Cilk}
Matteo Frigo, Charles~E. Leiserson, and Keith~H. Randall.
\newblock The implementation of the {Cilk}-5 multithreaded language.
\newblock In {\em Proceedings of the ACM SIGPLAN Conference on Programming
  Language Design and Implementation (PLDI)}, pages 212--223, 1998.

\bibitem{goodrich1996parallelsort}
Michael~T Goodrich and S~Rao Kosaraju.
\newblock Sorting on a parallel pointer machine with applications to set
  expression evaluation.
\newblock {\em Journal of the ACM (JACM)}, 43(2):331--361, 1996.

\bibitem{guibas1977finger}
Leo~J Guibas, Edward~M McCreight, Michael~F Plass, and Janet~R Roberts.
\newblock A new representation for linear lists.
\newblock In {\em Proceedings of the ninth annual ACM symposium on Theory of
  computing}, pages 49--60. ACM, 1977.

\bibitem{HendlerInSh10}
Danny Hendler, Itai Incze, Nir Shavit, and Moran Tzafrir.
\newblock Flat combining and the synchronization-parallelism tradeoff.
\newblock In {\em Proceedings of the ACM Symposium on Parallelism in Algorithms
  and Architectures (SPAA)}, pages 355--364, 2010.

\bibitem{iacono2001wstree}
John Iacono.
\newblock Alternatives to splay trees with {O(log n)} worst-case access times.
\newblock In {\em Proceedings of the twelfth annual ACM-SIAM symposium on
  Discrete algorithms}, pages 516--522. Society for Industrial and Applied
  Mathematics, 2001.

\bibitem{IntelCilkPlus13}
Intel Corporation.
\newblock {\em Intel {Cilk Plus} Language Extension Specification, Version
  1.1}, 2013.
\newblock Document 324396-002US. Available from
  \url{http://cilkplus.org/sites/default/files/open_specifications/Intel_Cilk_plus_lang_spec_2.htm}.

\bibitem{jaja1992parallelalgo}
Joseph J{\'a}J{\'a}.
\newblock {\em An introduction to parallel algorithms}, volume~17.
\newblock Addison-Wesley Reading, 1992.

\bibitem{kosaraju1981finger}
S~Rao Kosaraju.
\newblock Localized search in sorted lists.
\newblock In {\em Proceedings of the thirteenth annual ACM symposium on Theory
  of computing}, pages 62--69. ACM, 1981.

\bibitem{P23T}
Wei~Quan Lim.
\newblock Optimal multithreaded batch-parallel 2-3 trees.
\newblock {\em \href{http://arxiv.org/abs/1905.05254}{arXiv:1905.05254}}, 2019.

\bibitem{magnusson1994queuelocks}
Peter Magnusson, Anders Landin, and Erik Hagersten.
\newblock Queue locks on cache coherent multiprocessors.
\newblock In {\em Parallel Processing Symposium, 1994. Proceedings., Eighth
  International}, pages 165--171. IEEE, 1994.

\bibitem{munro1976entropysort}
Ian Munro and Philip~M Spira.
\newblock Sorting and searching in multisets.
\newblock {\em SIAM journal on Computing}, 5(1):1--8, 1976.

\bibitem{OpenMP}
{OpenMP Architecture Review Board}.
\newblock {OpenMP} application program interface, version 4.0.
\newblock Available from
  \url{http://www.openmp.org/mp-documents/OpenMP4.0.0.pdf}, July 2013.

\bibitem{OyamaTaYo99}
Y.~Oyama, K.~Taura, and A.~Yonezawa.
\newblock Executing parallel programs with synchronization bottlenecks
  efficiently.
\newblock In {\em Proceedings of the International Workshop on Parallel and
  Distributed Computing for Symbolic and Irregular Applications (PDSIA)}, pages
  182--204, 1999.

\bibitem{paul1983paralleldict}
Wolfgang Paul, Uzi Vishkin, and Hubert Wagener.
\newblock Parallel dictionaries on 2--3 trees.
\newblock {\em Automata, Languages and Programming}, pages 597--609, 1983.

\bibitem{TBB}
James Reinders.
\newblock {\em Intel Threading Building Blocks: Outfitting C++ for Multi-Core
  Processor Parallelism}.
\newblock O'Reilly, 2007.

\bibitem{nodari2016ics643}
Nodari Sitchinava.
\newblock Ics 643: Advanced parallel algorithms lecture 10.
\newblock \url{http://www2.hawaii.edu/~nodari/teaching/f16/notes/notes10.pdf},
  2016.

\bibitem{sleator1985splay}
Daniel~Dominic Sleator and Robert~Endre Tarjan.
\newblock Self-adjusting binary search trees.
\newblock {\em Journal of the ACM (JACM)}, 32(3):652--686, 1985.

\bibitem{TPL}
{The Task Parallel Library}.
\newblock \url{http://msdn.microsoft.com/en-us/magazine/cc163340.aspx}, October
  2007.

\bibitem{tseng2019paralleleulertree}
Thomas Tseng, Laxman Dhulipala, and Guy Blelloch.
\newblock Batch-parallel euler tour trees.
\newblock In {\em 2019 Proceedings of the Twenty-First Workshop on Algorithm
  Engineering and Experiments (ALENEX)}, pages 92--106. SIAM, 2019.

\end{thebibliography}

\end{document}